\xpatchcmd{\@thm}{\thm@headpunct{.}}{\thm@headpunct{}}{}{}
\numberwithin{equation}{section}
\newtheorem{theorem}{Theorem}
\newtheorem{assumption}{Assumption}
\newtheorem*{assumption_nonumber}{Assumption}
\newtheorem{definition}{Definition}
\newtheorem{lemma}{Lemma}
\theoremstyle{definition}
\newcommand{\convind}{\overset{d}{\longrightarrow}}
\newcommand{\norm}[1]{\left \|#1\right \|}
\newcommand{\eps}{\varepsilon}
\begin{document}

\title{Nonparametric Difference-in-Differences in Repeated Cross-Sections
with Continuous Treatments\thanks{%
First draft: July 8, 2013. We have benefited from helpful comments from Han
Hong (co-editor), an anonymous associate editor, two anonymous referees,
Arne Uhlendorf and seminar participants at Boston College, Bristol
University, Chicago, Crest, GRIPS, TSE and UCL. The paper was partly written
when the first author was visiting Boston College and PSE, which the author
thanks for their hospitality. The usual disclaimer applies. While not at the
core of this paper, some elements are taken from the earlier draft
\textquotedblleft On the role of time in nonseparable panel data
models\textquotedblright\ by Hoderlein and Sasaki, which is now retired.}}
\author{
\begin{tabular}{ccc}
Xavier D'Haultf\oe uille\thanks{%
Xavier D'Haultf\oe uille: CREST, 5 avenue Henry le Chatelier 91120 Palaiseau
France, email: \texttt{xavier.dhaultfoeuille@ensae.fr}.} & Stefan Hoderlein%
\thanks{%
Stefan Hoderlein: Emory University, Department of Economics, 1602 Fishburne
Dr., Atlanta, GA 30322, USA, email: \texttt{stefan.hoderlein@emory.edu}.} &
Yuya Sasaki\thanks{%
Yuya Sasaki: Vanderbilt University, Department of Economics, 2301 Vanderbilt
Place Nashville, TN 37235-1819, USA. email: \texttt{%
yuya.sasaki@vanderbilt.edu.}} \\
CREST & Emory University & Vanderbilt University%
\end{tabular}
}
\date{First version: July 8, 2013\\
This version: February 4, 2022}
\maketitle

\begin{abstract}
\setlength{\baselineskip}{0.45cm}

This paper studies the identification of causal effects of a continuous
treatment using a new difference-in-difference strategy. Our approach allows
for endogeneity of the treatment, and employs repeated cross-sections. It
requires an exogenous change over time which affects the treatment in a
heterogeneous way, stationarity of the distribution of unobservables and a
rank invariance condition on the time trend. On the other hand, we do not
impose any functional form restrictions or an additive time trend, and we
are invariant to the scaling of the dependent variable. Under our
conditions, the time trend can be identified using a control group, as in
the binary difference-in-differences literature. In our scenario, however,
this control group is defined by the data. We then identify average and
quantile treatment effect parameters. We develop corresponding nonparametric
estimators and study their asymptotic properties. Finally, we apply our
results to the effect of disposable income on consumption. \bigskip \newline
\textbf{Keywords:} identification, repeated cross-sections, nonlinear
models, continuous treatment, random coefficients, endogeneity,
difference-in-differences. \bigskip \newline
${}$
\end{abstract}

\section{Introduction}

Differences-in-Differences (DID) is arguably one of the most popular methods
for policy evaluation. In its standard version, it allows to identify the
causal effect of a binary treatment on a given outcome, even when units are
not allocated randomly to the treatment. The idea is to compare the
evolution of the average outcome of the treatment group, which receives a
treatment after a certain date, with that of the control group, which
remains untreated. The DID strategy builds on the so-called common trend
assumption, viz. the assumption that the changes in $Y(0)$, the potential
outcome absent the treatment, are identical between the treatment and the
control group. One way to see this condition is that treatment changes
should be exogenous in that they are not related to changes in $Y(0)$. Under
common trends, the average time trend on $Y(0)$ can be identified using the
control group, as this group does not experience the effect of the
treatment. Once this time trend is accounted for, we can identify the
average treatment effect by a simple before-after comparison on the
treatment group.

\medskip A crucial limitation of the standard DID framework is that the
treatment is required to be binary. Yet, in many cases, units experience
various treatment intensities, and not just a zero-one treatment. Examples
include, among many others, unemployment benefits, specific public
expenditures (e.g., hospitals expenditure per capita, teacher's wages),
changes in prices, in income etc. Usual solutions in such cases are to
either consider a linear model or to discretize the treatment. Both
solutions are problematic. In the first case, the model cannot account for,
e.g., unobservable terms affecting both treatment intensity and treatment
effects, effectively assuming that the treatment has the same effect for
every level of treatment intensity. In the second case, discretization
introduces arbitrariness and leads to an information loss. E.g., after
discretization, even vastly different income changes would be assumed to
have the same effect, because all that matters is the fact that they change.
This makes the DID framework not useful to study, e.g., causal marginal
propensities to consume out of income, because individuals at low levels of
income are more likely to face liquidity constraints than individuals at
high levels, and the size of the income change arguably matters
\citep[see,
e.g.,][]{hsieh2003consumers}.

\medskip In this paper, we propose a solution that circumvents these issues.
Specifically, we show identification of several treatment effect parameters,
allowing for nonlinear and heterogeneous effects of the treatment without
imposing functional form restrictions (e.g., linearity), nor any
discretization. We also allow for heterogeneous time trends on potential
outcomes. This is important, since assuming the same time trends for all
units may be overly restrictive. Firms or individuals with different
productivities may be affected differently by macroeconomic shocks, for
instance.

\medskip The idea behind our identification strategy retains the spirit of
the DID approach. We use the fact that the distribution of the treatment, in
our case a continuous random variable, changes over time for some exogenous
reason, yet some units remain at the same level of treatment. In our
approach, the latter units form the control group, which allows us to
identify the (heterogeneous) time trend on potential outcomes. Once this
time trend has been removed, the distribution of the appropriately modified
potential outcome does not vary over time any longer. Then, any difference
over time in the distribution of the modified, observed outcome should be
solely due to the treatment. With this insight, we can identify causal
effects of the treatment.

\medskip To make this strategy operational, we rely on three main
assumptions. First, we assume that units sharing the same rank in the
distribution of the treatment at two different periods have the same
distribution of the unobservables governing potential outcomes. This
assumption is related to the aforementioned exogeneity of the change in the
distribution of the treatment over time. It first ensures that groups,
similar to the control and treatment group with binary treatments, may be
defined through the ranking of the treatment. With this construction of
groups, the condition becomes almost the same as Assumptions 1 and 3 in \cite%
{Athey06}, on which our paper builds to identify (heterogeneous) time
trends. Second, we assume that a unit with stable unobservables in two
different periods will also have the same ranking in the distributions of
potential outcome in these two periods. Again, this assumption is identical
to Assumption 2 in \cite{Athey06}. Third, we suppose that the change in the
distribution of the treatment is heterogeneous, in the sense that the
cumulative distribution functions (cdfs) of the treatment variable between
the two periods cross. This crossing point defines the control group and
allows us to identify the heterogeneous time trends.

\medskip Despite some similarities with the nonlinear
difference-in-differences setting of \cite{Athey06}, our continuous
treatment set-up exhibits several important distinct features. First, \cite%
{Athey06} focus on the binary treatment case, while we consider a continuous
treatment. Second, our control group is determined by the data rather than
fixed ex ante. While our paper also shares some similarities with the paper
by \cite{Chaisemartin14}, the framework and identification strategy is
nonetheless very different: In particular, \cite{Chaisemartin14} focus on
binary (or ordered) treatments. With a continuous treatment, their strategy
would require to have a control group for which the whole distribution of
the treatment variable would remain unchanged over time, an assumption
unlikely to be satisfied in practice. In contrast, we only require the
distribution of the treatment to change in such a way that there exists a
crossing point. A change in the mean and the variance of a normal
distribution, for instance, satisfies this requirement.

\medskip We consider several extensions to our main setup. First, we show
how covariates can be included into our analysis. Second, we establish that
our model extends in a straightforward way to multidimensional continuous
treatments. Third, we show that while a number of parameters cannot be point
identified, basically because time only provides us with limited exogenous
variations, they can be partially identified using weak local curvature
conditions. Finally, we prove that under functional form restrictions that
still allow for ample heterogeneity, we can point identify all marginal
effects.

\medskip Based on this extensive identification analysis, we also develop
nonparametric sample counterparts estimators. While our estimators of the
average and quantile effects involve several nonparametric steps, each one
of these steps is straightforward to perform. We show the asymptotic
normality of the estimators. When the \textquotedblleft control
group\textquotedblright\ corresponds to a single point of support of the
continuous treatment, the estimators are not root-n consistent, but converge
at standard univariate nonparametric rates.

\medskip Finally, we apply our methodology to analyze the marginal
propensity to consume in the US. We exploit for that purpose a change in the
schedule of the Earned Income Tax Credit (EITC) between 1987 and 1989. We
argue that this change generates exactly the crossing condition we require.
Applying our method, we obtain an estimated time trend that displays
heterogeneity, underlying the need to go beyond mere additive time trends.
Moreover, our estimates of the marginal effects suggest that low-income
individuals increase substantially their consumption (by around 50\%), while
medium-income individuals would not significantly adjust their consumption.
This is in line with many findings in the literature, see e.g., \cite%
{johnson2006household} and \cite{kaplan2014model}.

\medskip The paper is organized as follows. In Section 2, we introduce the
model formally, discuss the parameters of interest and provide our main
identification results. The extensions considered above are discussed in
Section 3. Section 4 is devoted to estimation. Section 5 presents the
application, and Section 6 concludes. All proofs are gathered in the
appendix.

\section{Model and Main Identification Results}

\subsection{Assumptions}

We consider a potential outcome framework with a continuous treatment. The
potential outcome at period $t$, corresponding to a treatment $x\in \mathcal{%
X}\subset \mathbb{R}$, is denoted by $Y_{t}(x)$, with $Y_t(x)\in\mathbb{R}$.
We observe, at each period $t\in \{1,...,T\}$, the actual treatment $X_{t}$
and the corresponding outcome, $Y_{t}\equiv Y_{t}(X_{t})$. We are
particularly interested in the average and quantile treatment on the treated
effects:
\begin{eqnarray*}
\Delta ^{ATT}(x,x^{\prime }) &\equiv &E\left[ Y_{T}(x^{\prime
})-Y_{T}(x)|X_{T}=x\right] , \\
\Delta ^{QTT}(p,x,x^{\prime }) &\equiv &F_{Y_{T}(x^{\prime
})|X_{T}}^{-1}(p|x)-F_{Y_{T}(x)|X_{T}}^{-1}(p|x),
\end{eqnarray*}
for any $x$ and $x^{\prime }$ in the support $\text{Supp}(X_T)$ of $X_{T}$.
Here, $F_{A|B}(a|b)$ denotes the conditional cdf of a random variable $A$ at
$a$, given that a random vector $B$ takes the value $b$, and $%
F_{A|B}^{-1}(\tau |b)$ denotes its inverse, the $\tau $-conditional quantile
function. We henceforth focus on the effects at period $T,$ because they are
the most natural to compute in general, but we can identify similar effects
at any date.

\medskip The main issue in identifying the parameters above is endogeneity
of the actual treatment, i.e., $X_{t}$ may depend on $(Y_{t}(x))_{x\in
\mathcal{X}}$. In such a case, naive estimators do not coincide with the
average and quantile treatment effects defined above. For instance, $%
E(Y_{T}|X_{T}=x^{\prime })-E(Y_{T}|X_{T}=x)\neq \Delta ^{ATT}(x,x^{\prime })$%
. Our idea for identifying these causal parameters, then, is to use
exogenous changes in $X_{t}$ (due to, e.g., a policy change), and apply a
difference-in-difference type strategy. To make this idea operational, we
restrict the way time affects both observed and unobserved variables by
imposing three main restrictions. The first restriction is a stationarity
condition on the observed and unobserved determinants of the outcome. The
second restriction limits the way time affects the outcome itself. The third
restriction affects the way the distribution of $X_{t}$ changes over time.
We discuss them in turn using the notation $V_{t}=F_{X_{t}}(X_{t})$ to
denote the rank of an individual in the distribution of the treatment.

\begin{assumption}
\label{hyp:stationarity} (stationarity of unobservables) For all $t \in
\{1,...,T\}$, $Y_t(x)=g_t(U_t(x))$ where for all $(x,v)\in \mathcal{X}\times
[0,1]$ the distribution of the unobserved variable $U_t(x)|V_t=v$ does not
depend on $t$.
\end{assumption}

We can interpret this assumption as follows. First, it defines implicitly
groups, similar to control and treatment groups with binary treatments,
through the rank variable $V_t$. Then, we assume that within each group,
unobserved terms related to potential outcomes have a time-invariant
distribution. This latter condition is similar to Assumptions 3.1 and 3.3 in
\cite{Athey06}, where the authors also assume that within both the control
and the treatment group the distribution of the unobserved term related to $%
Y_{t}(0)$ is constant over time.

\medskip Importantly, Assumption \ref{hyp:stationarity} does not restrict
the cross-sectional dependence between $U_{t}(x)$ and $V_{t}$, which is at
the core of the endogeneity problem we face in this scenario. On the other
hand, it rules out changes in the type of endogeneity, as the distribution
of $(U_{t}(x),V_{t})$ is supposed to be time invariant. In our application
below, $X_{t}$ corresponds to disposable income. The tax rate affects
disposable income, but a change in the tax rate is unlikely to change the
ranking of individuals in the income distribution, holding other
characteristics constant (e.g., the number of household members). In other
applications, this condition may be more restrictive. We discuss this point
further when we draw a parallel with instrumental variable models in Section %
\ref{subs:compar_IV} below.

\medskip The following assumption specifies the second requirement mentioned
above:

\begin{assumption}
(rank invariance on the time trend) For all $(x,t)\in \mathcal{X}\times
\{1,...,T\}$, $U_t(x)\in \mathbb{R}$ and $g_t$ is strictly increasing.
Without loss of generality, we let $g_T(y)=y$ for all $y\in\text{Supp}(Y_T)$%
. \label{hyp:direct_effect}
\end{assumption}

Assumption \ref{hyp:direct_effect} is again materially identical to
Assumption 3.2 in \cite{Athey06}. Combined with Assumption \ref%
{hyp:stationarity}, it states that an individual that has the same
unobservable in two different periods (i.e., $U_{t}(x)\equiv U_{t^{\prime
}}(x)$) will also have the same ranking in the distributions of potential
outcomes $Y_{t}(x)$ and $Y_{t^{\prime }}(x)$ in the same two periods.
Assumption \ref{hyp:direct_effect} generalizes the standard translation
model $g_{t}(u)=\delta _{t}+u$ to allow for heterogeneous time trends. This
can be important in some applications. For instance, macroeconomic shocks
may have different effects on high- and low-wage earners. Note that given
the strict monotonicity condition, we can always make the normalization $%
g_{T}(y)=y$, by just redefining $U_{t}(x)$ as $g_{T}(U_{t}(x)) $ and $%
g_{t}(y)$ as $g_{t}\circ g_{T}^{-1}(y)$.

\begin{assumption}
(crossing points) For all $t\in \{1,...,T-1\}$, there exists $x_{t}^{\ast
}\in \mathbb{R}$ such that $F_{X_t}(x_{t}^{\ast })=F_{X_T}(x_{t}^{\ast })\in
(0,1) $. \label{hyp:crossing}
\end{assumption}

Contrary to Assumptions \ref{hyp:stationarity}-\ref{hyp:direct_effect},
Assumption \ref{hyp:crossing} only involves observables, and is therefore
directly testable in the data. It means, roughly speaking, that the
exogenous change (induced by, e.g., a policy change) affects individuals'
treatment in a heterogeneous way. Requiring time to have a heterogeneous
effect on the treatment is also required in the usual
difference-in-difference strategy, and a similar condition is required in
fuzzy settings considered by \cite{Chaisemartin14}.

\medskip Note that $x_{t}^{\ast }$ can be identified and estimated using the
data. We consider such an estimator in Section \ref{sec:estimation} below.
However, sometimes the value of the crossing point may also be inferred from
the design of the policy change. Comparing the theoretical crossing point
with the crossing point obtained from the data then constitutes a check for
the hypothesis that the policy change has not changed the distribution of
the unobservables. We refer to the application below for more details about
this.

\medskip Two additional remarks on Assumption \ref{hyp:crossing} are in
order. First, this assumption holds if $F_{X_{t}}$ remains constant with $t$%
. In this case, however, we identify only the trivial parameters $\Delta
^{ATT}(x,x)=\Delta ^{QTT}(p,x,x)=0$. Second, we assume for simplicity
crossings between the cdf of $X_{T}$ and all other cdfs, but actually, $T-1$
crossings are sufficient, provided that we can \textquotedblleft
relate\textquotedblright\ them to each other, for instance if the cdf of $%
X_{t}$ crosses that of $X_{t+1}$ for $1\leq t<T$. Also, with only one
crossing between $F_{X_{s}}$ and $F_{X_{t}}$, we still identify some
treatment effects at periods $s$ or $t$, following the same logic as in
Section \ref{sub:ident} below. So even if Assumption \ref{hyp:crossing} does
not make this apparent, adding periods help because it increases the odds of
having at least one crossing point, which is sufficient for identifying some
causal parameters.

\medskip The last assumption we impose is a regularity condition:

\begin{assumption}
(regularity conditions) For all $t\in\{1,...,T\}$, $E(|Y_t|)<\infty$ and $%
F_{X_t}$ is continuous on $\text{Supp}(X_t)$, which is an interval included
in $\mathcal{X}$. For all $x^{\prime }\in \text{Supp}(X_t)$ and $u\in \text{%
Supp}(U_t(x^{\prime }))$, there exist versions of $E[Y_t|X_t]$ and $%
P^{U_t(x^{\prime }) |V_t}$ such that $x\mapsto E[Y_t|X_t=x]$ and $v \mapsto
F_{U_t(x^{\prime })|V_t}(u|v)$ are continuous. \label{hyp:model}
\end{assumption}

The continuity conditions are mild, yet important to define properly
conditional expectations or cdfs (e.g., $E[Y_{t}|X_{t}=x_{t}^{\ast }]$ or $%
F_{Y_{t}|X_{t}}(y|x_{t}^{\ast })$).

\subsection{Examples}

To better understand the types of data generating processes which our
assumptions permit, we consider two examples of workhorse models.

\subsubsection{Simple Linear Systems}

\label{subs:linear} Let us suppose that
\begin{align}
Y_{t}(x)& =\alpha _{t}+x\beta +U_{t} ,  \label{eq:simple_linear} \\
X_{t}& =\gamma _{t}+\delta _{t}\eta _{t},  \label{eq:FS_linear}
\end{align}%
where $(\alpha _{t},\beta ,\gamma _{t},\delta _{t})$ are constants and the
marginal distribution of $(U_{t},\eta _{t})$ is assumed constant over time.
Suppose also that $\text{Supp}(\eta _{t})=\mathbb{R}$ and $\delta _{t}\neq
\delta _{T}$ for all $t\neq T$. Then, Assumptions \ref{hyp:stationarity}-\ref%
{hyp:crossing} hold with $U_{t}(x)=x\beta +U_{t}$, $g_{t}(u)=\alpha _{t}+u$
and $x_{t}^{\ast }=(\gamma _{t}-\gamma _{T})/(\delta _{T}-\delta _{t})$.
Assumption \ref{hyp:model} holds under mild restrictions on the distribution
of $(U_{t},\eta _{t})$. Note that the model allows for any dependence
between $U_{t}$ and $\eta _{t}$. Thus, $X_{t}$ is endogenous in general in
the outcome equation, and we cannot recover $\beta $ directly by the OLS.
Note, moreover, that we cannot use time as an instrumental variable in the
outcome equation either, because it has a direct effect on $Y_{t}$, so none
of the standard tools work.

\medskip As mentioned above, if the policy change has a pure location effect
on $X_t$, so that $\delta_t=\delta_T$ for all $t$, then Assumption \ref%
{hyp:crossing} is not satisfied. We require to have individuals unaffected
by the change, and this holds with a change in scale in \eqref{eq:FS_linear}.

\medskip Note that we did not impose any condition on the dependence between
$(U_{s},\eta _{s})$ and $(U_{t},\eta _{t})$. Hence, the model allows for any
form of serial dependence of the unobservables. On the other hand, models
with a lagged dependent variable are typically ruled out by our assumptions.
To see this, suppose that we replace \eqref{eq:simple_linear} by
\begin{equation}
Y_{t}(x)=\alpha _{t}+x\beta +\rho Y_{t-1}+U_{t}.  \label{eq:AR_model} \\
\end{equation}
Then,
\begin{equation*}
Y_{t}(x)=\widetilde{\alpha }_{t}+\beta x+\widetilde{U}_{t}
\end{equation*}%
with $\widetilde{\alpha }_{t}=\alpha +\sum_{k=1}^{\infty }\rho ^{k}\left[
\alpha _{t-k}+\beta \gamma _{t-k}\right] $ and $\widetilde{U}%
_{t}=U_{t}+\sum_{k=1}^{\infty }\rho ^{k}[\beta \delta _{t-k}\eta
_{t-k}+U_{t-k}]$. Therefore, the distribution of $\widetilde{U}_{t}$ depends
on $t$ in general, unless $\rho =0$. Then,
\begin{equation*}
F_{Y_{t}(x)}^{-1}\circ F_{Y_{t^{\prime }}(x)}(y)=\widetilde{\alpha }%
_{t}+\beta x+F_{\widetilde{U}_{t}}^{-1}\circ F_{\widetilde{U}_{t^{\prime
}}}(y-\alpha _{t^{\prime }}^{\prime }-\beta x),
\end{equation*}%
which depends on $x$ in general when $\beta\neq 0$ and $\rho\neq 0$. On the
other hand, Assumptions \ref{hyp:stationarity}-\ref{hyp:direct_effect} imply
that for any $(t,t^{\prime })$, $F_{Y_{t}(x)}^{-1}\circ F_{Y_{t^{\prime
}}(x)}(y)$ does not depend on $x$. In other words, Assumptions \ref%
{hyp:stationarity}-\ref{hyp:direct_effect} are violated in general when $%
\rho \neq 0$.

\medskip This feature is not specific to our assumptions. A similar issue
arises in the standard difference-in-differences setup. To see this,
consider model \eqref{eq:AR_model} again, but now with a binary treatment,
for which $X_{t}=0$ for all $t\leq 1,$ and $X_{2}=G$, the dummy of being in
the treatment group as opposed to the control group in the second period.
Suppose, moreover, that $E(U_{t}|G)$ does not depend on $t$. In the case of $%
\rho =0$, the common trend condition is satisfied with $%
E(Y_{2}(0)|G)-E(Y_{1}(0)|G)=\alpha _{2}-\alpha _{1}$. But if $\rho \neq 0$,
then the common trends assumptions fails to hold in general, since
\begin{equation*}
E(Y_{2}(0)|G)-E(Y_{1}(0)|G=0)=\alpha _{2}-\alpha _{1}+\sum_{k=1}^{\infty
}\rho ^{k}\alpha _{1-k}+\frac{\rho }{1-\rho }E(U_{1}|G),
\end{equation*}%
which depends on $G$.

\subsubsection{Quantile Regression Type Models}

The previous model does not allow for heterogeneous treatment effects or
heterogeneous time trends on potential outcomes. We may, however, analyze
models with heterogeneous features as they are compatible with our
assumptions. The following model exemplifies this:
\begin{align*}
Y_{t}& =f_{t}\left[ \alpha (U_{t})+X_{t}\beta (U_{t})\right] \\
X_{t}& =h_{t}(\eta _{t}),
\end{align*}%
where we assume that the marginal distribution of $(U_{t},\eta _{t})$ is
constant over time, $\text{Supp}(\eta _{t})=\mathbb{R}$, and for all $t\neq
T $, there exists $e_{t}$ such that $h_{t}(e_{t})=h_{T}(e_{t})$. We also
assume that $f_{t}$ and $e\mapsto \alpha (e)+x\beta (e)$ are strictly
increasing. In this scenario, Assumptions \ref{hyp:stationarity}-\ref%
{hyp:crossing} are satisfied, with $U_{t}(x)=f_{T}(x\beta (U_{t}))$ and $%
g_{t}(y)=f_{t}\circ f_{T}^{-1}(y)$. Contrary to the previous example, this
model allows for both heterogeneous treatment effects, through the random
coefficient $\beta (U_{t})$, and an heterogeneous time trend, through the
function $f_{t}$. In the special case where $f_{t}(y)=y+\gamma_t$, the model
is a linear correlated random coefficients model. Note that even with such a
restriction on $f_{t}$, the treatment effect function $e\mapsto \beta (e)$
cannot be identified through standard quantile regression of $Y_{t}$ on $%
X_{t}$, because of the dependence between $X_{t}$ and $U_{t}$.

\subsection{Main Identification Results}

\label{sub:ident}

Our identification strategy works in two steps: In the first step, we
identify the effect of time on the outcome, i.e., the function $g_{t}$. This
implies that we identify $\widetilde{Y}_{t}=g_{t}^{-1}(Y_{t})$, whose
distribution does not depend on time anymore (conditional on $V_{t})$. Then,
in a second step, we can use time as an instrument to recover specific
causal effects. For ease of exposition, we first outline our method in the
case of $T=2$.

\subsubsection{Step 1: Identification of the Time Trend}

\label{ssub:time_trend}

To recover $g_{1}$, we rely on observations at the crossing point, i.e.
observations for whom $X_{1}=x_{1}^{\ast }$. Under Assumptions \ref%
{hyp:stationarity}-\ref{hyp:model}, the following is true:%
\begin{align}
P\left( Y_{2}\leq y|X_{2}=x_{1}^{\ast }\right) &\overset{A.2}{=}P\left(
U_{2}(x_{1}^{\ast })\leq y|V_{2}=F_{X_{2}}(x_{1}^{\ast })\right)  \notag \\
&\overset{A.1}{=} P\left( U_{1}(x_{1}^{\ast })\leq
y|V_{1}=F_{X_{2}}(x_{1}^{\ast })\right)  \notag \\
&\overset{A.3}{=} P\left( U_{1}(x_{1}^{\ast })\leq
y|V_{1}=F_{X_{1}}(x_{1}^{\ast })\right)  \notag \\
&\overset{A.2}{=} P\left( g_{1}(U_{1}(x_{1}^{\ast }))\leq
g_{1}(y)|X_{1}=x_{1}^{\ast }\right)  \notag \\
&= P\left( Y_{1}\leq g_{1}(y)|X_{1}=x_{1}^{\ast }\right) ,
\label{eq:for_ident_m}
\end{align}%
where we indicate the respective assumptions employed by superscripts upon
equalities. As a result, $g_{1}$ is identified by
\begin{equation}
g_{1}(y)=F_{Y_{1}|X_{1}}^{-1}\left[ F_{Y_{2}|X_{2}}(y|x_{1}^{\ast
})|x_{1}^{\ast }\right] .  \label{eq:ident_m}
\end{equation}%
Hence, under our assumptions, the time trend $g_{1}$ can be identified using
observations for which $X_{1}=x_{1}^{\ast }$ and $X_{2}=x_{1}^{\ast }$.
These two sets of observations, though distinct as we use repeated cross
sections, have the same distribution of unobservables and the same value of
the treatment. Therefore, differences between the distributions of outcomes
can only stem from the effect of time itself. This idea is very similar to
that used in difference-in-differences, where the control group permits the
identification of the (common) time trend. For this reason, in what follows
we classify all observations satisfying $X_{1}=x_{1}^{\ast }$ to form the
\textquotedblleft control group\textquotedblright .

\medskip Note that our model allows for heterogeneous time trends. As \cite%
{Athey06}, we therefore recover a whole function $g_{1}$ rather than a
single coefficient for the time trend, as in the standard
difference-in-differences model. Also as \cite{Athey06}, we identify $g_{1}$
by a quantile-quantile transform. When it comes to the identification of the
time trend, the main difference between our approach and that of \cite%
{Athey06} lies in how the control group is defined. While it is defined ex
ante in \cite{Athey06}, it is data-driven and defined through the crossing
points here.

\medskip Beyond the identification of $g_{1}$, \eqref{eq:ident_m} reveals
that the model is testable, if there are several crossing points between $%
F_{X_{1}}$ and $F_{X_{2}}$, say $x_{1}^{\ast }$ and $x_{1}^{\ast \ast }$. In
such a case, our model implies indeed that for all $y$,
\begin{equation*}
F_{Y_{1}|X_{1}}^{-1}\left[ F_{Y_{2}|X_{2}}(y|x_{1}^{\ast })|x_{1}^{\ast }%
\right] =F_{Y_{1}|X_{1}}^{-1}\left[ F_{Y_{2}|X_{2}}(y|x_{1}^{\ast \ast
})|x_{1}^{\ast \ast }\right] ,
\end{equation*}%
which is a testable restriction. Related to this, if the true set of
crossing points is an interval $I$, say, we have $F_{X_1|X_1\in I} =
F_{X_2|X_2\in I}$. Then, integrating \eqref{eq:for_ident_m} over $x_1^*\in I$%
, we obtain
\begin{equation*}
P\left( Y_{2}\leq y|X_{2} \in I \right) = P\left( Y_1 \leq g_1(y)|X_1 \in I
\right).
\end{equation*}
Therefore, $g_1(y)=F^{-1}_{Y_1|X_1\in I}\left[F_{Y_2|X_2\in I}(y)\right]$,
which implies that $g_1$ could be in principle estimated at a parametric
rather than nonparametric rate in this case.

\subsubsection{Step 2: Identification of ATT and QTT}

Next, we consider the identification of the treatment effects $%
\Delta^{ATT}(x,x^{\prime })$ and $\Delta ^{QTT}(p,x,x^{\prime })$. Start out
by considering the transformed potential and observed outcomes $\widetilde{Y}%
_t(x)=g_t^{-1}(Y_t(x))$ and $\widetilde{Y}_t=g_t^{-1}(Y_t)$ for $t=1,2$. By
virtue of Assumption \ref{hyp:stationarity}, $F_{\widetilde{Y}_1(x)}=F_{%
\widetilde{Y}_2(x)}$. Time can thus be seen as an instrument for the
treatment: while it affects the treatment (or its distribution, to be
precise), it has no direct effect on potential outcomes. The same idea is
used in a different DID framework by \cite{Chaisemartin14}.

\medskip To proceed with the identification of our model, let $%
q_t=F_{X_t}^{-1}\circ F_{X_T}$. Thus, $q_t(x)$ denotes the value of $X_t$
(say, income in period $t$) for an individual at the same rank as another
individual whose period $T$ income is $X_T=x$. Then,
\begin{eqnarray*}
E\left[ \widetilde{Y}_1|X_1=q_1(x)\right] &=&E\left[
U_1(q_1(x))|V_1=F_{X_2}(x)\right] \\
&\overset{A.1}{=}&E\left[U_2(q_1(x))|V_2=F_{X_2}(x)\right] \\
&=&E\left[U_2(q_1(x))|X_2=x\right] .
\end{eqnarray*}%
By the normalization $g_2(y)=y$, the latter is the mean counterfactual
outcome at period 2 for individuals with $X_2=x$ if $X_2$ was moved
exogenously to $q_1(x)$. We can therefore identify $\Delta ^{ATT}(x,q_1(x))$%
, the average effect of moving $X_2$ from their initial value $x$ to $q_1(x)$%
, by
\begin{eqnarray*}
\Delta ^{ATT}(x,q_1(x)) &\overset{A.2}{=}&E\left[ U_2(q_1(x))-U_2(x)|X_2=x%
\right] \\
&=&E\left[ \widetilde{Y}_1|X_1=q_1(x)\right] -E\left[ \widetilde{Y}_2|X_2=x%
\right].
\end{eqnarray*}
This means that we can obtain $\Delta^{ATT}(x,x^{\prime })$ for any pair $%
(x,x^{\prime })$ such that $x^{\prime }=q_1(x)$.

\medskip Similarly, we have, for any $p\in (0,1)$,
\begin{eqnarray*}
F_{\widetilde{Y}_1|X_1}^{-1}(p|q_1(x))
&=&F_{U_1(q_1(x))|V_1}^{-1}(p|F_{X_2}(x)) \\
&\overset{A. \ref{hyp:stationarity}}{=}%
&F_{U_2(q_1(x))|V_2}^{-1}(p|F_{X_2}(x)) \\
&\overset{A.\ref{hyp:direct_effect}}{=}& F_{Y_2(q_1(x))|X_2}^{-1}(p|x).
\end{eqnarray*}
This implies that
\begin{equation*}
\Delta^{QTT}(p,x,q_1(x)) = F_{\widetilde{Y}_1|X_1}^{-1}(p|q_1(x)) - F_{%
\widetilde{Y}_2|X_2}^{-1}(p|x).
\end{equation*}

Theorem \ref{thm:point_ident} summarizes our findings so far, and
generalizes it to any value of $T$.

\medskip

\begin{theorem}
Under Assumptions \ref{hyp:stationarity}-\ref{hyp:model}, we identify, for
all $x\in \text{Supp}(X_T)$, $p \in (0,1)$ and $t\in \{1,...,T-1\}$, the
functions $g_t$ and the average and quantile treatment effects $\Delta
^{ATT}(x,q_t(x))$ and $\Delta ^{QTT}(p,x,q_t(x))$. \label{thm:point_ident}
\end{theorem}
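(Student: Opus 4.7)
The proof strategy essentially lifts the two-period argument already sketched in Section \ref{sub:ident} to an arbitrary number of periods, exploiting the fact that Assumptions \ref{hyp:stationarity}--\ref{hyp:model} are stated symmetrically in $t$. I would fix an arbitrary $t \in \{1,\dots,T-1\}$ and carry out two independent identification steps, each tied to the crossing point $x_t^*$ guaranteed by Assumption \ref{hyp:crossing}.

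\medskip\textbf{Step 1 (time trend $g_t$).} The plan is to repeat the display culminating in equation \eqref{eq:for_ident_m}, but with indices $1\mapsto t$ and $2\mapsto T$ throughout. Concretely, I would write $P(Y_T\leq y\mid X_T=x_t^*)$, apply Assumption \ref{hyp:direct_effect} (strict monotonicity of $g_T$, combined with the normalization $g_T(y)=y$) to pass to $U_T(x_t^*)$, invoke Assumption \ref{hyp:stationarity} to replace the period-$T$ conditional distribution by the period-$t$ one at the common rank $V=F_{X_T}(x_t^*)$, use Assumption \ref{hyp:crossing} to rewrite that rank as $F_{X_t}(x_t^*)$, and finally apply Assumption \ref{hyp:direct_effect} again to pass back to $Y_t$. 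This yields
\[
g_t(y)=F_{Y_t\mid X_t}^{-1}\!\left[F_{Y_T\mid X_T}(y\mid x_t^*)\,\big|\,x_t^*\right],
\]
with the inverse well-defined because $g_t$ is strictly increasing on $\text{Supp}(U_t(x_t^*))$ and the conditional cdfs at $x_t^*$ exist as continuous versions by Assumption \ref{hyp:model}.

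\medskip\textbf{Step 2 (ATT and QTT).} Having identified $g_t$, I form $\widetilde Y_t=g_t^{-1}(Y_t)$, which is an observed functional of the data. The key chain is then, for any $x\in\text{Supp}(X_T)$,
\begin{align*}
E[\widetilde Y_t\mid X_t=q_t(x)]
&= E[U_t(q_t(x))\mid V_t=F_{X_T}(x)]\\
&\overset{A.\ref{hyp:stationarity}}{=} E[U_T(q_t(x))\mid V_T=F_{X_T}(x)]
= E[Y_T(q_t(x))\mid X_T=x],
\end{align*}
where the first equality uses strict monotonicity (A.\ref{hyp:direct_effect}) plus the definition of $q_t$, the second uses the stationarity of $U_t(x)\mid V_t=v$ across $t$, and the last uses $g_T(y)=y$. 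Subtracting $E[Y_T\mid X_T=x]$ identifies $\Delta^{ATT}(x,q_t(x))$. The QTT identification is obtained by running the identical chain with conditional cdfs in place of conditional means, again valid under Assumption \ref{hyp:model} so that $F_{U_t(q_t(x))\mid V_t}$ has a continuous version at $v=F_{X_T}(x)$; inverting gives $F_{\widetilde Y_t\mid X_t}^{-1}(p\mid q_t(x))=F_{Y_T(q_t(x))\mid X_T}^{-1}(p\mid x)$, whence the QTT follows by a second subtraction.

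\medskip The only non-mechanical point, and therefore what I expect to require a little care, is making sure every conditional expectation and conditional cdf used in the chains is well-defined at the specific conditioning values $x=x_t^*$ and $x=q_t(x)$, which have probability zero under the continuous distributions in Assumption \ref{hyp:model}. This is exactly what the continuity of the chosen versions $x\mapsto E[Y_t\mid X_t=x]$ and $v\mapsto F_{U_t(x')\mid V_t}(u\mid v)$ in Assumption \ref{hyp:model} is designed to handle, so I would explicitly cite those versions when writing each line of the two chains. Everything else---the generalization from $t=1$ to arbitrary $t\in\{1,\dots,T-1\}$---is purely notational relabeling of the argument already given.
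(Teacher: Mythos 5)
Your proposal is correct and takes essentially the same route as the paper: the paper's proof of Theorem \ref{thm:point_ident} is exactly the two-step argument of Section \ref{sub:ident} --- the chain culminating in \eqref{eq:for_ident_m} giving $g_t(y)=F_{Y_t|X_t}^{-1}\left[F_{Y_T|X_T}(y|x_t^*)|x_t^*\right]$, followed by the conditional-mean and conditional-quantile chains through $E[U_t(q_t(x))|V_t=F_{X_T}(x)]$ --- with the passage from $t=1$, $T=2$ to general $t<T$ being precisely the notational relabeling you describe. Your closing remark about invoking the continuous versions of $x\mapsto E[Y_t|X_t=x]$ and $v\mapsto F_{U_t(x')|V_t}(u|v)$ at the measure-zero conditioning points $x_t^*$ and $q_t(x)$ is exactly the role the paper assigns to Assumption \ref{hyp:model}.
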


\medskip Note that if $x\mapsto Y_T(x)$ is differentiable, we have, by the
mean value theorem,
\begin{equation*}
Y_T(q_t(x)) - Y_T(x) = Y^{\prime }_T(\widetilde{X}) (q_t(x) - x),
\end{equation*}
for some random term $\widetilde{X}\in [x, q_t(x)]$. As a result, by Theorem %
\ref{thm:point_ident}, we identify
\begin{equation}
\Delta^{AME}_{\text{app}}(x) \equiv E[Y_T^{\prime }(\widetilde{X}) | X_T=x]
= \frac{\Delta^{ATT}(x,q_t(x))}{q_t(x) - x}.  \label{eq:ATT_AME}
\end{equation}
In other words, $\Delta^{ATT}(x,q_t(x))/(q_t(x) - x)$ may be interpreted as
an average marginal effect for units at $X_T=x$. Contrary to usually,
however, the derivative of $Y_T(.)$ is not evaluated at the current
treatment value $x$, but at another point $\widetilde{X}\in (x, q_t(x))$. If
$q_t(x)$ is close to $x$ or $Y_T$ is close to being linear, we can
nevertheless expect $Y_T^{\prime }(\widetilde{X})$ to be close to the usual
term $Y^{\prime }_T(x)$. As shown in Appendix \ref{sub:identification_ME},
we actually exactly identify the usual average marginal effect $%
\Delta^{AME}(x)\equiv E[Y_T^{\prime }(x) | X_T=x]$ at some particular values
of $x$.

\medskip Equation \eqref{eq:ATT_AME} also implies that we can identify
average marginal effects on larger subpopulation. Specifically, let $%
I_c=\{x\in \text{Supp}(X_T):|q_t(x)-x|>c\}$ for some $c>0$. Then, we
identify
\begin{equation*}
E[\Delta^{AME}_{\text{app}}(X_T) | X_T\in I_c] = E\left[\frac{%
\Delta^{ATT}(X_T,q_t(X_T))}{q_t(X_T) - X_T}\bigg| X_T\in I_c\right].
\end{equation*}
The advantage of considering this object is statistical accuracy, as we
average over the subpopulation such that $X_T\in I_c$.

\medskip With $T>2$, more periods produce more variations and thus allow one
to identify more treatment effects. Also, while Theorem \ref{thm:point_ident}
establishes the identification of treatment effects at period $T$, the same
reasoning yields the identification of treatment effects at any other
periods. To see this, note that
\begin{align*}
E\left[ Y_{t}(q_{t}^{-1}(x))-Y_{t}(x)|X_{t}=x\right] & =E\left[
g_{t}(U_{t}(q_{t}^{-1}(x))|V_{t}=F_{X_{t}}(x)\right] -E\left[ Y_{t}|X_{t}=x%
\right] \\
& =E\left[ g_{t}(U_{T}(q_{t}^{-1}(x))|V_{T}=F_{X_{t}}(x)\right] -E\left[
Y_{t}|X_{t}=x\right] \\
& =E\left[ g_{t}(Y_{T})|X_{T}=q_{t}^{-1}(x)\right] -E\left[ Y_{t}|X_{t}=x%
\right] .
\end{align*}%
The right-hand side is identified, since $g_{t}$ is identified, as outlined
above. Hence, we identify all period $t$-parameters of the form $E\left[
Y_{t}(q_{t}^{-1}(x))-Y_{t}(x)|X_{t}=x\right] $ and $%
F_{Y_{t}(q_{t}^{-1}(x))|X_{t}=x}^{-1}-F_{Y_{t}(x)|X_{t}=x}^{-1}$.

\medskip If $F_{X_{t}}$ does not vary over time, then $q_{t}(x)=x$ and
Theorem \ref{thm:point_ident} boils down to the identification of the
trivial parameters $\Delta ^{ATT}(\xi ,\xi )=0$ an $\Delta ^{QTT}(\xi ,\xi
)=0$. As mentioned above, the distribution of $X_{t}$ needs to vary for our
method to have non-trivial identification power. Finally, we cannot point
identify from Theorem \ref{thm:point_ident} the parameters $\Delta
^{ATT}(x,x^{\prime })$ and $\Delta ^{QTT}(x,x^{\prime })$ if $x^{\prime
}\neq q_{t}(x)$ for some $t\in \{1,...,T-1\}$. We show however in Subsection %
\ref{sub:partial} that we can at least set identify these parameters under
plausible curvature restrictions, and in Subsection \ref{sub:point_ident}
that we can point identify them under stronger conditions.

\subsection{Relationship to other Approaches}

\label{sub:discussion}

\subsubsection{Comparison with Panel Data Models}

\label{subsub:comparison_with_panel_data_models}

While we rely on time variation to identify causal effects, as in panel
data, our assumptions contrast with those typically used in panel data.
First, our stationarity condition is different from the condition
\begin{equation}
U_{s}(x)|X_{1},...,X_{T}\sim U_{t}(x)|X_{1},...,X_{T},
\label{eq:condit_panel}
\end{equation}%
commonly assumed in panel data
\citep[see,
e.g.,][]{Manski87,Honore92,Hoderlein12,Graham12,Chernozhukov13,chernozhukov2015nonparametric}%
. To understand the differences between the two, consider two polar cases.
In the first, endogeneity stems from contemporaneous simultaneity between $%
U_{t}(x)$ and $V_{t}$, as is often the case with variables that are jointly
determined, while $(U_{t}(x),V_{t})_{t=1...T}$ are i.i.d. across time. If
so, Assumption \ref{hyp:stationarity} is satisfied. On the other hand, %
\eqref{eq:condit_panel} does not hold, unless $U_{t}(x)$ is independent of $%
V_{t}$, because the distribution of $U_{s}(x)$ conditional on $%
(X_{1},...,X_{T})$ is a function of $X_{s}$ only, i.e., $%
f_{U_{s}(x)|X_{1},...,X_{T}}(a|x_{1},...,x_{T})=f_{U_{s}(x)|X_{s}}(a|x_{s})$%
, while the conditional distribution $U_{t}(x)$ is a function of $X_{t}$
only, and they do not coincide in general if $x_{s}\neq x_{t}$. Assuming $%
(U_{s}(x),V_{s})$ independent of $(U_{t}(x),V_{t})$ is of course often
unrealistic, but the same conclusion would hold with, say, a vector
autoregressive structure.

\medskip In the second case, $U_{t}(x)=(A(x),U_{t})$ where $A(x)$ is an
individual effect potentially correlated with $X_{1},...,X_{T}$ and $%
(U_{t})_{t=1}^{T}$ are i.i.d. idiosyncratic shocks that are independent of $%
(A(x),X_{1},...,X_{T})$. In this case, the condition \eqref{eq:condit_panel}
is always satisfied. On the other hand, Assumption \ref{hyp:stationarity}
holds only under a special correlation structure between $A(x)$ and $%
(X_{1},...,X_{T})$: $A(x)|V_{t}=v\sim A(x)|V_{s}=v,$ which for instance
imposes Cov$(A(x),V_{t})=$ Cov$(A(x),V_{s}),\;s\neq t$. While this still
allows for arbitrary contemporaneous correlation between $A(x)$ and $V_{t}$,
it does not allow for any time-varying covariance.

\medskip Another difference with panel data models lies in the type of
variations that we require on $X_{t}$. With panels, we require the
individual value of the treatment to vary over time, the fixed effects
absorbing any variable that is constant across time. Such a requirement is
not needed here, since the distribution of $X_{t}$ can change over time even
if $X_{t}$ is constant for each individual, provided new generations are
involved at date $t$ compared to date $s$. On the other hand, compared to
panel data, we do not identify anything here, apart from the time trend $%
g_{t}$, when the treatment changes at an individual level but the
distribution of $X_{t}$ remains constant over time. This is one key aspect
that distinguishes our identification strategy from panel data based
strategies.


\subsubsection{Comparison with instrumental variable models}

\label{subs:compar_IV}

Our result is also related to the literature on identification of triangular
models with instruments and cross-sectional data
\citep[see in
particular][]{Imbens09}. Such models take the following form:
\begin{align*}
Y & = g(X, U), \\
X & = h(Z, V),
\end{align*}
where $Z$ denotes the instrument, $V\in\mathbb{R}$ $h(z,\cdot)$ is
increasing and $(U, V) \perp\!\!\!\perp Z$. We rely on a similar structure here, with $Z$ playing the role of the
instrument. Assumption \ref{hyp:stationarity} then corresponds to the
condition $(U, V) \perp\!\!\!\perp Z$. Our model still has one distinctive
feature from this model: time may have a direct effect on the outcome
variable, though this effect has to be restricted through Assumption \ref%
{hyp:direct_effect}.\footnote{\label{foot:Z_discrete} Another difference
with \cite{Imbens09} is that the instrument is discrete in our setup. As a
result, some common parameters such as the overall average marginal effects
are not identified without further restrictions.} The role of the crossing
condition, then, is to pin down this effect, so that we can modify the
outcome in such a way that time becomes a valid instrument.

\medskip This parallel also illustrates some possible limitations of our
approach. In particular, \cite{Kasy11} showed that if in reality $V$ is
multidimensional (with still $(U, V) \perp\!\!\!\perp Z$), then in general $%
U $ is not independent of $Z$ conditional on $F_{X|Z}(X|Z)$. In our context,
this means that Assumption \ref{hyp:stationarity} fails to hold if $X_t$
depends on a multiple unobserved terms. Consider for instance returns to
schooling. In the model of \cite{card2001estimating}, schooling $X_t$
depends on individual marginal cost $c_t$ and individual returns $r_t$
through the relationship $X_t=(c_t - r_t)/k$ for some constant $k>0$.
Suppose that returns $r_t$ are time invariant but exogenous variations in
tuition fees affect marginal costs multiplicatively, so that $c_t=\alpha_t
\widetilde{c}_t$ and $(\widetilde{c}_t, r_t)$ is time invariant. The results
of \cite{Kasy11}, and in particular his Section 2, then imply that
Assumption \ref{hyp:stationarity} would fail in this example.

\section{Extensions}

\subsection{Including Covariates}

We consider here the case where exogenous covariates $Z_{t}$ also affect the
outcome variable. Specifically, let $Y_{t}(x,z)$ denote the potential
outcome associated with the values $x$ and $z$ (of random variables $X_{t}$
and $Z_{t}$, respectively). We observe $Y_{t}\equiv Y_{t}(X_{t},Z_{t})$. We
still focus on the effect of $X_{t}$ hereafter. In this case, the preceding
analysis can be conducted conditionally on $Z_{t}$. We briefly discuss this
extension here, by considering only the discrete average and quantile
effects
\begin{eqnarray*}
\Delta ^{ATT}(x,x^{\prime },z) &\equiv &E\left[ Y_{T}(x^{\prime
},z)-Y_{T}(x,z)|X_{T}=x,Z_{T}=z\right] \qquad\text{and} \\
\Delta ^{QTT}(p,x,x^{\prime },z) &\equiv &F_{Y_{T}(x^{\prime
},z)|X_{T},Z_{T}}^{-1}(p|x,z)-F_{Y_{T}(x,z)|X_{T},Z_{T}}^{-1}(p|x,z).
\end{eqnarray*}%
The marginal effects can be handled similarly. We first restate our previous
conditions in this context. The rank variable is now defined conditionally
on $Z_{t}$, i.e., $V_{t}=F_{X_{t}|Z_{t}}(X_{t}|Z_{t})$.

\begin{assumption_nonumber}
\label{hyp:stationarity_cov}{\normalfont \textbf{1C}}\; $\text{Supp}%
((V_t,Z_t))$ does not depend on $t$. For all $t \in \{1,...,T\}$, $%
Y_t(x,z)=g_t(z,U_t(x,z))$ where for all $(x,v,z)\in \mathcal{X}\times \text{%
Supp}((V_t,Z_t))$, the distribution of $U_t(x,z)|V_t=v,Z_t=z$ does not
depend on $t$.
\end{assumption_nonumber}

\begin{assumption_nonumber}
\label{hyp:model_cov}{\normalfont \textbf{4C}}\; For all $%
(t,z)\in\{1,...,T\}\times \text{Supp}(Z_t)$, $E(|Y_t|)<\infty$ and $%
F_{X_t|Z_t}(.|z)$ is continuous and strictly increasing on $\text{Supp}%
(X_t|Z_t=z)$. For all $(x^{\prime },z)\in \text{Supp}((X_t,Z_t))$ and $u\in
\text{Supp}(U_t(x^{\prime },z))$, there exist versions of $E[Y_t|X_t,Z_t]$
and $P^{U_t(x^{\prime }) |V_t,Z_t}$ such that $x\mapsto E[Y_t|X_t=x,Z_t=z]$
and $v \mapsto F_{U_t(x^{\prime })|V_t,Z_t}(u|v,z)$ are continuous.
\end{assumption_nonumber}

Next, we consider two versions of Assumptions \ref{hyp:direct_effect} and %
\ref{hyp:crossing}, namely Assumptions 2C-3C and 2C'-3C' below. The
trade-off between these two versions is basically between the generality of
the model and the requirements on the data. In the first version, we allow
for a more general time trend (i.e., Assumption 2C' is a particular case of
Assumption 2C). However, the crossing condition in Assumption 3C is more
demanding than in Assumption 3C', because the former requires to observe a
crossing point for each value of $z$.

\begin{assumption_nonumber}
{\normalfont \textbf{2C}} \; For all $(z,t)\in \text{Supp}(Z_T) \times
\{1,...,T\}$, $U_t(x,z)\in \mathbb{R}$ and $g_t(z,.)$ is strictly
increasing. Without loss of generality, we let $g_T(z,y)=y$ for all $(y,z)\in%
\text{Supp}((Y_T, Z_T))$. \label{hyp:direct_effect_cov}
\end{assumption_nonumber}

\begin{assumption_nonumber}
{\normalfont \textbf{3C}} \; For all $(z,t)\in \text{Supp}(Z_T) \times
\{1,...,T-1\}$, there exists $x^*_t(z)$ such that $F_{X_T|Z_T}(x^*_t(z)|z) =
F_{X_t|Z_t}(x^*_t(z)|z) \in (0,1)$. \label{hyp:crossing_cov}
\end{assumption_nonumber}

\begin{assumption_nonumber}
{\normalfont \textbf{2C'}} \; For all $(t,x,z) \in \{1,...,T\}\times \text{%
Supp}((X_t,Z_t))$, $g_t(z,U_t(x,z))=h_t(U_t(x,z))$, with $U_t(x,z)\in
\mathbb{R}$ and $h_t(.)$ strictly increasing. Without loss of generality, we
let $h_T(y)=y$ for all $y\in\text{Supp}(Y_T)$. \label{hyp:direct_effect_cov2}
\end{assumption_nonumber}

\begin{assumption_nonumber}
{\normalfont \textbf{3C'}} \; For all $t\in \{1,...,T-1\}$, there exists $%
(x^*_t,z^*_t)$ such that $F_{X_T|Z_T}(x^*_t|z^*_t) =
F_{X_t|Z_t}(x^*_t|z^*_t) \in (0,1)$. \label{hyp:crossing_cov2}
\end{assumption_nonumber}

Both sets of the assumptions lead to the same results, which are
qualitatively very similar to those of Theorem \ref{thm:point_ident}. In
what follows, we let $%
q_{t}(x|z)=F_{X_{t}|Z_{t}}^{-1}(F_{X_{T}|Z_{T}}(x|z)|z) $. The proof of
Theorem \ref{thm:point_ident_cov}C is a straightforward extension of the
proof of Theorem \ref{thm:point_ident}, and hence is omitted.

\medskip \setcounter{theorem}{0}

\begin{theorem}
\hspace{-0.16cm}{\normalfont \textbf{C}} \, Suppose that Assumptions 1C and
4C and either Assumptions 2C-3C or Assumptions 2C'-3C' hold. Then, for
almost all $(x,z)\in \text{Supp}((X_T,Z_T))$, all $p \in (0,1)$ and all $%
t\in \{1,...,T-1\}$, the functions $g_{t}$ and the average and quantile
treatment effects $\Delta ^{ATT}(x,q_t(x|z),z)$ and $\Delta ^{QTT}(p
,x,q_t(x|z),z)$ are identified. \label{thm:point_ident_cov}
\end{theorem}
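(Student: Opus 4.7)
The plan is to mimic the proof of Theorem \ref{thm:point_ident} verbatim, but carry the covariate $Z_t$ along as a conditioning variable at every step. Under Assumptions 1C, 2C, 3C and 4C, the support condition on $(V_t, Z_t)$ in Assumption 1C guarantees that, for each $z \in \text{Supp}(Z_T)$, the rank values feasible at period $T$ remain feasible at every other period when we condition on $Z_t = z$. The same five-line chain that produced \eqref{eq:for_ident_m} then applies at $X_t = x_t^*(z)$ and $Z_t = z$: Assumption 2C rewrites conditional distributions of $Y_t$ in terms of $U_t(x_t^*(z), z) \mid V_t = F_{X_t|Z_t}(x_t^*(z)|z)$, Assumption 1C replaces period $T$ by period $t$, Assumption 3C substitutes $F_{X_t|Z_t}(x_t^*(z)|z)$ for $F_{X_T|Z_T}(x_t^*(z)|z)$, and Assumption 2C reintroduces $g_t(z,\cdot)$ on both sides. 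The result is
\[
g_t(z,y) = F^{-1}_{Y_t|X_t,Z_t}\bigl[F_{Y_T|X_T,Z_T}(y \mid x_t^*(z), z) \,\big|\, x_t^*(z), z\bigr]
\]
for almost all $(y,z)$, which identifies $g_t$ on its relevant domain. Under Assumptions 2C'--3C' the same derivation works at the single point $(x_t^*, z_t^*)$ and returns $h_t(y)$; since $h_t$ does not depend on $z$ by hypothesis, this identifies $g_t(z,\cdot) = h_t(\cdot)$ for every $z$.

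Having identified $g_t$, set $\widetilde{Y}_t = g_t^{-1}(Z_t, Y_t)$, which is now observable. For the treatment effect step I would reproduce the ATT argument conditionally on $Z_T = z$, again using Assumption 1C to swap the period index:
\[
E\bigl[\widetilde{Y}_t \bigm| X_t = q_t(x|z),\, Z_t = z\bigr] = E\bigl[U_t(q_t(x|z), z) \bigm| V_t = F_{X_T|Z_T}(x|z),\, Z_t = z\bigr] = E\bigl[U_T(q_t(x|z), z) \bigm| X_T = x,\, Z_T = z\bigr].
\]
With the normalization $g_T(z,y) = y$ (from 2C or 2C'), subtracting $E[Y_T \mid X_T = x, Z_T = z]$ yields $\Delta^{ATT}(x, q_t(x|z), z)$. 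The analogous argument with conditional quantiles in place of conditional means, and strict monotonicity of $g_T$ to commute inversion through the quantile operator, yields $\Delta^{QTT}(p, x, q_t(x|z), z)$.

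The main obstacle is largely bookkeeping rather than conceptual: one must verify that at each step the conditional objects $F_{Y_t|X_t,Z_t}(\cdot|x,z)$, $F_{U_t(x',z)|V_t,Z_t}(\cdot|v,z)$ and $E[Y_t|X_t=x,Z_t=z]$ are well-defined along the specific evaluation points used, and this is exactly what Assumption 4C provides through continuous versions. The time invariance of $\text{Supp}((V_t,Z_t))$ in Assumption 1C plays the role analogous to the continuity of $F_{X_t}$ from Assumption \ref{hyp:model} in the no-covariate case, guaranteeing the matching of ranks across periods. Because every line of the original proof involves only pointwise identities in $z$, no new arguments beyond these regularity checks are required, which is why the authors omit the proof.
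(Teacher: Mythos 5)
Your proposal is correct and is precisely the argument the paper has in mind: the authors explicitly omit the proof of Theorem C as ``a straightforward extension of the proof of Theorem \ref{thm:point_ident},'' and your conditional-on-$Z_t$ replication of the quantile-quantile chain \eqref{eq:for_ident_m} (per $z$ under 2C--3C, at the single point $(x_t^*,z_t^*)$ under 2C'--3C' with $h_t$ free of $z$) followed by the conditional ATT/QTT step is exactly that extension. You also correctly identify the two genuinely new bookkeeping points --- the support invariance of $(V_t,Z_t)$ in Assumption 1C enabling the cross-period rank swap, and Assumption 4C supplying well-defined continuous versions of the conditional objects (the latter being why the conclusion holds for almost all $(x,z)$ rather than all) --- so nothing is missing.
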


Here again, we can relate $\Delta ^{ATT}(x,q_t(x|z),z)$ with average
marginal effects. If $Y_T(.,z)$ is differentiable, by the mean value
theorem,
\begin{equation*}
Y_T(q_t(x|z),z) - Y_T(x,z) = \frac{\partial Y_T}{\partial x}(\widetilde{X}%
_z, z),
\end{equation*}
for some $\widetilde{X}_z\in (x,q_t(x|z))$. Then,
\begin{equation*}
\Delta ^{AME}_{\text{app}}(x,z)\equiv E\left[\frac{\partial Y_T}{\partial x}(%
\widetilde{X}_z, z) | X_T=x, Z_T=z\right] = \frac{\Delta ^{ATT}(x,q_t(x|z),z)%
}{q_t(x|z)-x}.
\end{equation*}
This equation implies that we can also average over $x$ and $z$ to gain
statistical power. Specifically, let $I_c=\{(x,z)\in \text{Supp}((X_T,Z_T)):
|x-q_t(x|z)|>c\}$ for some $c>0$. Under the conditions behind Theorem \ref%
{thm:point_ident_cov}, we can identify
\begin{equation*}
E\left[\Delta ^{AME}_{\text{app}}(X_T,Z_T)| (X_T,Z_T)\in I_c \right]= E\left[%
\frac{\Delta ^{ATT}(X_T,q_t(X_T|Z_T),Z_T)}{q_t(X_T|Z_T)-X_T}| (X_T,Z_T)\in
I_c\right].
\end{equation*}

\subsection{Multivariate Treatment}

\label{sub:multi}

Our framework directly extends to multivariate treatments, $%
X_{t}=(X_{1t},...,X_{kT})\in \mathbb{R}^{k}$, $k\geq 2$, by just making a
few changes. First, we now define $V_t$ to be $V_t =
(F_{X_{1t}}(X_{1t}),...,F_{X_{kt}}(X_{kT}))$. Second, we replace Assumption %
\ref{hyp:crossing} by the following condition:

\begin{assumption_nonumber}
{\normalfont \textbf{3M}}\; For all $(j,t)\in \{1,...,k\}\times \{1,...,T\}$%
, there exists $x_{jt}^{\ast }\in \mathbb{R}$ such that $F_{X_{jt}}(x_{jt}^{%
\ast })=F_{X_{jT}}(x_{jt}^{\ast })\in (0,1)$. \label{hyp:crossing_multi}
\end{assumption_nonumber}

Finally, we now define $q_{t}$ as $q_{t}(x_{1},...,x_{k})=\left(
F_{X_{1t}}^{-1}\circ F_{X_{1T}}(x_{1}),...,F_{X_{kt}}^{-1}\circ
F_{X_{kT}}(x_{k})\right)$. Then, we obtain the same point identification
result as before.

\medskip \setcounter{theorem}{0}

\begin{theorem}
\hspace{-0.16cm}{\normalfont \textbf{M}} \, Suppose Assumptions \ref%
{hyp:stationarity}, \ref{hyp:direct_effect}, 3M and \ref{hyp:model} hold.
Then, for all $(t,x)\in \{1,...,T-1\}\times \text{Supp}(X_T)$, the function $%
g_t$ and $\Delta^{ATT}(x,q_t(x))$ and $\Delta^{QTT}(p,x,q_t(x))$ are
identified. \label{thm:point_ident_multi}
\end{theorem}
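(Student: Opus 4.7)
The plan is to mimic the two-step argument of Theorem \ref{thm:point_ident}, adapting the scalar rank $V_t$ and scalar crossing point $x_t^*$ to their vector-valued analogues. Throughout, I adopt the convention that $F_{X_t}(x)$ denotes the componentwise vector of marginal cdfs, $F_{X_t}(x) = (F_{X_{1t}}(x_1),\ldots,F_{X_{kt}}(x_k))$, so that $V_t = F_{X_t}(X_t)$ agrees with the definition in Section \ref{sub:multi}.

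\textbf{Step 1 (identification of $g_t$).} I bundle the coordinate-wise crossing points of Assumption 3M into the vector $x_t^* \equiv (x_{1t}^*,\ldots,x_{kt}^*)$. By construction, $F_{X_t}(x_t^*) = F_{X_T}(x_t^*)$ holds component-by-component. Applying Assumption \ref{hyp:model} componentwise makes each marginal cdf $F_{X_{jt}}$ strictly increasing and continuous, so conditioning on $X_t = x_t^*$ is equivalent to conditioning on $V_t = F_{X_t}(x_t^*)$. I then replicate the chain in \eqref{eq:for_ident_m}: strict monotonicity of $g_t$ (Assumption \ref{hyp:direct_effect}) rewrites $F_{Y_t|X_t}(\cdot \mid x_t^*)$ in terms of the conditional cdf of $U_t(x_t^*) \mid V_t$; Assumption \ref{hyp:stationarity} swaps this for the cdf of $U_T(x_t^*) \mid V_T$; Assumption 3M permits replacing $F_{X_t}(x_t^*)$ by $F_{X_T}(x_t^*)$ in the conditioning; and the normalization $g_T(y)=y$ closes the chain back to $Y_T$. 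The result is the familiar quantile--quantile transform
\[
g_t(y) \;=\; F_{Y_t|X_t}^{-1}\!\left[F_{Y_T|X_T}(y \mid x_t^*) \,\big|\, x_t^*\right].
\]

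\textbf{Step 2 (identification of $\Delta^{ATT}$ and $\Delta^{QTT}$).} With $g_t$ in hand, define $\widetilde{Y}_t = g_t^{-1}(Y_t)$. For each $x \in \text{Supp}(X_T)$, the componentwise definition of $q_t$ yields $F_{X_t}(q_t(x)) = F_{X_T}(x)$, so an individual with $X_t = q_t(x)$ shares the full rank vector of one with $X_T = x$. Running the same chain as in Section \ref{sub:ident} delivers
\[
E[\widetilde{Y}_t \mid X_t = q_t(x)] \;\overset{A.\ref{hyp:stationarity}}{=}\; E[Y_T(q_t(x)) \mid X_T = x],
\]
and an analogous identity for the $p$-th conditional quantile of $\widetilde{Y}_t \mid X_t = q_t(x)$. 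Subtracting $E[Y_T \mid X_T = x]$ and $F_{Y_T|X_T}^{-1}(p \mid x)$ respectively identifies $\Delta^{ATT}(x,q_t(x))$ and $\Delta^{QTT}(p,x,q_t(x))$.

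The one real conceptual subtlety is that $V_t$ records only the \emph{marginal} ranks, so a priori one cannot assume $X_t \leftrightarrow V_t$ is a bijection as in the scalar case. It remains a bijection here, however, because the componentwise strict monotonicity of $F_{X_{jt}}$ (inherited from Assumption \ref{hyp:model} applied coordinate by coordinate) allows each coordinate to be inverted separately. A secondary bookkeeping point is that $x_t^*$ must lie in the joint support of $X_t$ and of $X_T$, which goes slightly beyond the marginal statement in Assumption 3M and should be read together with the natural multivariate extension of the regularity conditions of Assumption \ref{hyp:model}. Once these two items are handled, the argument is exactly the univariate proof rewritten in vector notation, and I do not anticipate any genuinely new analytic difficulty.
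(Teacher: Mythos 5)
Your proposal is correct and is essentially the paper's own (omitted) argument: the paper simply asserts that Theorem M follows as before, i.e., by rerunning the two-step chain of \eqref{eq:for_ident_m} and Section \ref{sub:ident} with the vector rank $V_t=(F_{X_{1t}}(X_{1t}),\ldots,F_{X_{kt}}(X_{kt}))$, the bundled crossing point $x_t^*=(x_{1t}^*,\ldots,x_{kt}^*)$, and the componentwise $q_t$, exactly as you do. Your flagged caveat --- that $x_t^*$ and $q_t(x)$ must lie in the \emph{joint} supports of $X_t$ and $X_T$, which Assumption 3M's purely marginal statement does not by itself guarantee when the copula of $X_t$ varies over time --- is a fair reading of what the multivariate extension of Assumption \ref{hyp:model} must implicitly supply, and is not addressed by the paper either.
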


\setcounter{theorem}{2}

\subsection{Partial Identification of Other Treatment Effects}

\label{sub:partial}

Theorem \ref{thm:point_ident} implies that we can point identify some but
not all average treatment effects $\Delta ^{ATT}(x,x^{\prime })$. Similarly,
we point identify the average marginal effects only at some particular
points. We show in this subsection that with three or more periods of
observation, we can get bounds for many other points under a weak local
curvature condition. Let us consider average marginal effects, for instance.
The idea is that if $x\mapsto U_{T}(x)$ is locally concave (say) and $%
q_{t}(x)<x$, then $[U_{T}(q_{t}(x))-U_{T}(x)]/[q_{t}(x)-x]$ is an upper
bound for $dU_{T}/dx(x)=dY_{T}/dx(x)$. By integration, $\Delta^{ATT}(x,q_{t}(x))/(q_{t}(x)-x)$ is therefore an
upper bound for $\Delta ^{AME}(x)$. Similarly, we obtain a lower bound for $%
\Delta ^{AME}(x)$ if $q_t(x)>x$. Figure illustrates this idea with $T=3$ and
$q_2(x)<x<q_1(x)$. Note the same idea can be used to obtain bounds $\Delta
^{ATT}(x,x^{\prime })$ for $x^{\prime }\not\in \{q_{t}(x),t=2,...,T\}$.

\begin{figure}[]
\begin{center}
\includegraphics[scale=0.8]{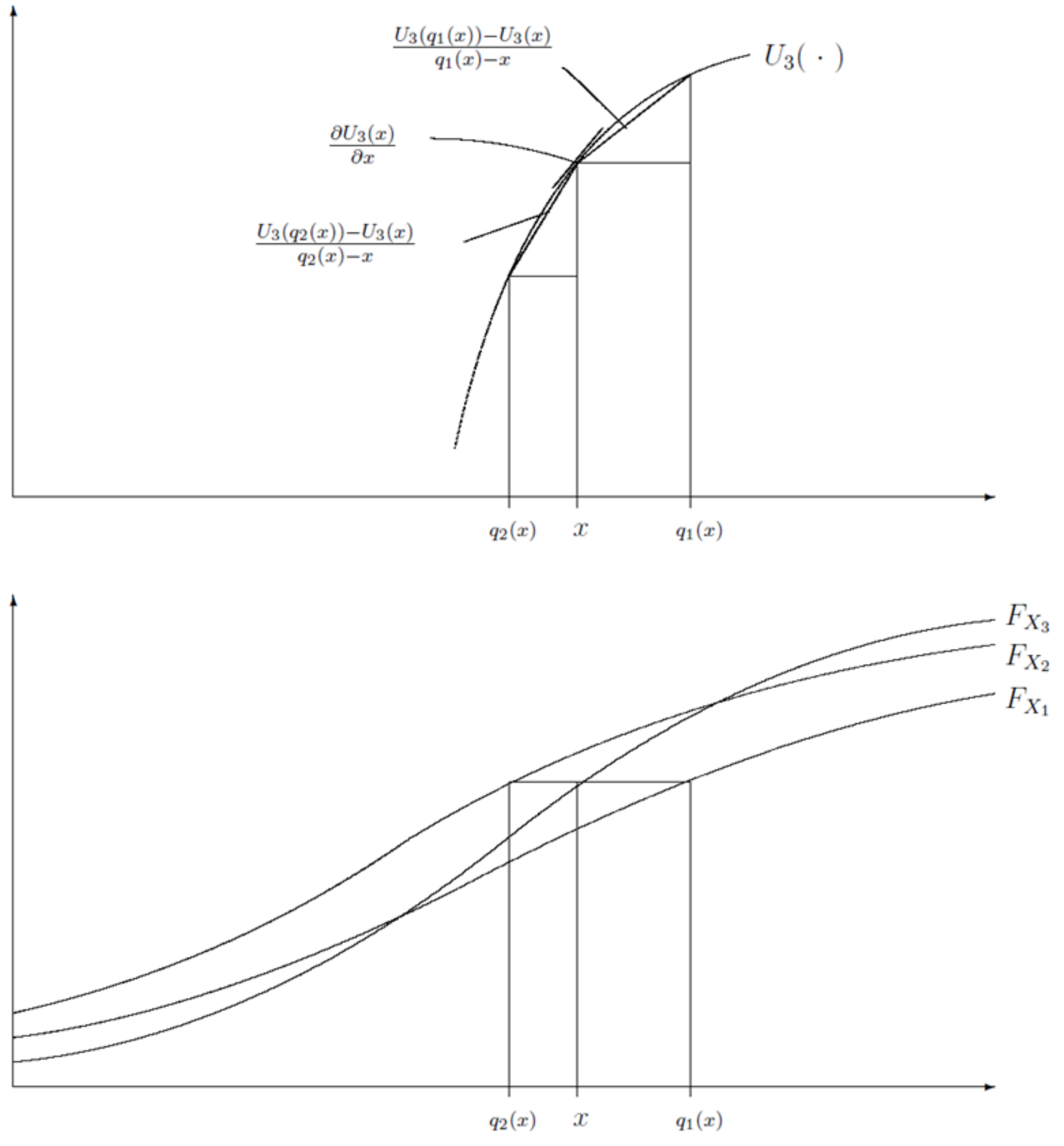}
\end{center}
\par
\vspace{-0.3cm}
\caption{Bounds under the local curvature condition}
\label{fig:bounds}
\end{figure}

The above argument works even if we do not know a priori whether $U_{T}(.)$
is concave or convex. Using the minimum and the maximum of the local
discrete treatment effect will be sufficient to obtain bounds, provided that
$U_{T}(.)$ is locally concave or locally convex around $x$. We therefore
adopt the following definition.

\begin{definition}
$x\mapsto U_T(x)$ is locally concave or convex on $[\widetilde{x},\widetilde{%
x}^{\prime }]$ if, almost surely (a.s.), it is twice differentiable and
\begin{equation*}
\frac{\partial ^2 U_T}{\partial x^2}(x) \leq 0 \; \forall x\in [\widetilde{x}%
,\widetilde{x}^{\prime }] \; \text{a.s. or } \; \frac{\partial ^2 U_T}{%
\partial x^2}(x) \geq 0 \; \forall x\in [\widetilde{x},\widetilde{x}^{\prime
}] \; \text{a.s}.
\end{equation*}
\label{hyp:local_curvature}
\end{definition}

Let us introduce, for all $(x,x^{\prime })\in \text{Supp}(X_{T})$, $(%
\underline{x}_{T}(x^{\prime }),\overline{x}_{T}(x^{\prime }))$ defined by
\begin{eqnarray*}
\underline{x}_{T}(x^{\prime }) &=&\max \{q_{t}(x),t\in
\{1,...,T-1\}:\;q_{t}(x)\neq x\text{ and }q_{t}(x)<x^{\prime }\}, \\
\overline{x}_{T}(x^{\prime }) &=&\min \{q_{t}(x),t\in
\{1,...,T-1\}:\;q_{t}(x)\neq x\text{ and }q_{t}(x)>x^{\prime }\}.
\end{eqnarray*}%
If the sets are empty, we let $\underline{x}_{T}(x^{\prime })=-\infty $ and $%
\overline{x}_{T}(x^{\prime })=+\infty $.

\medskip

\begin{theorem}
Suppose that Assumptions \ref{hyp:stationarity}-\ref{hyp:crossing}
aresatisfied. For any $x<x^{\prime }$, if $U_{T}$ is locally concave or
convex on $[\min (x,\underline{x}_{T}(x^{\prime })),\overline{x}%
_{T}(x^{\prime })]$, then
\begin{align*}
& (x^{\prime }-x)\min \left\{ \frac{\Delta ^{ATT}(x,\underline{x}%
_{T}(x^{\prime }))}{\underline{x}_{T}(x^{\prime })-x},\frac{\Delta ^{ATT}(x,%
\overline{x}_{T}(x^{\prime }))}{\overline{x}_{T}(x^{\prime })-x}\right\}
\leq \Delta ^{ATT}(x,x^{\prime }) \\
\leq & (x^{\prime }-x)\max \left\{ \frac{\Delta ^{ATT}(x,\underline{x}%
_{T}(x^{\prime }))}{\underline{x}_{T}(x^{\prime })-x},\frac{\Delta ^{ATT}(x,%
\overline{x}_{T}(x^{\prime }))}{\overline{x}_{T}(x^{\prime })-x}\right\} .
\end{align*}%
If $U_{T}$ is locally concave or convex on $[\underline{x}_{T}(x),\overline{x%
}_{T}(x)]$, then
\begin{align*}
& \min \left\{ \frac{\Delta ^{ATT}(x,\underline{x}_{T}(x))}{\underline{x}%
_{T}(x)-x},\frac{\Delta ^{ATT}(x,\overline{x}_{T}(x))}{\overline{x}_{T}(x)-x}%
\right\} \leq \Delta ^{AME}(x) \\
\leq & \max \left\{ \frac{\Delta ^{ATT}(x,\underline{x}_{T}(x))}{\underline{x%
}_{T}(x)-x},\frac{\Delta ^{ATT}(x,\overline{x}_{T}(x))}{\overline{x}_{T}(x)-x%
}\right\} .
\end{align*}

The bounds are understood to be infinite when either $\underline{x}%
_T(x^{\prime })=-\infty$ or $\overline{x}_T(x^{\prime })=+\infty$ (whether $%
x^{\prime }>x$ or $x^{\prime }=x$). \label{thm:bounds}
\end{theorem}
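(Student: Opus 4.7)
The plan is to reduce the claim to a pointwise (almost sure) secant-slope inequality for concave or convex functions, then take conditional expectations. The setup is clean because Assumption \ref{hyp:direct_effect} normalizes $g_T(y)=y$, so $Y_T(x)=U_T(x)$, and Theorem \ref{thm:point_ident} hands us, for every $t\in\{1,\dots,T-1\}$, the quantity $\Delta^{ATT}(x,q_t(x))=E[U_T(q_t(x))-U_T(x)\mid X_T=x]$. Writing $a=\underline{x}_T(x')$ and $b=\overline{x}_T(x')$, the identified ratios $\Delta^{ATT}(x,a)/(a-x)$ and $\Delta^{ATT}(x,b)/(b-x)$ are therefore conditional expectations of the random secant slopes $S_a=(U_T(a)-U_T(x))/(a-x)$ and $S_b=(U_T(b)-U_T(x))/(b-x)$ given $X_T=x$. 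The target $\Delta^{ATT}(x,x')/(x'-x)$ is the conditional expectation of the corresponding random secant slope $S_{x'}$.

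The main step is to show that on the event of local concavity (or convexity) of $U_T$ on $[\min(x,a),b]$, one has $S_{x'}\in[\min(S_a,S_b),\max(S_a,S_b)]$ almost surely, after which monotonicity of conditional expectation and multiplication by $x'-x>0$ yields the first pair of bounds. The sign analysis splits into two subcases depending on whether $a<x$ or $x<a<x'$; in both, for a concave realization the standard three-chord inequality on $[\min(x,a),b]$ gives $S_a\ge S_{x'}\ge S_b$ (in the first subcase one uses $(U_T(x)-U_T(a))/(x-a)\ge(U_T(x')-U_T(x))/(x'-x)\ge(U_T(b)-U_T(x))/(b-x)$; in the second, directly that the secant slope from $x$ is decreasing in its right endpoint). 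The convex case is symmetric. Writing $\min$ and $\max$ absorbs both cases uniformly and handles the possibility that $a=-\infty$ or $b=+\infty$, which just yields a vacuous one-sided bound.

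For the AME claim, I would apply the same kind of argument at the derivative level. With $a=\underline{x}_T(x)<x<b=\overline{x}_T(x)$ and $U_T$ a.s.\ twice differentiable and concave (say) on $[a,b]$, the mean value theorem gives $S_a=U_T'(\xi_a)$ and $S_b=U_T'(\xi_b)$ for some $\xi_a\in(a,x)$ and $\xi_b\in(x,b)$, and a.s.\ monotonicity of $U_T'$ yields $S_b\le U_T'(x)\le S_a$. Taking conditional expectations given $X_T=x$ and recognizing $\Delta^{AME}(x)=E[U_T'(x)\mid X_T=x]$ delivers the stated bracket; the convex case again flips the inequalities, so writing $\min$ and $\max$ of the two identified ratios covers both situations, with the infinite-bound convention handling $a=-\infty$ or $b=+\infty$. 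The only delicate point is really bookkeeping the two relative positions of $a$ and $x$ so that the three-chord inequality is applied on the correct interval; once the right interval $[\min(x,a),b]$ is identified, the rest is routine.
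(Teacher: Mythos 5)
Your proposal is correct and follows essentially the same route as the paper's proof: establish the almost-sure secant-slope inequalities $S_b \leq S_{x'} \leq S_a$ under concavity (reversed under convexity), integrate conditionally on $X_T=x$, and let the $\min$/$\max$ absorb the two curvature orientations, which is valid here precisely because Definition \ref{hyp:local_curvature} imposes a common curvature sign across realizations rather than a per-realization mixture. Your only deviation is cosmetic: for the AME bounds you pass through the mean value theorem and monotonicity of $U_T'$, whereas the paper invokes the gradient--secant inequality for concave functions directly; the two are equivalent given the assumed twice differentiability.
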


Both bounds are finite, provided that there exists $t,t^{\prime }$ such that
$q_{t}(x)<x<q_{t^{\prime }}(x)$, which implies that $T\geq 3$. More
generally, the bounds improve with $T$, because $(\underline{x}%
_{T}(x^{\prime }))_{T\in \mathbb{N}}$ and $(\overline{x}_{T}(x^{\prime
}))_{T\in \mathbb{N}}$ are by construction increasing and decreasing,
respectively. Also, the local curvature condition becomes less restrictive as $T$
increases, because the interval on which $U_{T}$ has to satisfy this
condition decreases. This condition is particularly credible if $%
q_{t}(x)\mapsto \Delta (x,q_{t}(x))/(q_{t}(x)-x)$ is monotonic, because such
a pattern is implied by global concavity or global convexity.

\medskip Two other remarks on Theorem \ref{thm:bounds} are in order. First,
we do not establish that the bounds are sharp, though we conjecture that
they are. Second, similar to the point identification results of Theorem \ref%
{thm:point_ident}, the partial identification results of Theorem \ref%
{thm:bounds} can be extended to the multivariate setting. Specifically, we
can use the system of inequalities
\begin{equation*}
U_{T}(q_{t}(x))-U_{t}(x)\geq \frac{\partial U_{T}(x)}{\partial x}^{\prime
}(q_{t}(x)-x),
\end{equation*}%
which hold for all $t=1...T-1$ if $U_{T}$ is locally convex (inequalities
are reverted if $U_{T}$ is locally concave). These inequalities imply some
bounds on $E[\partial U_{T}(x)/\partial x]$. A necessary condition for the
bounds to be finite on each component of $E[\partial U_{T}(x)/\partial x]$
is that $T-1\geq 2\dim (X_{t})$. This condition generalizes the above
restriction $T\geq 3$. It makes intuitive sense that more time periods are
required when the endogenous treatment is multivariate.

\medskip To illustrate Theorem \ref{thm:bounds}, we consider the following
example:
\begin{eqnarray*}
Y_{t} &=&1-\exp (-0.5(\delta _{t}+X_{t}+U_{t})) \\
X_{t} &=&\mu _{t}+\sigma _{t}\Phi ^{-1}(V_{t}),
\end{eqnarray*}%
where $V_{t}\sim U[0,1]$ and $U_{t}|V_{t}\sim \mathcal{N}(V_{t},1)$. We also
suppose that
\begin{eqnarray*}
\mu _T=2.5, &&\mu _{t}\sim \mathcal{N}(\mu _T,1)\;\text{for }t<T, \\
\sigma _T=1, &&\sigma _{t}\sim \chi ^{2}(1)\;\text{for }t<T, \\
\delta _T=0, &&\delta _{t}\sim \mathcal{N}(0,1)\;\text{for }t<T.
\end{eqnarray*}

In this example, Assumptions \ref{hyp:stationarity}, \ref{hyp:direct_effect}
(with $g_{t}(y)=1-\exp (-0.5\delta _{t})(1-y)$) and \ref{hyp:crossing} are
satisfied, the latter because $\sigma _{t}\neq \sigma _{T}$ almost surely.
The local curvature condition also holds, since $u\mapsto 1-\exp (-0.5u)$ is
concave. Figure \ref{fig:bounds_example} displays the bounds on $%
\Delta_{1}^{AME}(x)$ for $T=3,4,5$ and 6. Note that the bounds coincide for $%
T-1$ points. This simply reflects the point identification result of Theorem %
\ref{thm:point_ident_ME}. We also see that in the interval where we get finite bounds, i.e., the
interval for which $-\infty <\underline{x}_{T}(x)<\overline{x}_{T}(x)<\infty$%
, the bounds are quite informative even for $T=3$. Figure \ref%
{fig:bounds_example} also shows that as $T$ increases, both the bounds
shrink and the interval on which we get finite bounds increase. For $T=6$,
we get informative bounds for $x\in \lbrack 1,\;3.85]$, which corresponds
roughly to 85\% of the population. This means that we could also obtain
finite bounds for the average partial effect for this large fraction of the
total population.

\vspace{-0.3cm}
\begin{figure}[]
\begin{minipage}{0.45\textwidth}
\includegraphics[scale=0.35]{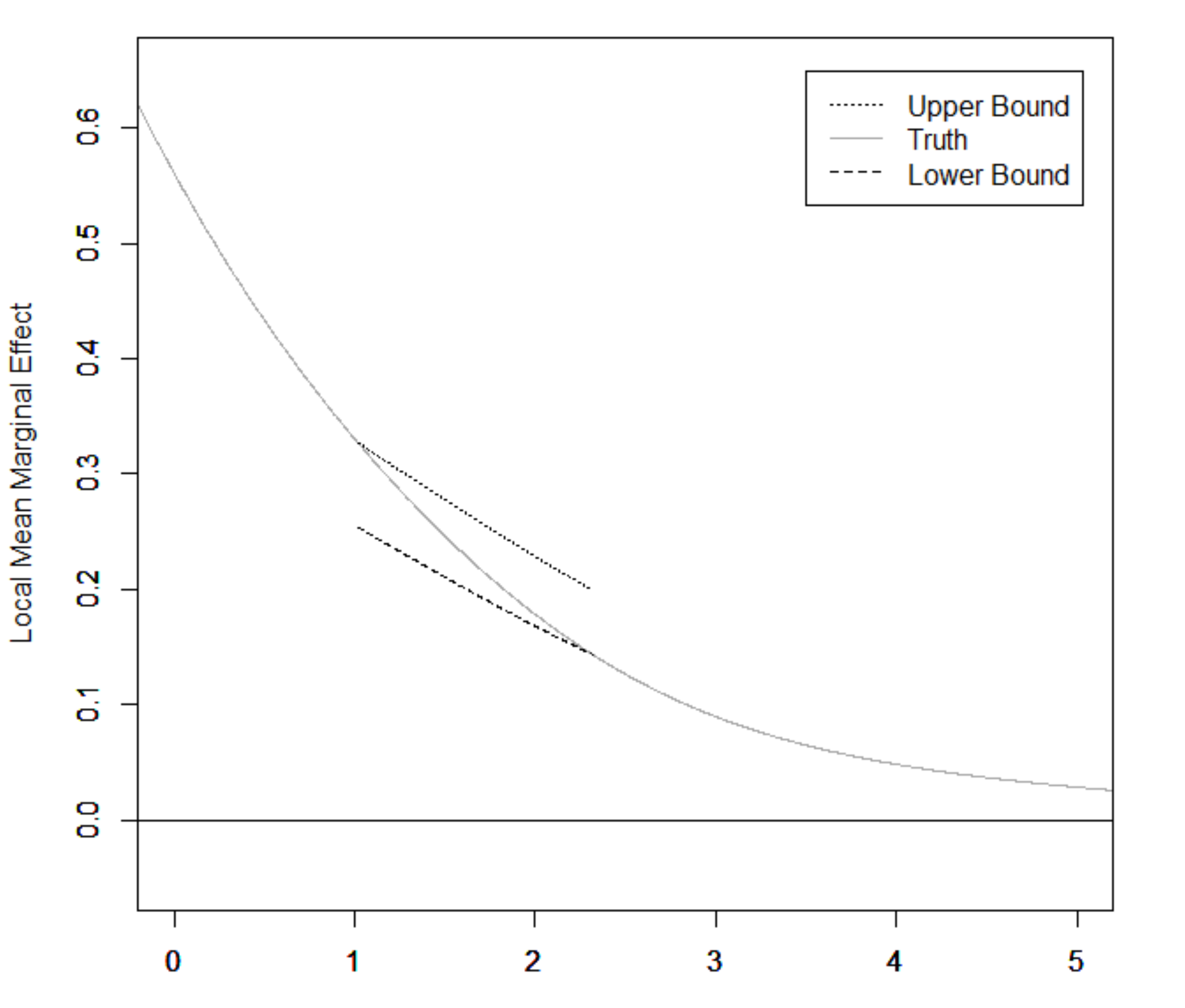}
\begin{center}
~\vspace{-0.3cm}
~\hspace{0.2cm}$T=3$
\end{center}
\includegraphics[scale=0.35]{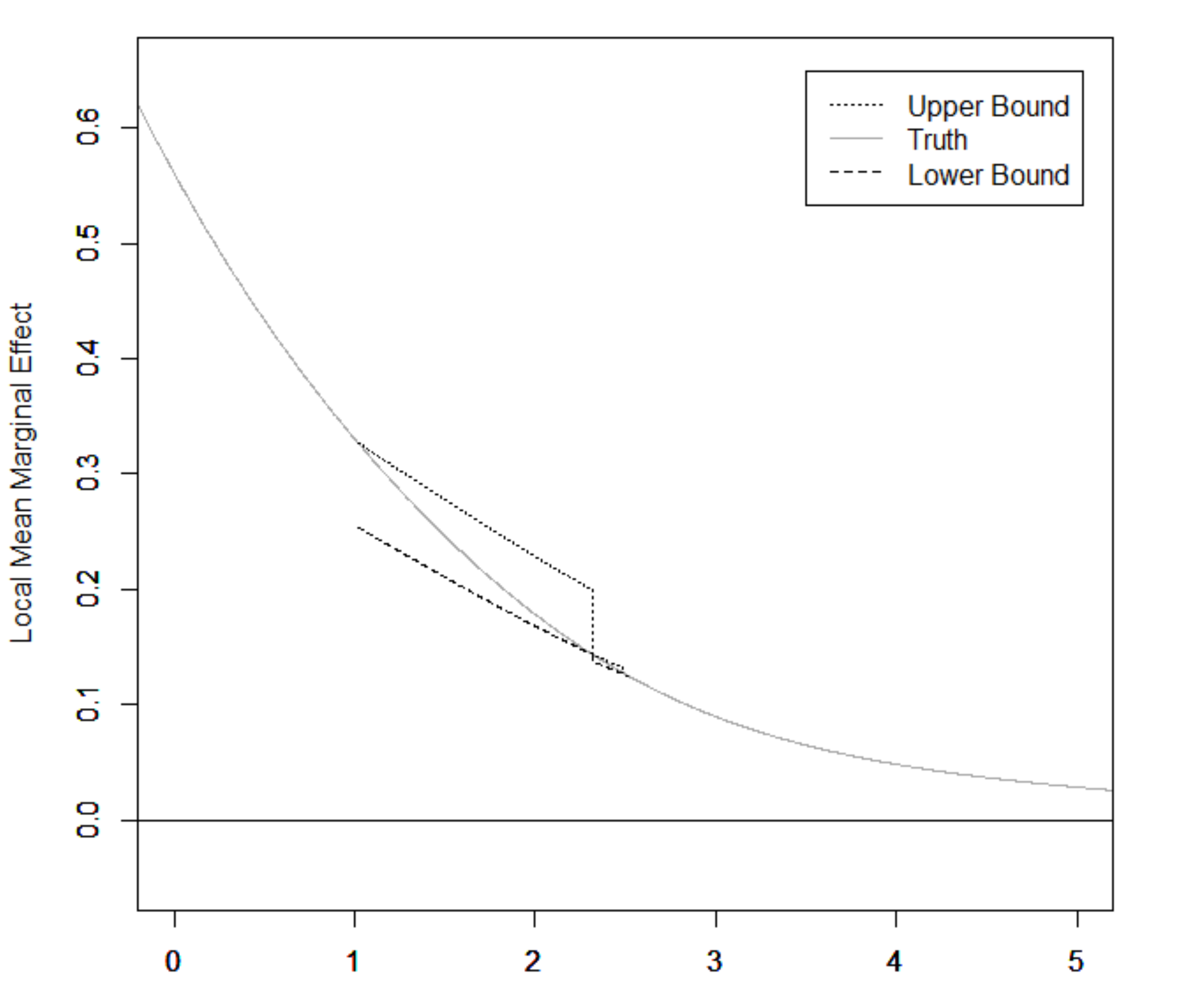}
\begin{center}
\vspace{-0.2cm}
~\hspace{0.2cm} $T=4$
\end{center}
\end{minipage}
\begin{minipage}{0.45\textwidth}
\hspace{1cm} \includegraphics[scale=0.35]{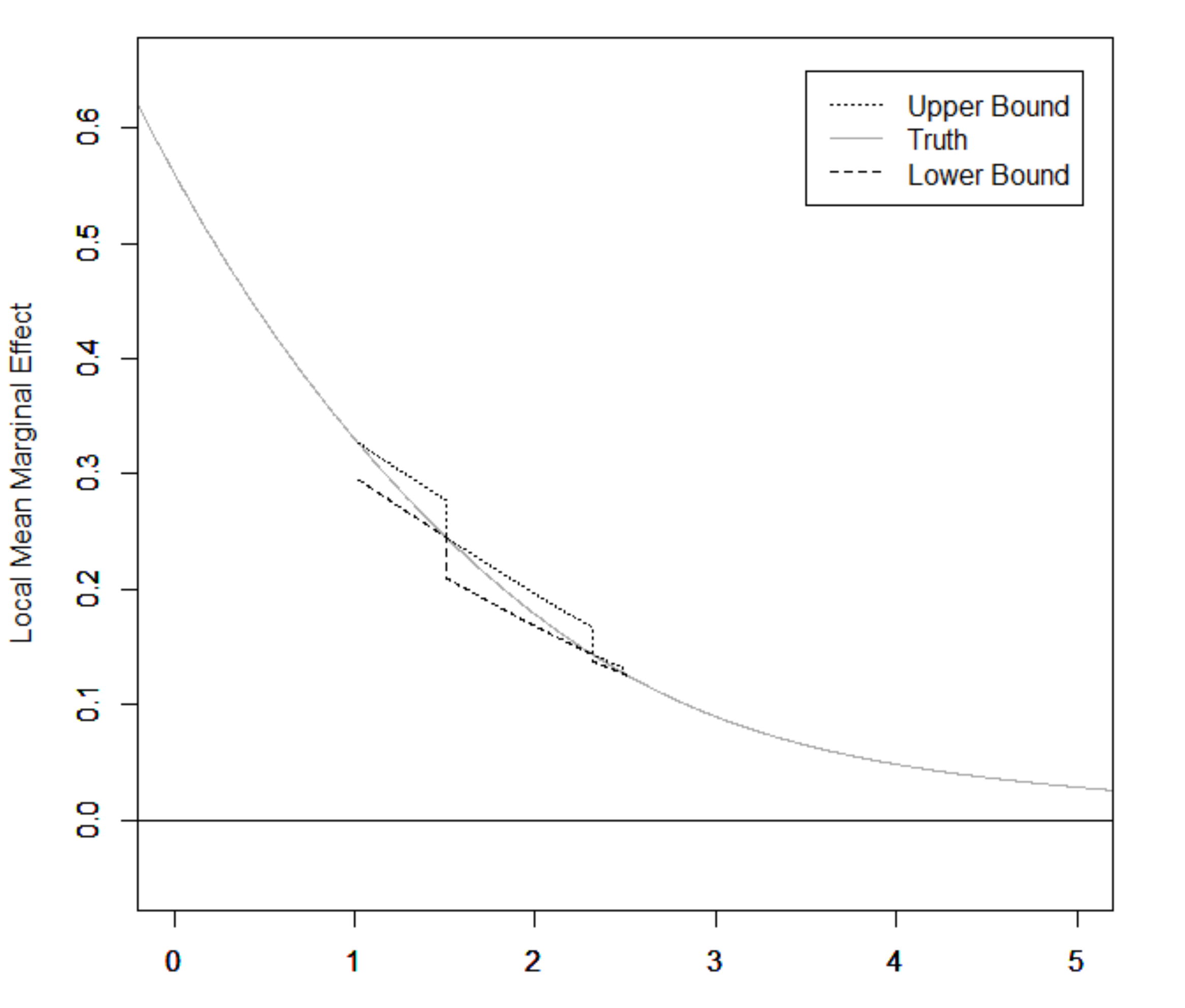}
\begin{center}
~\vspace{-0.3cm}
~\hspace{1.4cm} $T=5$
\end{center}
\hspace{1cm} \includegraphics[scale=0.35]{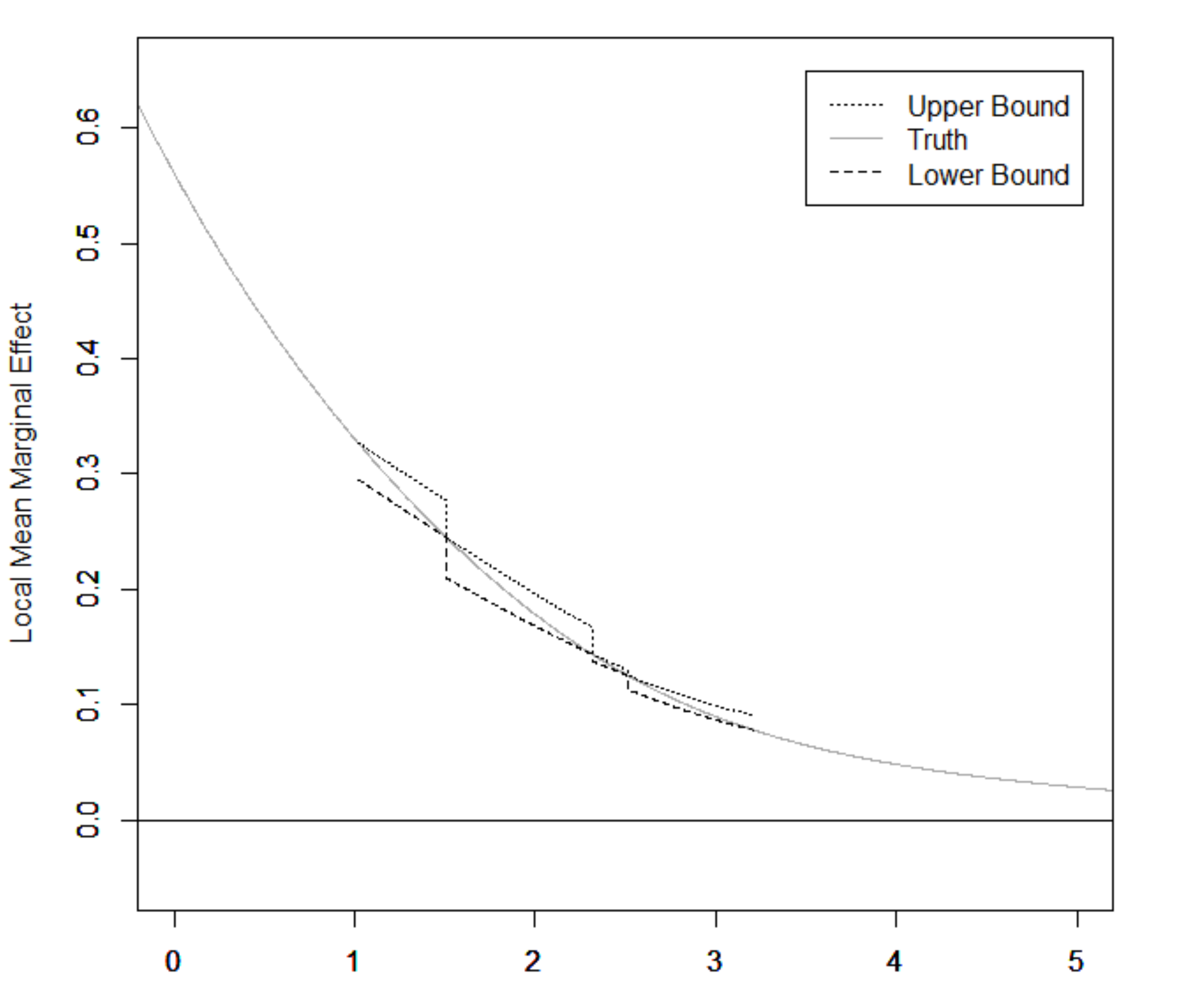}
\begin{center}
\vspace{-0.2cm}
~\hspace{1.4cm} $T=6$
\end{center}
\end{minipage}
\caption{Example of bounds on $\Delta^{AME}(x)$ for different values of $x$
and $T=3, 4, 5$ and $6$.}
\label{fig:bounds_example}
\end{figure}

\subsection{Point Identification with a Correlated Random Coefficient Model}

\label{sub:point_ident}

As we have established in Theorem \ref{thm:point_ident}, we can point
identify several treatment effect parameters under Assumptions \ref%
{hyp:stationarity}-\ref{hyp:crossing}, but these are by no means all
possible causal effects one may be interested in. Many more treatment
parameters can be set identified under often plausible curvature
restrictions, in particular average marginal effects and effects of the kind
$\Delta ^{ATT}(x,x^{\prime })$. However, these bounds may be wide in some
applications, conducting inference on the corresponding parameters may be
cumbersome or even impractical. Hence it makes sense to search for
additional assumptions that yield point identification of average structural
effects over the entire population.

\medskip We suggest here a possible route for extrapolation, based on a
random coefficient linear model of the form:
\begin{equation}
Y_t(x)=\delta _{t}+U_{0t}+x U_{1t}.  \label{eq:random_coeff_model}
\end{equation}%
Therefore, we impose a linear structure on $g_t$ ($g_t(u)=\delta_t+u$ and $%
U_t(x)$ ($U_t(x)=U_{0t}+x U_{1t}$). The model still allows for a rich,
non-scalar heterogeneity pattern through the two unobserved terms $U_{0t}$
and $U_{1t}$. Under this structure, we have, for any $(x,x^{\prime })\in
\text{Supp}(X_T)^2$, $x\neq x^{\prime }$,
\begin{equation}
\frac{\Delta ^{ATT}(x,q_t(x))}{q_t(x)-x} = E\left[ U_{1T}|X_T=x\right] =
\Delta^{AME}(x) = \frac{\Delta ^{ATT}(x,x^{\prime })}{x^{\prime }-x}.
\label{eq:ident_linearity}
\end{equation}
By Theorem \ref{thm:point_ident}, $\Delta ^{ATT}(x,q_t(x))$ is point
identified under Assumptions \ref{hyp:stationarity}-\ref{hyp:model}. This
implies that $\Delta^{AME}(x)$ and $\Delta ^{ATT}(x,x^{\prime })$ are
identified as well, whenever $q_t(x)\neq x$. As a result, the average
marginal effect over the whole population, $\Delta^{AME}=E\left[%
\Delta^{AME}(X_T)\right]$, is also point identified if $q_t(X_T)\neq X_T$
almost surely. We summarize this finding in the following theorem.

\begin{theorem}
Under Assumptions \ref{hyp:stationarity}-\ref{hyp:model} and Equation %
\eqref{eq:random_coeff_model}, for all $t<T$ and $(x,x^{\prime })\in \text{%
Supp}(X_T)^2$, $q_t(x)\neq x$, $(\delta_t)_{t<T}$, $\Delta^{ATT}(x,x^{\prime
})$ and $\Delta^{AME}(x)$ are identified. If $q_t(X_T)\neq X_T$ almost
surely, $\Delta^{AME}$ is point identified as well. \label%
{thm:full_ident_linearity}
\end{theorem}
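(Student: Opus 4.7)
The plan is to reduce everything to the identification already established in Theorem \ref{thm:point_ident} by exploiting the linear structure of the random coefficient model. The argument is essentially a verification that the chain of equalities in \eqref{eq:ident_linearity} holds, followed by invocation of Theorem \ref{thm:point_ident}.

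First, I would note that under \eqref{eq:random_coeff_model} combined with Assumption \ref{hyp:direct_effect}, the general representation $Y_t(x) = g_t(U_t(x))$ specializes to $g_t(u) = \delta_t + u$ and $U_t(x) = U_{0t} + x U_{1t}$, with $\delta_T = 0$ by the normalization $g_T(y) = y$. Theorem \ref{thm:point_ident} identifies $g_t$ for each $t < T$, and since $g_t$ is affine with unit slope, reading off $\delta_t = g_t(0)$ identifies the trend parameters $(\delta_t)_{t<T}$.

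Next, I would verify \eqref{eq:ident_linearity}. A direct computation using \eqref{eq:random_coeff_model} gives
\begin{equation*}
\Delta^{ATT}(x,x') = E[Y_T(x') - Y_T(x) \mid X_T = x] = (x' - x)\, E[U_{1T} \mid X_T = x],
\end{equation*}
so in particular $\Delta^{ATT}(x,x')/(x'-x) = E[U_{1T} \mid X_T = x]$ does not depend on $x'$. Similarly, $\Delta^{AME}(x) = E[\partial Y_T(x)/\partial x \mid X_T = x] = E[U_{1T} \mid X_T = x]$. This gives the full chain in \eqref{eq:ident_linearity}. Applying this with $x' = q_t(x)$, Theorem \ref{thm:point_ident} identifies the numerator $\Delta^{ATT}(x, q_t(x))$, and since $q_t(x) \neq x$ by hypothesis, division yields identification of $E[U_{1T} \mid X_T = x]$, hence of $\Delta^{AME}(x)$ and of $\Delta^{ATT}(x,x')$ for every $x' \in \text{Supp}(X_T)$.

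Finally, for $\Delta^{AME} = E[\Delta^{AME}(X_T)]$, I would invoke the law of iterated expectations: if $q_t(X_T) \neq X_T$ almost surely, then $x \mapsto \Delta^{AME}(x)$ is identified on a set of full $X_T$-probability, so $E[\Delta^{AME}(X_T)]$ is identified by integrating against the (identified) marginal distribution of $X_T$. There is no real obstacle in any of these steps; the substantive content is entirely carried by Theorem \ref{thm:point_ident}, and the only thing the linear random coefficient assumption contributes is the constancy of $\Delta^{ATT}(x,x')/(x'-x)$ in $x'$, which permits extrapolation from the identified point $x' = q_t(x)$ to arbitrary $x'$.
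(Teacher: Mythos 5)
Your proposal is correct and matches the paper's own argument essentially verbatim: the paper also derives the chain \eqref{eq:ident_linearity} directly from the linear structure of \eqref{eq:random_coeff_model}, identifies $\Delta^{ATT}(x,q_t(x))$ via Theorem \ref{thm:point_ident}, divides by $q_t(x)-x\neq 0$ to extrapolate to arbitrary $x'$ and to $\Delta^{AME}(x)$, and integrates over the distribution of $X_T$ for $\Delta^{AME}$ when $q_t(X_T)\neq X_T$ almost surely. Your additional remark that $\delta_t=g_t(0)$ is read off from the identified affine $g_t$ is a correct (and slightly more explicit) treatment of the $(\delta_t)_{t<T}$ claim than the paper provides.
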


Several remarks on this result are in order. First, we recover the same
parameter as \cite{Graham12}, who also consider a random coefficient linear
model similar to \eqref{eq:random_coeff_model}. They obtain identification
with panel data, relying on first-differencing. Compared to them, we rely on
variations in the cdf of $X_{t}$ rather than on individual variations. We
rely on a different, non-nested, restriction on the distribution of the
error term. In particular, for the same individual, $U_{1t}-U_{1s}$ could be
correlated with $X_{t}$ in our framework.

\medskip Second, Theorem \ref{thm:full_ident_linearity} readily extends to a
multivariate treatment, by just replacing the condition $q_t(x)\neq x$ by a
rank condition. Specifically, let, as in Section \ref{sub:multi}, $q_t(x)=
(q_{1t}(x_1),...,q_{kt}(x_k))^{\prime }$ and define the matrix $\mathbf{Q}%
(x) $ by
\begin{equation*}
\mathbf{Q}(x) = \left[%
\begin{array}{c}
(q_1(x)-x)^{\prime } \\
\vdots \\
(q_{T-1}(x) - x)^{\prime }%
\end{array}%
\right].
\end{equation*}
Then $\Delta^{AME}(x)$ and $\Delta^{ATT}(x,x^{\prime })$ are identified if $%
\mathbf{Q}(x)$ is full column rank. Note that the rank condition implies
that $T-1 \geq k$. It also implies that the distribution of $X_t$ differs at
each date, so that $q_s(x) \neq q_t(x)$. It makes sense that with several
endogenous variables, more time variation on $X_t$ is needed to identify
causal effects.

\medskip Third, coming back to the univariate case, Theorem \ref%
{thm:full_ident_linearity} ensures that all parameters of interest are
identified with only two time periods. This suggests that the model can be
either tested or enriched when $T>2$. To see why the linearity assumption is
testable when $T>2$, note that Equation \eqref{eq:ident_linearity} implies
\begin{equation*}
\frac{\Delta ^{ATT}(x,q_{s}(x))}{q_{s}(x)-x}=\frac{\Delta ^{ATT}(x,q_{t}(x))%
}{q_{t}(x)-x}\quad \forall s\neq t,
\end{equation*}
which can be checked in the data. With more than two time periods, we can
also identify treatment effects in the more general random coefficient
polynomial model of order $T-1$:
\begin{equation}
Y_{t}=\delta _{t}+U_{0t}+U_{1t}X_{t}+...+U_{T-1t}X_{t}^{T-1}.
\label{eq:random_polynomial}
\end{equation}%
With the same arguments as above, we recover not only average marginal
effect, but actually $E(U_{kt}|X_{t}=x)$ for all $k=1,...,T$ and all $x$
such that $(x,q_1(x),...,q_{T-1}(x))$ are all distinct. Identification of a
model similar to \eqref{eq:random_polynomial} was studied before by \cite%
{Florens08}, with cross-sectional data and under assumptions that typically
rule out discrete instruments
\citep[see also][for a
study of the identification of Model \eqref{eq:random_coeff_model} with
instruments]{Heckman98}. Here, we rely only on a finite number of time
periods, which would be equivalent to a discrete instrument, and allow for
time trend, which would correspond to a direct effect of the instrument in
\cite{Florens08}.

\medskip Alternatively, we can use additional periods to identify higher
moments of the distribution of the coefficients in the linear model %
\eqref{eq:random_coeff_model}. For instance, with $k=1$, $V(U_{01}|X_{T}=x)$%
, $V(U_{1T}|X_{T}=x)$ and Cov$(U_{01},U_{1T}|X_{T}=x)$ can be shown to be
identified with $T=3$ as soon as $x,q_1(x)$ and $q_2(x)$ are distinct.

\section{Estimation of Average and Quantile Treatment Effects}

\label{sec:estimation}

We consider in this section estimators of the parameters $\Delta^{ATT}(x,
q_t(x))$ and $\Delta^{QTT}(p, x, q_t(x))$ that are shown to be identified in
Theorem \ref{thm:point_ident}. We suppose for that purpose to observe two
independent samples corresponding to the periods $1$ and $T=2$. For
simplicity, we suppose hereafter that the two corresponding sample sizes are
identical.

\begin{assumption}
\label{a:iid} We observe the two independent samples $(Y_{i1},
X_{i1})_{i=1...n}$ and $(Y_{i2}, X_{i2})_{i=1...n}$, which are both i.i.d.
random variables drawn from the distributions $F_{Y_1,X_1}$ and $F_{Y_2,X_2}$%
, respectively.
\end{assumption}

Our estimator follows closely our identification strategy. Let us define
\begin{equation*}
\Psi_n(x) = \widehat{F}_{X_2}(x) - \widehat{F}_{X_1}(x),
\end{equation*}
where $\widehat{F}_{X_2}$ (resp. $\widehat{F}_{X_1}$ ) denotes the empirical
cdf of $X_2$ (resp. $X_1$). We first estimate $x^*_1$ by
\begin{equation}
\widehat{x}^\ast_1 = \min\left\{x \in \left[\widehat{F}_{X_1}^{-1}(%
\underline{p}), \widehat{F}_{X_1}^{-1}(\overline{p})\right]: \,
|\Psi_n(x)|\leq |\Psi_n(x^{\prime })| \; \forall x^{\prime }\in \left[%
\widehat{F}_{X_1}^{-1}(\underline{p}), \widehat{F}_{X_1}^{-1}(\overline{p})%
\right] \right\},  \label{eq:x_hat}
\end{equation}
where $\widehat{F}_{X_1}^{-1}$ denotes the empirical quantile function and $%
0<\underline{p}<\overline{p}<1$ are two given constants used to avoid
reaching the boundaries of the support of $X_1$. Note that the minimum in %
\eqref{eq:x_hat} is well defined because $\Psi_n$ is a right-continuous step
function.

\medskip Next, we estimate $q_{1}(x)=F_{X_{1}}^{-1}\circ F_{X_{2}}(x)$ by
its empirical counterpart $\widehat{q}_{1}(x)=\widehat{F}_{X_{1}}^{-1}\circ
\widehat{F}_{X_{2}}(x)$. We then estimate $g_{1}$ using an empirical
counterpart of \eqref{eq:ident_m}. For that purpose, we estimate the
conditional cdf $F_{Y_{t}|X_{t}}$, for $t\in \{1,2\}$, by
\begin{equation*}
\widehat{F}_{Y_{t}|X_{t}}(y|x)=\frac{\sum_{i=1}^{n}\mathds{1}\{Y_{it}\leq
y\}K\left( \frac{x-X_{it}}{h_{n}}\right) }{\sum_{i=1}^{n}K\left( \frac{%
x-X_{it}}{h_{n}}\right) },
\end{equation*}%
where $K$ is a kernel function and $h_{n}$ denotes the bandwidth. We then
let $\widehat{F}_{Y_{t}|X_{t}}^{-1}(.|x)$ denote the generalized inverse of $%
\widehat{F}_{Y_{t}|X_{t}}(.|x)$. We estimate $g_{1}$ by
\begin{equation*}
\widehat{g}_{1}(y)=\widehat{F}_{Y_{1}|X_{1}}^{-1}\left[ \widehat{F}%
_{Y_{2}|X_{2}}(y|\widehat{x}_{1}^{\ast })|\widehat{x}_{1}^{\ast }\right] .
\end{equation*}

Now, let us recall that $\Delta ^{ATT}(x,q_{1}(x))$ and $\Delta
^{QTT}(p,x,q_{1}(x))$ satisfy, under Assumptions \ref{hyp:stationarity}-\ref%
{hyp:crossing},
\begin{align*}
\Delta ^{ATT}(x,q_{1}(x))& =E[g_{1}(Y_{1})|X_{1}=q_{1}(x)]-E[Y_{2}|X_{2}=x],
\\
\Delta ^{QTT}(p,x,q_{1}(x))&
=F_{g_{1}(Y_{1})|X_{1}}^{-1}(p|q_{1}(x))-F_{Y_{2}|X_{2}}^{-1}(p|x).
\end{align*}%
We then estimate these two parameters by
\begin{align*}
\widehat{\Delta }^{ATT}(x,q_{1}(x))& =\frac{\sum_{i=1}^{n}\widehat{g}%
_{1}^{-1}(Y_{i1})K\left( \frac{x-X_{i1}}{h_{n}}\right) }{\sum_{i=1}^{n}K%
\left( \frac{x-X_{i1}}{h_{n}}\right) }-\frac{\sum_{i=1}^{n}Y_{i2}K\left(
\frac{x-X_{i2}}{h_{n}}\right) }{\sum_{i=1}^{n}K\left( \frac{x-X_{i2}}{h_{n}}%
\right) } \qquad\text{and} \\
\widehat{\Delta }^{QTT}(p,x,q_{1}(x))& =\widehat{F}_{\widehat{g}%
_{1}(Y_{1})|X_{1}}^{-1}(p|\widehat{q}_{1}(x))-\widehat{F}%
_{Y_{2}|X_{2}}^{-1}(p|x).
\end{align*}%
For notational simplicity, we chose here the same kernels and bandwidths for
each nonparametric terms, though we could obviously consider different ones.
We establish below that $\widehat{\Delta }^{ATT}(x,q_{1}(x))$ and $\widehat{%
\Delta }^{QTT}(p,x,q_{1}(x))$ are consistent and asymptotically normal. Our
result is based on the following conditions.

\begin{assumption}
\label{a:asymptotics_xstar} (Conditions for the root-n consistency of $%
\widehat{x}^\ast_1$ and $\widehat{q}_1(x)$) \newline
(i) There exists a unique $x^*_1$ satisfying $F_{X_1}(x^*_1)=F_{X_2}(x^*_1)
\in (0,1)$. Moreover, $F_{X_1}(x^*_1) \in (\underline{p},\overline{p})$.
\newline
(ii) For $t \in \{1,2\}$, $X_t$ admits a continuous density $f_{X_t}$
satisfying, for all $x$ in the interior of $\mathcal{X}$, $f_{X_t}(x)>0$.
Moreover, $f_{X_1}(x^*_1) \neq f_{X_2}(x^*_1)$.
\end{assumption}

\begin{assumption}
\label{a:Y} (Regularity conditions on $(X_t,Y_t)$) \newline
(i) For $t \in \{1,2\}$, $\text{Supp}(X_t,Y_t)= \mathcal{X} \times \mathcal{Y%
}$ with $\mathcal{Y}=[\underline{y},\overline{y}]$ with $-\infty<\underline{y%
}<\overline{y}<+\infty$.\newline
(ii) For $(t,x) \in \{1,2\}\times \mathcal{X}$, $F_{Y_t|X_t}(.|.)$ is
continuously differentiable and $\inf_{y \in \mathcal{Y}} f_{Y_t | X_t}(y |
x) > 0$. \newline
(iii) For all $(t,y) \in \{1,2\}\times \mathcal{Y}$, $F_{Y_t|X_t}(y|.)$ and $%
f_{X_t}$ are twice differentiable. $f_{X_t}$, $|f^{\prime }_{X_t}|$ and $%
|f^{\prime \prime }_{X_t}|$ are bounded. $\sup_{(y,x)\in \mathcal{Y} \times
\mathcal{X}}|\partial_x F_{Y_t|X_t}(y|x)|<\infty$ and $\sup_{(y,x)\in
\mathcal{Y} \times \mathcal{X}}|\partial_{xx} F_{Y_t|X_t}(y|x)|<\infty$.
\end{assumption}

\begin{assumption}
\label{a:kernel_bandwidth} (Conditions on the kernels and bandwidths)
\newline
(i) $nh^3_n/|\log(h_n)| \rightarrow +\infty$, $nh_n^5 \rightarrow 0$.
\newline
(ii) $K$ has a compact support, is differentiable with $K^{\prime }$ of
bounded variation and satisfies $K(y)\geq 0$ for all $y$. Besides, $\int
K(y)dy=1$ and $\int y K(y)dy=0$.
\end{assumption}

Assumption \ref{a:asymptotics_xstar}-(i) strengthens Assumption \ref%
{hyp:crossing} by assuming the uniqueness of the crossing point. We make
this assumption for the sake of simplicity. We could also consider the case
where the set of crossing points is an interval. As discussed in Section \ref%
{ssub:time_trend} above, we would actually expect a parametric rather than a
nonparametric rate of convergence for $\widehat{g}_1$, so Theorem \ref%
{thm:convergence_rate_att} below should still hold in this more favorable
case. Assumption \ref{a:asymptotics_xstar}-(ii) is a mild regularity
condition on $F_{X_{2}}$ and $F_{X_{1}}$. As Lemmas \ref%
{lemma:convergence_rate_xstar} and \ref{lemma:convergence_rate_q_1_x} in
Appendix A show, these two restrictions ensure that $\widehat{x}_{1}^{\ast }$
and $\widehat{q}_{1}(x)$ are root-n consistent. Assumption \ref{a:Y}
provides a set of conditions ensuring that $\widehat{g}_{1}$ is consistent
and asymptotically normal. Conditions (i) and (ii) are also made by \cite%
{Athey06}, without any $X_{t}$ in their case, in another context where
quantile-quantile transforms must be estimated. Condition (iii) is required
as well here because we deal with nonparametric estimators of conditional
cdfs rather than usual empirical cdfs, as \cite{Athey06} do. Finally,
Assumption \ref{a:kernel_bandwidth} is a standard condition on the
bandwidths and the kernels appearing in the nonparametric estimators. We
impose $nh_{n}^{5}\rightarrow 0$ in order to avoid any asymptotic bias on $%
\widehat{\Delta }^{ATT}(x,q_{1}(x))$ and $\widehat{\Delta }%
^{QTT}(p,x,q_{1}(x))$.

\begin{theorem}
\label{thm:convergence_rate_att} Suppose that Assumptions \ref%
{hyp:stationarity}-\ref{hyp:model} and \ref{a:iid}-\ref{a:kernel_bandwidth}
are satisfied. Then, for any $x\in \mathcal{X}$ such that $F_{X_1}$ is
differentiable at $q_1(x)$ with $F_{X_1}^{\prime }(q_1(x))>0$,
\begin{align*}
\sqrt{n h_n}\left(\widehat{\Delta}^{ATT}(x,q_1(x)) -
\Delta^{ATT}(x,q_1(x))\right) & \overset{d}{\longrightarrow} \mathcal{N}%
(0,V_1) \\
\sqrt{n h_n}\left(\widehat{\Delta}^{QTT}(p, x,q_1(x)) - \Delta^{QTT}(p,
x,q_1(x))\right) & \overset{d}{\longrightarrow} \mathcal{N}(0,V_2),
\end{align*}
for some $V_1,V_2$.
\end{theorem}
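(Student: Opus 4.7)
\medskip My plan is to decompose each estimator along the three-step identification strategy: (i) a kernel conditional mean or quantile based on the period-$2$ sample at $X_2=x$, (ii) a kernel conditional mean or quantile based on the period-$1$ sample at $X_1=q_1(x)$ applied to outcomes transformed by $\widehat{g}_1^{-1}$, and (iii) the time-trend functional $\widehat{g}_1$ itself. The $\sqrt{n}$-consistency of $\widehat{x}_1^\ast$ and $\widehat{q}_1(x)$ (Lemmas~\ref{lemma:convergence_rate_xstar} and~\ref{lemma:convergence_rate_q_1_x} in the Appendix), together with $\sqrt{nh_n}=o(\sqrt{n})$ under Assumption~\ref{a:kernel_bandwidth}, will imply that substituting $\widehat{x}_1^\ast$ for $x_1^\ast$ and $\widehat{q}_1(x)$ for $q_1(x)$ in any sufficiently smooth functional is negligible relative to the $(nh_n)^{-1/2}$-rate components, so at leading order I may treat these preliminary estimators as exact. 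The undersmoothing condition $nh_n^5\to 0$ kills the asymptotic bias of all Nadaraya–Watson pieces.

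\medskip For the ATT I will write
\[
\widehat{\Delta}^{ATT}(x,q_1(x))-\Delta^{ATT}(x,q_1(x)) = A_{1n}-A_{2n}+A_{3n},
\]
where $A_{1n}$ is the Nadaraya–Watson error for $E[g_1^{-1}(Y_1)|X_1=q_1(x)]$ using the \emph{true} $g_1^{-1}$, $A_{2n}$ is the Nadaraya–Watson error for $E[Y_2|X_2=x]$, and $A_{3n}$ collects the contribution of replacing $g_1^{-1}$ by $\widehat{g}_1^{-1}$. Standard kernel-regression arguments under Assumption~\ref{a:Y} give $\sqrt{nh_n}A_{1n}$ and $\sqrt{nh_n}A_{2n}$ asymptotically normal; since they are built from independent samples (Assumption~\ref{a:iid}), they are asymptotically independent. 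To handle $A_{3n}$ I will Taylor-expand the inverse,
\[
\widehat{g}_1^{-1}(y)-g_1^{-1}(y) \;\approx\; -\frac{\widehat{g}_1(g_1^{-1}(y))-g_1(g_1^{-1}(y))}{g_1'(g_1^{-1}(y))},
\]
and then derive a Bahadur-type linearization of $\widehat{g}_1(y)-g_1(y)$ by differentiating the composition $F_{Y_1|X_1}^{-1}\circ F_{Y_2|X_2}$ with respect to both cdfs and to the evaluation point $\widehat{x}_1^\ast$. This will reduce $A_{3n}$ to kernel-regression-type errors centered at $x_1^\ast$, each asymptotically normal at rate $\sqrt{nh_n}$.

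\medskip The QTT part follows the same template. Under Assumption~\ref{a:Y}(ii) the kernel conditional quantile estimators admit the usual Bahadur representation, so $\widehat{F}_{Y_2|X_2}^{-1}(p|x)$ and the infeasible $\widehat{F}_{g_1(Y_1)|X_1}^{-1}(p|q_1(x))$ are asymptotically linear in kernel-weighted indicator sums at rate $\sqrt{nh_n}$. The plug-in replacement of $g_1$ by $\widehat{g}_1$ inside the first term is handled via the monotonicity identity
\[
F_{\widehat{g}_1(Y_1)|X_1}^{-1}(p|q_1(x)) \;=\; \widehat{g}_1\bigl(F_{Y_1|X_1}^{-1}(p|q_1(x))\bigr),
\]
so the additional error is just $\widehat{g}_1-g_1$ evaluated at the single known quantile $F_{Y_1|X_1}^{-1}(p|q_1(x))$, which contributes a further $\sqrt{nh_n}$-rate Gaussian term via the same Bahadur expansion for $\widehat{g}_1$ used for the ATT.

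\medskip The main technical obstacle will be the within-sample dependence in period~$1$: $\widehat{g}_1$ and the Nadaraya–Watson averages entering $A_{1n}$ (and the QTT analogue) are built from the same $n$ observations. However, the kernel weights $K((\widehat{x}_1^\ast-\cdot)/h_n)$ entering $\widehat{g}_1$ and $K((\widehat{q}_1(x)-\cdot)/h_n)$ entering $A_{1n}$ have asymptotically disjoint supports whenever $q_1(x)\neq x_1^\ast$, so after linearization the relevant kernel-weighted sums are asymptotically uncorrelated and a joint multivariate CLT applied to them delivers the joint Gaussian limits and pins down $V_1$ and $V_2$. Uniform validity of the Taylor expansion for $\widehat{g}_1^{-1}$ on $\mathcal{Y}$ will rely on uniform consistency of the kernel conditional cdfs at $x_1^\ast$ (via Assumption~\ref{a:kernel_bandwidth}'s rate $nh_n^3/|\log h_n|\to\infty$), together with strict monotonicity of $g_1$ and the bound $\inf f_{Y_t|X_t}>0$ from Assumption~\ref{a:Y}(ii).
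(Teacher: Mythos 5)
Your linearization strategy is sound and would deliver the theorem, but it is a genuinely different route from the paper's. The paper never computes Bahadur representations: it works entirely at the level of conditional-cdf \emph{processes} and the functional delta method. Concretely, Lemma \ref{lemma:conv_F} establishes joint weak convergence of $\sqrt{nh_n}\bigl(\widehat{F}_{Y_2|X_2}(\cdot|x)-F_{Y_2|X_2}(\cdot|x),\,\widehat{F}_{Y_1|X_1}(\cdot|q_1(x))-\cdot,\,\widehat{F}_{Y_2|X_2}(\cdot|x_1^{\ast})-\cdot,\,\widehat{F}_{Y_1|X_1}(\cdot|x_1^{\ast})-\cdot\bigr)$ to a continuous Gaussian process (finite-dimensional convergence by Cram\'er--Wold, tightness via Einmahl--Mason), and Lemma \ref{lemma:had_diff} shows Hadamard differentiability of $(F_1,F_2,F_3)\mapsto F_1\circ F_2^{-1}\circ F_3$; the cdf $H$ of the transformed period-1 outcome given $X_1=q_1(x)$ is exactly such a composition of the four processes, so the ATT (written as $\int_{\underline{y}}^{\overline{y}}[F_{Y_2|X_2}(y|x)-H(y)]\,dy$ by integration by parts) and the QTT (written as $H^{-1}(p)-F_{Y_2|X_2}^{-1}(p|x)$) follow from two more applications of the delta method, with no separate treatment of conditional quantile estimators and no independence argument. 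What the paper's route buys is precisely the two places where your sketch is thinnest. First, your disjoint-kernel-support argument for asymptotic independence of the $\widehat{g}_1$-block and the $q_1(x)$-block requires $q_1(x)\neq x_1^{\ast}$ (equivalently $x\neq x_1^{\ast}$), whereas the theorem is stated for any $x$ with $F_{X_1}'(q_1(x))>0$, including the coincidence point; the joint-process approach covers that case automatically, and your version would need a joint CLT for the then-overlapping localized sums rather than an independence claim. Second, your assertion that plugging in $\widehat{x}_1^{\ast}$ and $\widehat{q}_1(x)$ is negligible "for sufficiently smooth functionals" because $\sqrt{nh_n}=o(\sqrt{n})$ hides a real step: the perturbation acts on the \emph{estimated} conditional cdf, so one must control the derivative of the estimated numerator, $\sup_{x'\in V}\|\partial_x \widehat{w}_t(\cdot,x')\|_{\infty}=O_P(1)$, which is the content of Lemma \ref{lemma:behavior_num} (again via Einmahl--Mason, and exactly what the rate $nh_n^3/|\log h_n|\to\infty$ is for) feeding into Lemma \ref{lemma:no_effect_x}; your closing sentence gestures at this but the Lipschitz control of the random function is the actual work. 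Conversely, your route buys something the paper's does not: explicit influence-function expressions for $V_1$ and $V_2$, which the delta-method proof leaves implicit ("for some $V_1,V_2$"), and your monotonicity identity $F^{-1}_{\widehat{g}_1(Y_1)|X_1}(p|\cdot)=\widehat{g}_1\bigl(F^{-1}_{Y_1|X_1}(p|\cdot)\bigr)$ is a clean device (valid for the nondecreasing step function $\widehat{g}_1$ up to standard generalized-inverse technicalities) that replaces the paper's Hadamard-differentiability lemma for the quantile composition. If you pursue your version, fill in the uniform Bahadur representation for the kernel conditional quantiles and the two caveats above; as a sketch, all the essential obstacles are correctly identified.
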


We do not display the asymptotic variances here, as they involve many terms
due to the multiple compositions of nonparametric estimators -- see Appendix %
\ref{sec:convergence_rate_att} for details as well as a proof. In practice,
we suggest to rely on bootstrap, as we do in the application below. We
conjecture that the bootstrap is consistent in our setting, though a formal
proof of its validity is beyond the scope of this paper. The main issue for
establishing its validity would be to prove the (conditional) weak
convergence of the process
\begin{equation*}
G_{nxt}^* = \sqrt{nh_n}\left(\widehat{F}^*_{Y_t|X_t}(.|x)-\widehat{F}%
_{Y_t|X_t}(.|x)\right), \; t \in \{1,2\},
\end{equation*}
where $\widehat{F}^*_{Y_t|X_t}$ is the bootstrap counterpart of $\widehat{F}%
_{Y_t|X_t}$. Up to our knowledge, such a result is not available in the
literature yet.

\section{Application to the Marginal Propensity to Consume}

\label{sec:application}

In this section we provide an application to a substantive economic
question: The magnitude of the marginal propensity to consume out of current
disposable income. When analyzing this question, we focus in particular on
how results obtained through our approach compare to those obtained in the
literature. In order to facilitate this comparison, we first briefly review
the literature on this question, before explaining the policy experiment we
are using, and detailing the data. We then outline how our methodology is
employed, and finally close by comparing our results with those in the
literature.

\subsection{The Economic Question}

A crucial question for the classical theory of consumption is the marginal
propensity to consume (MPC) out of income. Given its implications for the
business cycle, taxes, and government policy, the importance of the MPC can
hardly be overstated, and thus this quantity was, and still is, at the
center of a very active debate
\citep[see, e.g.,][for an
overview]{jappelli2010consumption}. An upshot of the rational expectations
revolution which, since the seminal paper of \cite{hall1978stochastic},
tried to answer questions about the effect of a marginal change in income on
consumption, is that expectations about the change matter.

\medskip In the absence of liquidity constraints (and precautionary saving
motives at very low income levels), the following is the key insight in the
literature about the effect of a marginal income change on the nondurable
consumption of a rational consumer, see, e.g., \cite{deaton1992understanding}%
: If the income change is anticipated, i.e., not related to new information,
then consumption does not respond to the income change. For an income change
that is not anticipated, if the change is viewed as transitory, then the
rational consumer is predicted to use very little of the income increase
immediately, as the transitory change in income is distributed over the
life-cycle, and its small quantity (relative to life-cycle income) does not
alter fundamentally the trade-off between consumption today and saving for
the future. Conversely, if the income change is expected to be permanent,
the individual is expected to essentially increase her consumption by the
amount of the change. This means that we only observe a substantial change
in consumption in response to an income change, if the change is surprising
and considered to be permanent.

\medskip The empirical evidence on the hypothesis of a rational consumer is
rather mixed, and has spurned an active debate. Perhaps the most problematic
evidence comes from studies involving one time transfers, see e.g., \cite%
{johnson2006household} and \citeauthor{parker2013consumer} (2013, PSJM). In
these studies, consumers are given what is clearly an expected and
transitory income shock (PSJM actually documenting aspects of the Obama era
stimulus package), yet the effect on consumption is not zero. Instead,
typical estimates for the marginal effects of an anticipated income change
range between 15\% and 25\%.

\medskip There are a number of counterarguments in defense of the rational
consumer. First, consumers could be credit constrained. PSJM find indeed
lower responses for older and high-income households, who are less likely to
be constrained. Second, consumers may exhibit a form of bounded rationality.
There are significant costs associated with computing the optimal
consumption path. If an income change is small relative to the level of
income, the benefits from adapting the optimal path in light of the changes
are small relatively to the costs associated with it, and individuals simply
avoid optimizing completely, as they would in the case of a large income
change. Evidence that individuals indeed smooth large anticipated income
changes is provided by \cite{browning2001response} and \cite%
{hsieh2003consumers}, among others. Another counterargument is that some of
the changes considered in the literature are not just small, but also
outside the \textquotedblleft usual\textquotedblright\ consumer experience.
As such, they are not representative of the typical real-world surprise
income shocks individuals deal with (a distinction that is reminiscent to
the question of whether individuals are able to assign probabilities to
these events).

\medskip In this section, we use our econometric method in conjunction with
an experiment involving the Earned Income Tax Credit (EITC) to analyze the
causal effect of increase in income on consumption for households in 1987.
We believe that this natural experiment is very insightful for the above
debate. While it provides exactly the type of variation we require for our
method, it provides (at least for a good number of households) a significant
and anticipated change in their income. Finally, the fact that our procedure
allows for nonlinearities, i.e., for the marginal effect to vary with
income, is going to be crucial to shed light on the question of the
existence of liquidity constraints.

\subsection{Policy Background: The EITC}

\label{sub:policy_background}

In the following, we provide more background on the policy experiment that
provides the exogenous variation: The Earned Income Tax Credit (EITC) is an
income support program which started in 1975 in the United States for the
purpose of mitigating poverty. The EITC provision schedule varies from year
to year, exhibiting interesting non-linearities. This feature of the program
has been used for economic analysis before, e.g., by \cite{dahl2012impact},
and a detailed documentation of the EITC can be found in \cite%
{falk2014earned}. In most of the past years, the change to the EITC schedule
has been monotone to match increasing price levels. However, the change in
the schedules between 1987 and 1989 exhibits a specific pattern which, as we
will now demonstrate, generates a crossing of the cdfs of (deflated) total
income in the respective years.

\medskip Figure \ref{fig:eitc_schedules} displays the EITC schedules in 1987
(solid line) and 1989 (dotted line) in terms of thousands of Year 2000 US
dollars for families with two or more children. Note that for individuals
with income between 9K USD and 10.75K USD, the 1987 EITC provision was
higher than the provision in 1989, whereas the reverse is true for
individuals with income above 10.75K USD. This is exactly the type of
variation which generates a crossing, if everything else is held constant.

\begin{figure}[]
\centering
\includegraphics[width=0.6\textwidth]{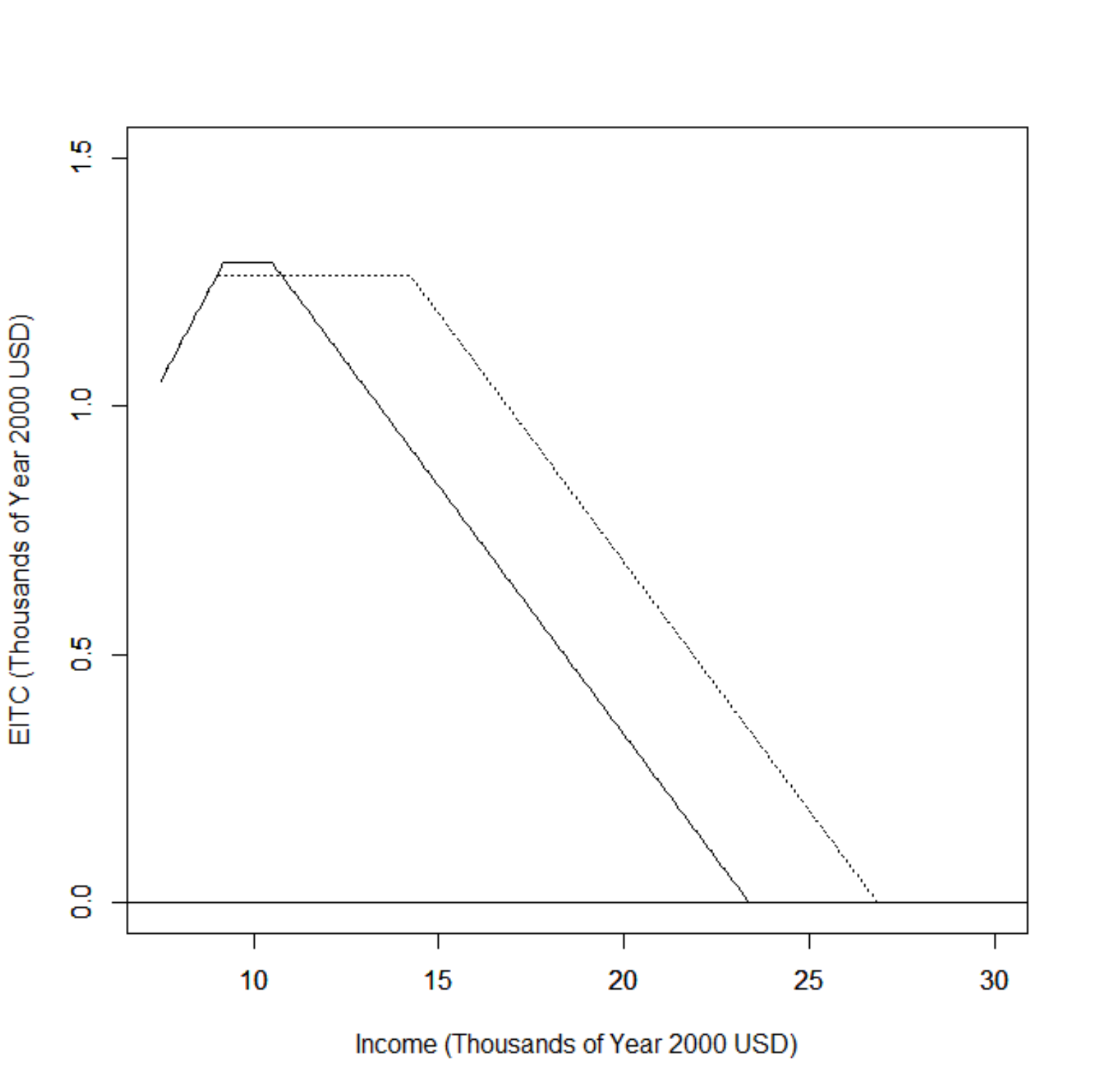}
\begin{minipage}{0.96\textwidth}
{\footnotesize ~\\[2mm]
Notes: amount of the EITC in 1987 (solid line) and 1989 (dotted line), for families with two or more children.}
\end{minipage}
\caption{EITC schedules in 1987 and 1989}
\label{fig:eitc_schedules}
\end{figure}

To see this more precisely, consider the left graph in Figure \ref%
{fig:eitc_plus_income}. The graph shows total income, obtained as the sum of
the pre-aid income and the EITC amount, for each of the years 1987 (solid
line) and 1989 (dotted line) plotted against that of year 1989, i.e., the
solid line is the 45-degree line. The right graph in the same figure (Fig. %
\ref{fig:eitc_plus_income}) focuses on this difference. As these figures
suggest, we expect a crossing at 12K USD, computed as the sum of 10.75K USD
(the cut-off for the change in the schedule) and 1.25K USD for the
corresponding EITC amount, provided that total pre-EITC income does not
change substantially. Note that these figures are solely derived from the
known policy schedules, but we will confirm our expectation with real data
below. Before we detail this, however, we first give an overview of the data.

\begin{figure}[]
\centering
\begin{minipage}{0.48\textwidth}
	\includegraphics[scale=0.55]{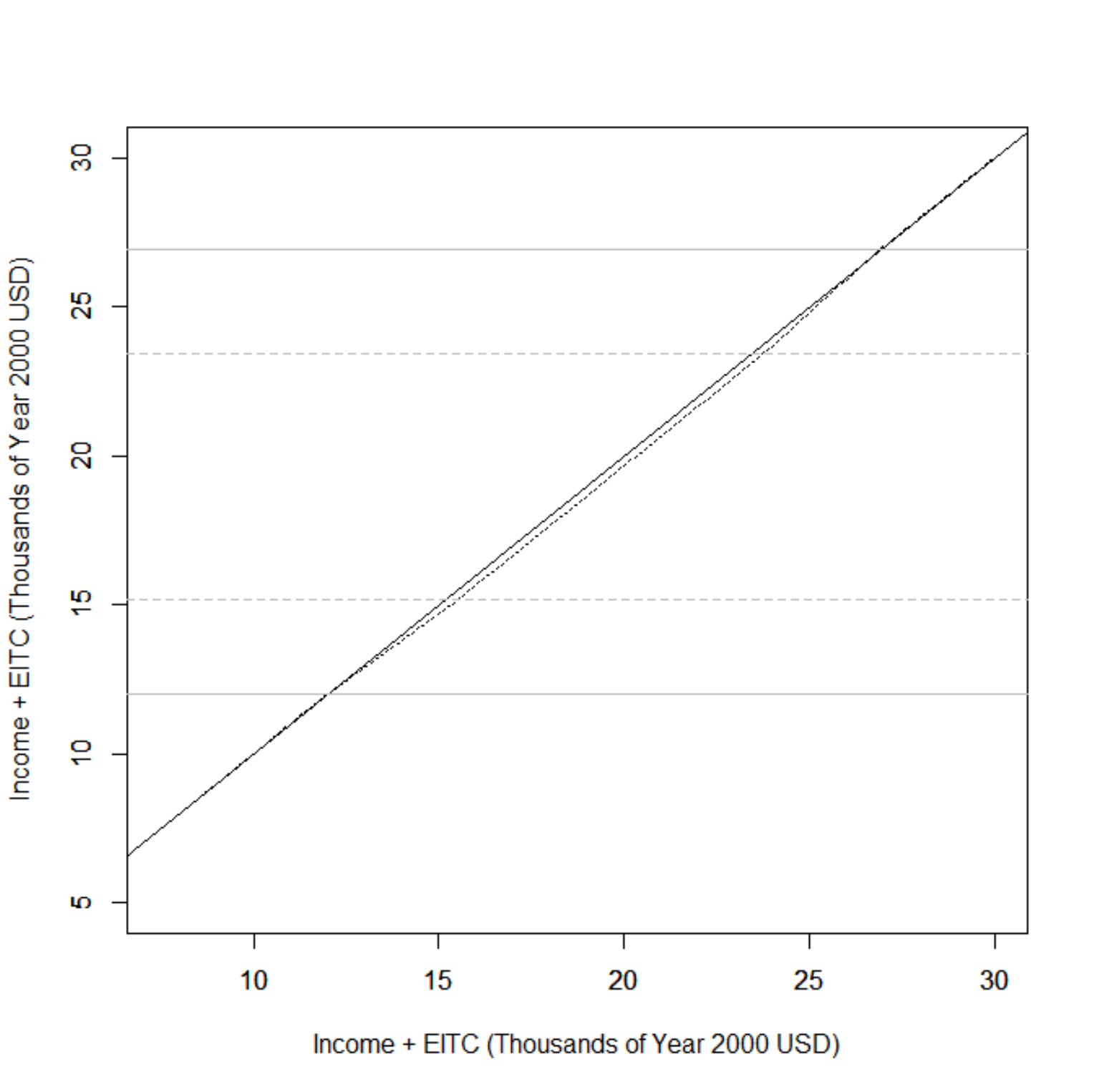}
\end{minipage}
\begin{minipage}{0.48\textwidth}
	\includegraphics[scale=0.55]{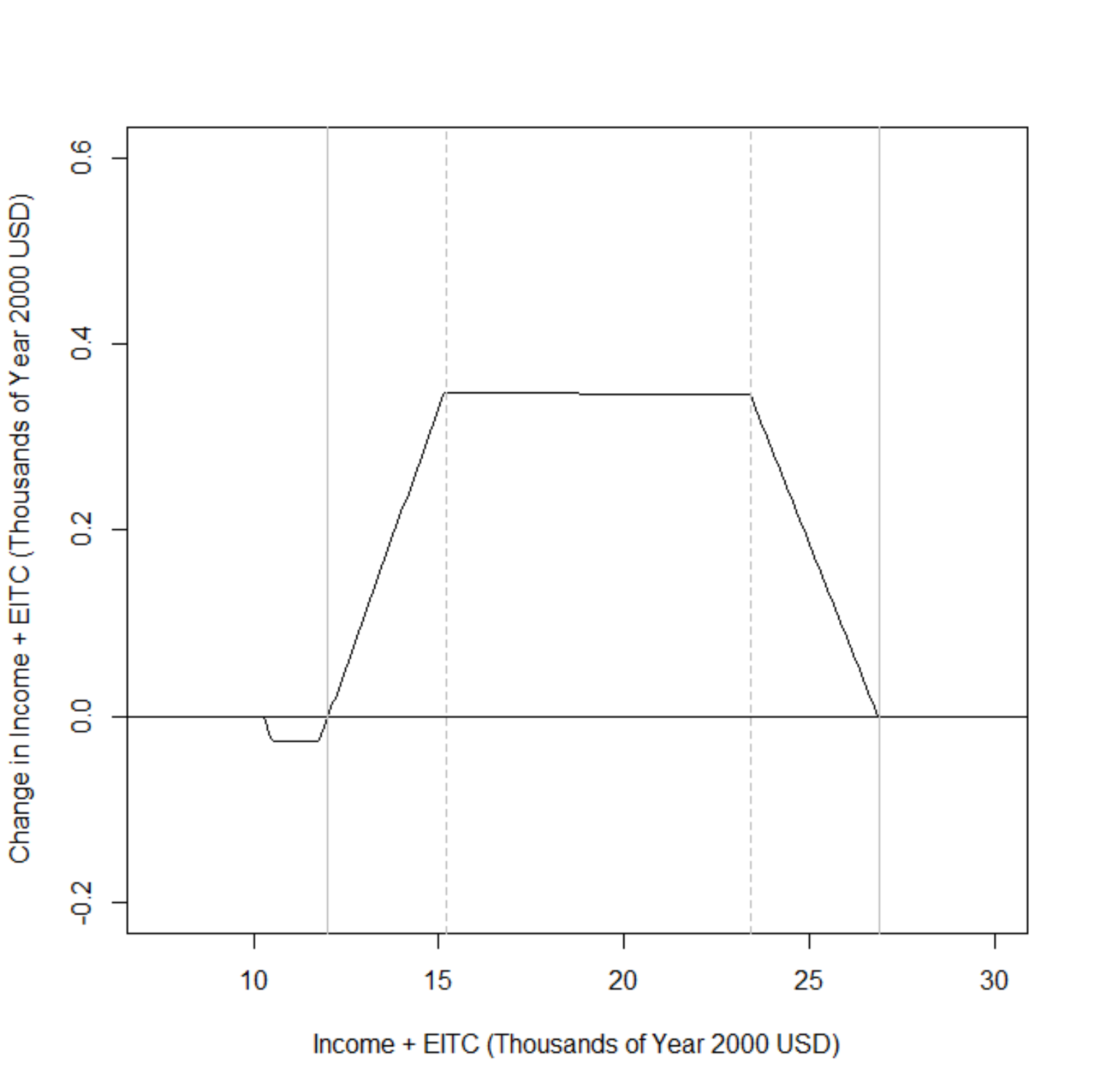}
\end{minipage}	
\begin{minipage}{0.96\textwidth}
{\footnotesize ~\\[2mm]
Notes: left panel: total income, obtained as the sum of the pre-aid income and the EITC amount, for 1987 and 1989 plotted against that of 1989. Right panel: the change in the total income, obtained as the sum of the pre-aid income and the EITC amount, between 1987 and 1989.}
\end{minipage}	
\caption{Theoretical change in total disposable income between 1987 and 1989
due to EITC change.}
\label{fig:eitc_plus_income}
\end{figure}


\subsection{Data: The CEX}

\label{sub:data}

For our analysis, we use repeated cross-sectional data from the Consumer
Expenditure Survey (CEX) for the calendar years 1987 and 1989. The treatment
variable is, more precisely, total disposable family income measured in
thousands of Year 2000 US dollars. The outcome (dependent) variable is
non-durable household consumption, defined as the sum of expenditures for
food at home, apparel, health, entertainment, personal care, and readings,
measured in thousands of Year 2000 US dollars. Since the policy described
above applies only to families with two or more children, we use the
sub-sample of individuals with two or more children. Table \ref{tab:summary_stat} shows summary statistics for our sub-sample.

\medskip Note that after controlling for inflation (i.e., in year 2000
prices), the mean of total family disposable income does not change
substantially between 1987 and 1989 (roughly 2\%). Indeed, this modest
increase from 1987 to 1989 is quite consistent with the EITC policy change
and an otherwise pretty stationary environment, strengthening the case that
we should expect to have the type of variation in cdfs our method requires%
\footnote{%
As a caveat, we remark that not all families take up the aid even if
eligible, and that only a part of the population of families is eligible,
which together accounts for the modest 2\% increase in mean total family
income from 1987 to 1989.}.

\begin{table}[]
\centering%
\scalebox{0.97}{
\begin{tabular}{lccccc}
\hline\hline
& \multicolumn{2}{c}{Thousands of present year USD} &  & \multicolumn{2}{c}{Thousands of Year 2000 USD}
\\ \cline{2-3}\cline{5-6}
Whole Sample
& 1987 & 1989 &  & 1987 & 1989 \\ \hline
Total Family Income
& 27.973 & 31.162 &  & 42.402 & 43.275 \\
&(20.944)&(23.438)&  &(31.747)&(32.548) \\ \hline
CEX Nondurable Consumption
& 9.072 &10.787 &  &13.752 & 14.980 \\
&(6.187)&(8.872)&  &(9.378)&(12.320) \\
\hline
Number of Observations & 4,827 & 4,120 &  & 4,827 & 4,120 \\
\\
Subsample: Total Disposable
& \multicolumn{2}{c}{Thousands of present year USD} &  & \multicolumn{2}{c}{Thousands of Year 2000 USD}
\\ \cline{2-3}\cline{5-6}
Family Income $\in [15.2, 23.4]$
& 1987 & 1989 &  & 1987 & 1989 \\ \hline
CEX Nondurable Consumption
& 6.132  & 7.426 & & 9.296 & 10.312 \\
&(3.606)& (5.655) & &(5.467)& 7.854 \\
\hline
Consumption/Income Ratio
& 0.491 & 0.536 & & 0.491 & 0.536 \\
&(0.289)& (0.386) & &(0.289)& (0.386) \\
\hline
Number of Observations & 559 & 442 & & 559 & 442\\
\hline\hline
\multicolumn{6}{p{470pt}}{\footnotesize Notes: CEX data restricted to
famiies with two or more children for 1987 and 1989. The standard deviations
are indicated in parentheses.}
\end{tabular}
}
\caption{Relevant Summary Statistics of the CEX Data}
\label{tab:summary_stat}
\end{table}

\medskip Turning to our nondurable consumption measure, we first notice that
it only captures a little less than half of disposable income. Within the
subsample that we focus on, the average ratio of nondurable consumption to
total disposable income (which is different than the ratio of averages) is
around 50\%. This may be due to the fact that the large category of rent and
mortgage payments are excluded as are large and durable and nondurable
consumption items (e.g., TVs, cars, phones). However, we also suspect a
certain modest degree of underreporting in the data. Like in the standard
Diff-in-Diff approach, our analysis would be invalidated if the evolution of
this underreporting is systematically different between treatment and
control group. We believe this to be unlikely and certainly have no evidence
of this difference in effects. Moreover, since the overall degree of
underreporting seems to be tolerable as well (e.g., food and clothing
account for a budget share of 50\% in the British FES as well, see Hoderlein
(2011)), we hence proceed with our analysis.

\medskip One thing that stands out is that the nondurable consumption
measure increased more than proportionally to the change in disposable
income in both the sample we focus on (average share increase from 49.1\% to
53.6\%), but also in the population at large. This may be due to changes in
economic outlook and general optimism in 1989 at the end of the cold war.
Because of this observation, we definitely want to include a time trend $%
g_{t}$ in the empirical analysis, as our method warrants. Indeed, our model
identifies an increase in nondurable consumption in particular at higher
levels of the consumption distribution even if our policy experiment would
not have taken place.

\begin{figure}[]
\centering
\begin{minipage}{0.48\textwidth}
	\includegraphics[scale=0.22]{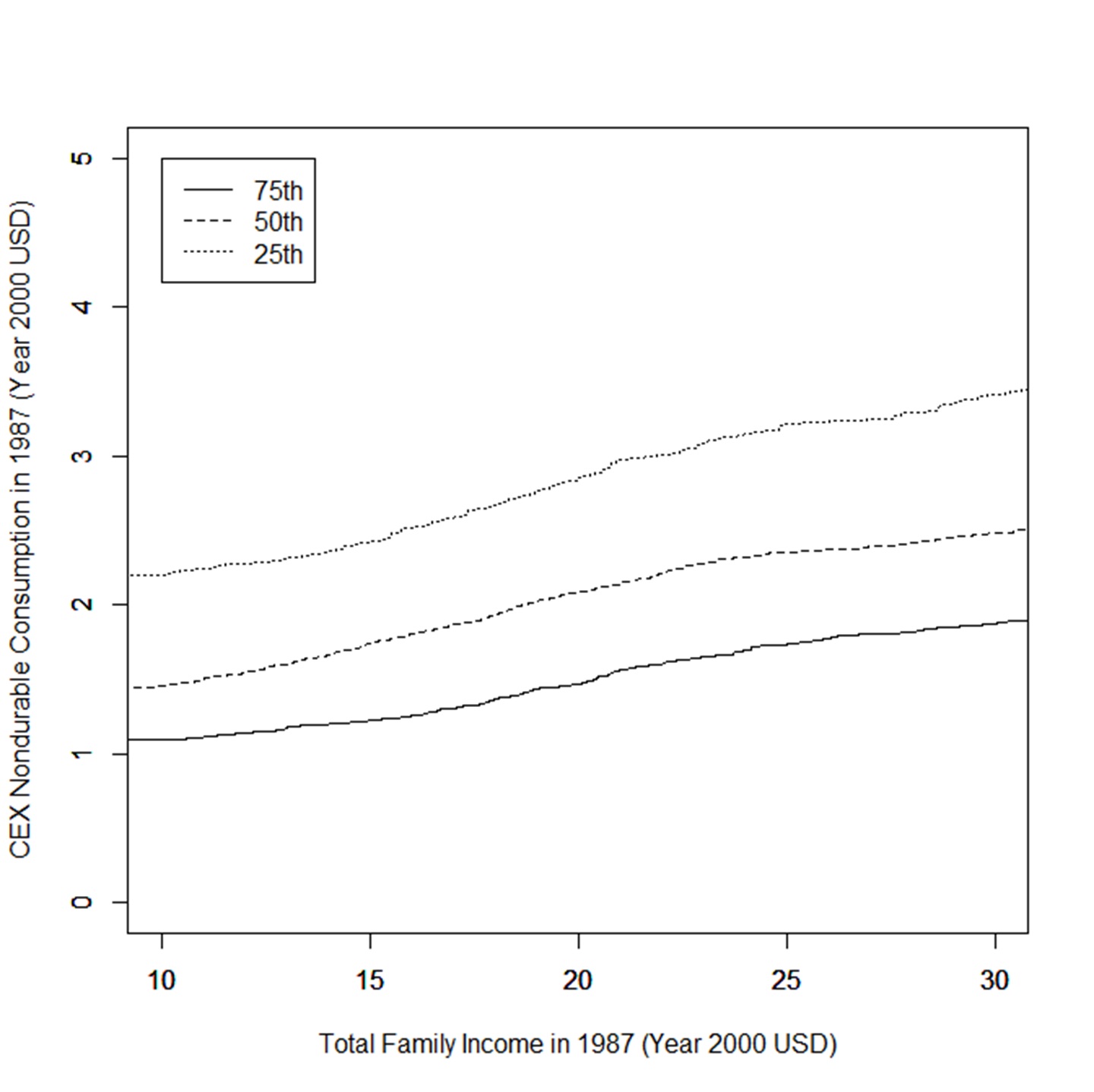}
\end{minipage}
\begin{minipage}{0.48\textwidth}
	\includegraphics[scale=0.22]{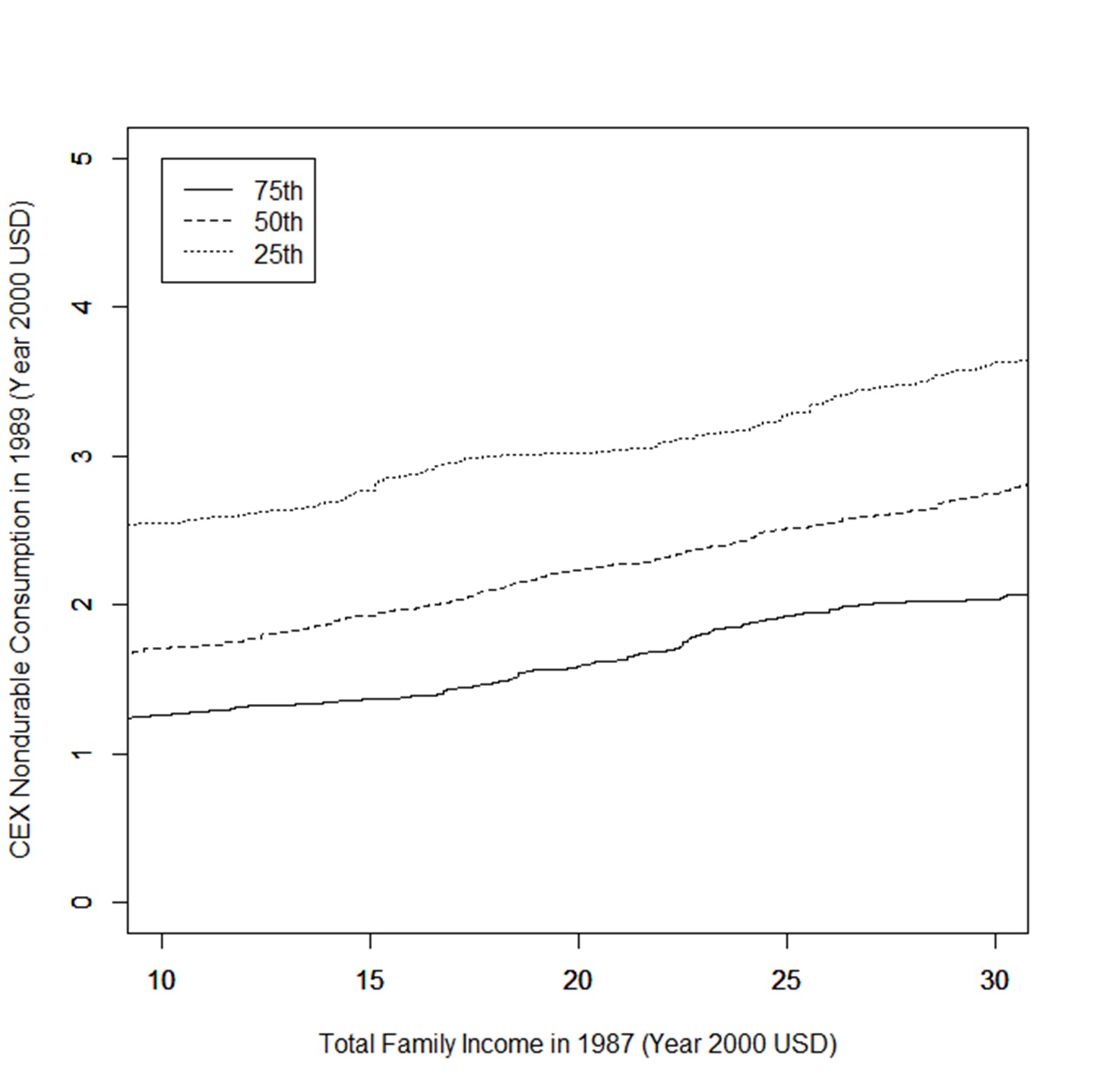}
\end{minipage}	
\begin{minipage}{0.96\textwidth}
{\footnotesize ~\\[2mm]
Notes: CEX data restricted to families with two or more children. Family income is given in thousands of 2000 US dollars.}
\end{minipage}
\caption{Conditional quartiles of $Y_t$ (CEX nondurable consumption) given $%
X_t$ (total family income) in 1987 and 1989}
\label{fig:empirical_conditional quartiles}
\end{figure}


\subsection{Analysis and Results}

\label{sub:analysis}

First, we use our data to confirm that the policy change in the EITC
described above indeed induces a crossing in the cdfs. In particular, we
want to study whether there is a divergence from 12.0 K to 26.8 K USD of
cdfs of total family income between 1987 and 1989. The left panel of Figure %
\ref{fig:empirical_cdfs} displays the two empirical cdfs. The solid vertical
lines indicate the limit points of the range inside which the policy change
matters; these lines correspond to those displayed in Figure \ref%
{fig:eitc_plus_income}. To check that the distributions of income are in
line with this policy change, we made one-sided test of $F_{1}(x)\leq
F_{2}(x)$ for all $x\in \lbrack 12.0K,26.8K]$. We find that at the 5\%
level, $F_{1}(x)>F_{2}(x)$ for at least some $x\in \lbrack 12.0K,26.8K]$. We
take this as strong evidence that the change in the EITC was, at least for
this subpopulation of households, the main driving force in the change of
the empirical cdfs between the two years. Moreover, the direction of the
crossing is what we expect from the design of the policy change: the
families falling within the range where we expect an increase in total
disposable income due to the change in EITC experience a positive change in
total family income between 1987 to 1989.

\medskip Using these two empirical cdfs, we next compute the empirical
quantile-quantile plot of the total family income from 1987 to 1989 in terms
of Year 2000 US dollars. The right panel of Figure \ref{fig:empirical_cdfs}
displays the plot. Observe how well this data-based figure resembles Figure %
\ref{fig:eitc_plus_income}, which is constructed using the policy formulas.
This provides further evidence that the data follows our research design,
and that there are no other major unaccounted sources of change in
disposable income. Recall, moreover, that this quantile-quantile plot, which
is mathematically represented by $q_{1}$ in our framework, is the main
building block for our identification results.

\begin{figure}[]
\centering
\begin{minipage}{0.48\textwidth}
	\includegraphics[scale=0.22]{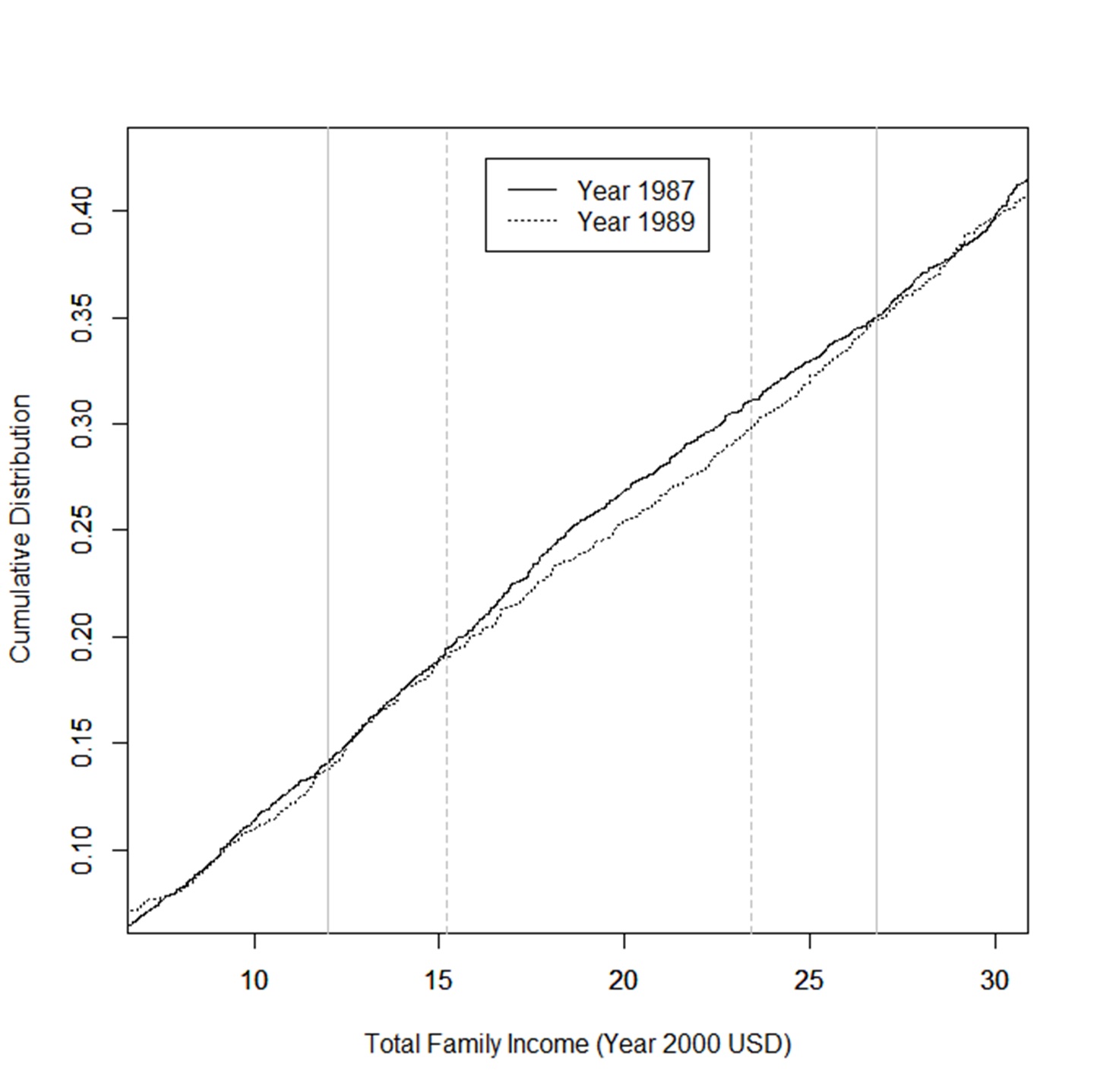} 
\end{minipage}
\begin{minipage}{0.48\textwidth}
	\includegraphics[scale=0.22]{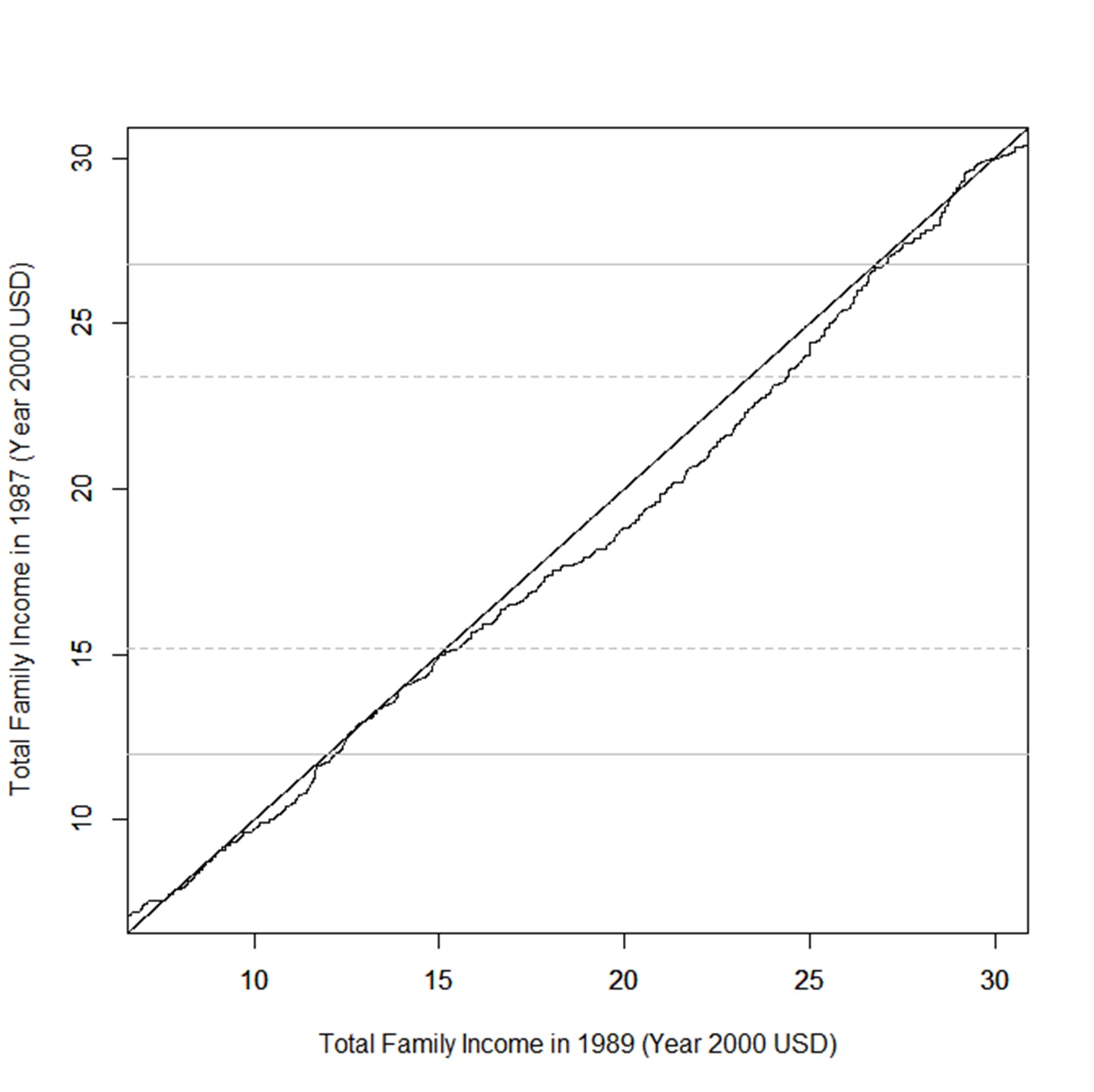} 
\end{minipage}	
\begin{minipage}{0.96\textwidth}
{\footnotesize ~\\[2mm]
Notes: CEX data restricted to families with two or more children. Family income is given in thousands of 2000 US dollars. In the left panel, the black (resp. grey) curve corresponds to family income in 1987 (resp. 1989). In the right panel, we display the Q-Q plot, i.e. $q_1$ against the identity function. The solid lines indicate the theoretical limits inside which we should observe a divergence of the cdfs, given the policy design. The dotted lines are the limits of the interval on which the efffect of the policy is supposed to be maximal (see the right panel of Figure \ref{fig:eitc_plus_income}).}
\end{minipage}
\caption{Cdf's and Q-Q plot of total family income in 1987 and 1989}
\label{fig:empirical_cdfs}
\end{figure}

\medskip After having confirmed that the change in the distribution of the
treatment is in line with our modeling assumption and largely driven by the
policy change, we proceed to use our framework and estimate the time trend $%
g_{1}(.)$. In line with the theoretical design, Figure \ref%
{fig:empirical_cdfs} shows that we have more than a single point $x^{\ast }$
as a control group. We can use the whole set $\mathcal{S}=[10K;12K]\cup
\lbrack 26,8K;50K]$, where the two cdfs overlay. This results in more
precise estimates of $g_{1}(.)$ and marginal effects, because we can use the
whole set $\mathcal{S}$ instead of a single point $x^{\ast }$. Specifically,
we can use $g_{1}(y)=F_{Y_{1}|X_{1}\in \mathcal{S}}^{-1}\left[
F_{Y_{2}|X_{2}\in \mathcal{S}}(y)\right] $ instead of $%
g_{1}(y)=F_{Y_{1}|X_{1}}^{-1}\left[ F_{Y_{2}|X_{2}}(y|x^{\ast })|x^{\ast }%
\right] $. Figure \ref{fig:g_appli} displays the estimate of $g_{1}^{-1}$, which
corresponds to the (heterogeneous) time trend between 1987 and 1989. As
mentioned before, we observe an increase in the upper tail of the
distribution of nondurable consumption, corresponding with an improved
overall economic outlook, in particular for middle and upper class
households.

\begin{figure}[]
\centering\includegraphics[width=0.75\textwidth]{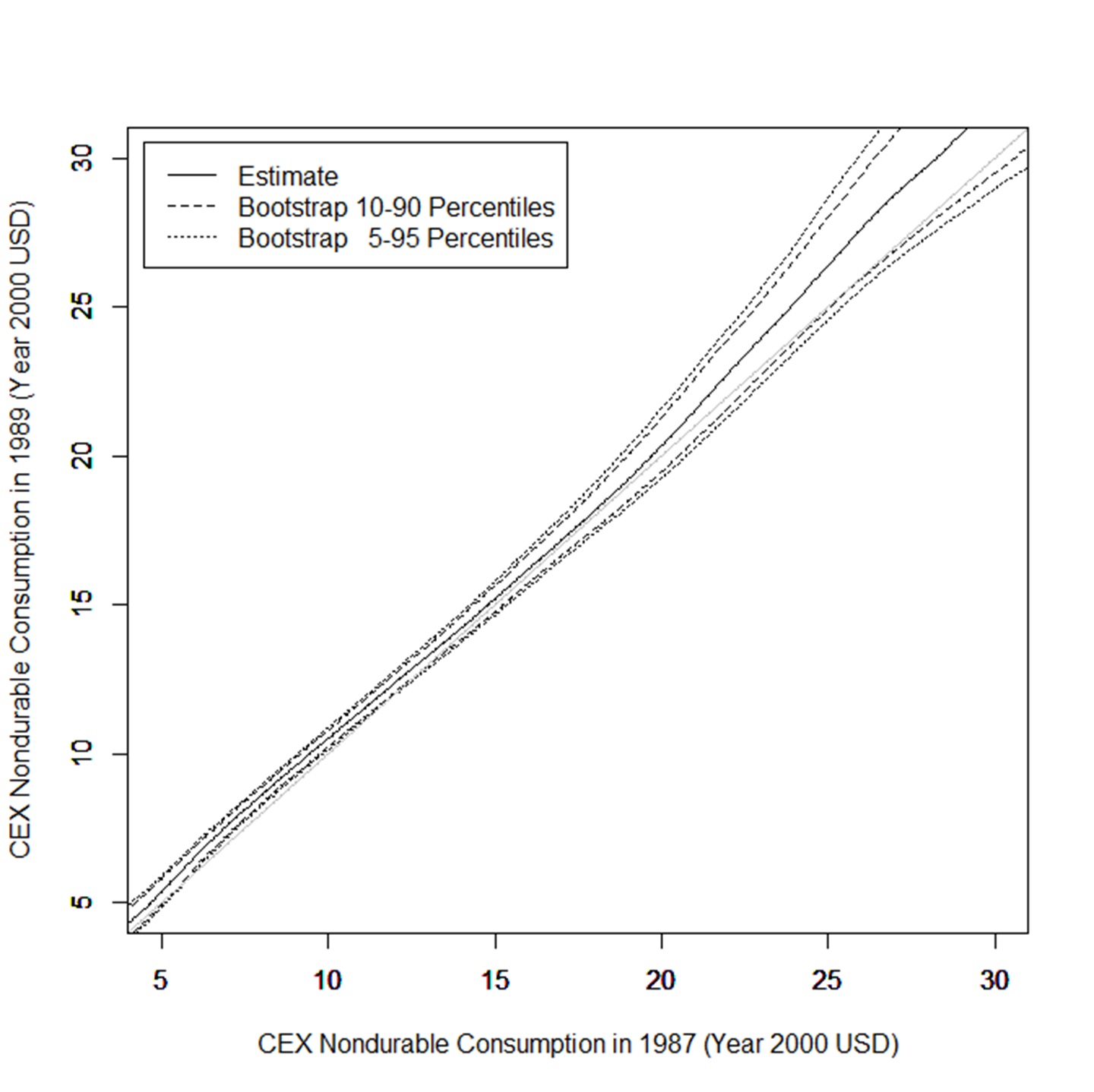}
\begin{minipage}{0.9\textwidth}
{\footnotesize ~\\[2mm]
Notes: CEX data restricted to families with two or more children. Curves are kernel-smoothed. CEX nondurable consumptions are in thousands of year 2000 US dollars.}
\end{minipage}
\caption{Estimate of the time trend function $g_{1}^{-1}$ from 1987 to 1989.}
\label{fig:g_appli}
\end{figure}

To come to the main purpose of this application, we estimate average
marginal effects of the total family income in 1987 in terms of Year 2000 US
dollars on various expenditures in terms of Year 2000 US dollars.
Specifically, we estimate $\Delta _{app}^{AME}(x)=\Delta
^{ATT}(x,q_{1}(x))/(q_{1}(x)-x)$ instead of $\Delta ^{ATT}(x,q_{1}(x))$. The
former quantity has the advantage over the latter of being interpretable as
an average marginal effect. By the mean value theorem (and under mild
regularity conditions), indeed, $\Delta _{app}^{AME}(x)=E\left[ dY_{2}/dx(%
\widetilde{X})|X_{2}=x\right] $, for some random $\widetilde{X}\in \lbrack
q_{1}(x),x]$. To the extent that $q_{1}(x)$ is close to $x$, we then
interpret $\Delta _{app}^{AME}(x)$ as the average marginal effect at $%
X_{2}=x $. Note, on the other hand, that by dividing by $\widehat{q}%
_{1}(x)-x $, the estimator of $\Delta _{app}^{AME}(x)$ is more volatile than
that of $\Delta ^{ATT}(x,q_{1}(x))$, especially when $q_{1}(x)-x$ is close
to zero. To obtain more precise estimates, we rely hereafter on a piecewise
linear estimator of $q_{1}(x)-x$. Such a constrained estimator is consistent
with the policy design and fits well the data. We refer to Appendix \ref%
{sec:details_app} for more details on its construction.

\medskip Figure \ref{fig:ame_estimates} presents the estimated average
marginal effects. The estimates are displayed on the interval $[15.2,23.8]$,
namely the interval on which the EITC policy change is supposed to be
pronounced. We focus on this region because elsewhere the denominator of $%
\Delta _{app}^{AME}(x)$ is either close or equal to zero. The solid line
represents the point estimate of the average marginal effect. Specifically,
the line shows how much out of one dollar increase is spent on our
nondurable consumption bundle. Our results are very much in line with the
literature, with values ranging from 0.5 for disposable income just below
\$16K to virtually zero for incomes above \$22K. Our point estimate also
suggests that the average marginal effect decreases with income. This is in
line with previous findings in the literature, in particular those of PSJM.
Such a pattern is also consistent with rational consumers facing credit
constraints. Indeed, credit constraints are likely to be less severe for
households with higher income, as such consumers are on average more able to
use parts of their wealth as collateral to get new credits more easily.

\begin{figure}[]
\centering
\includegraphics[width=0.6\textwidth]{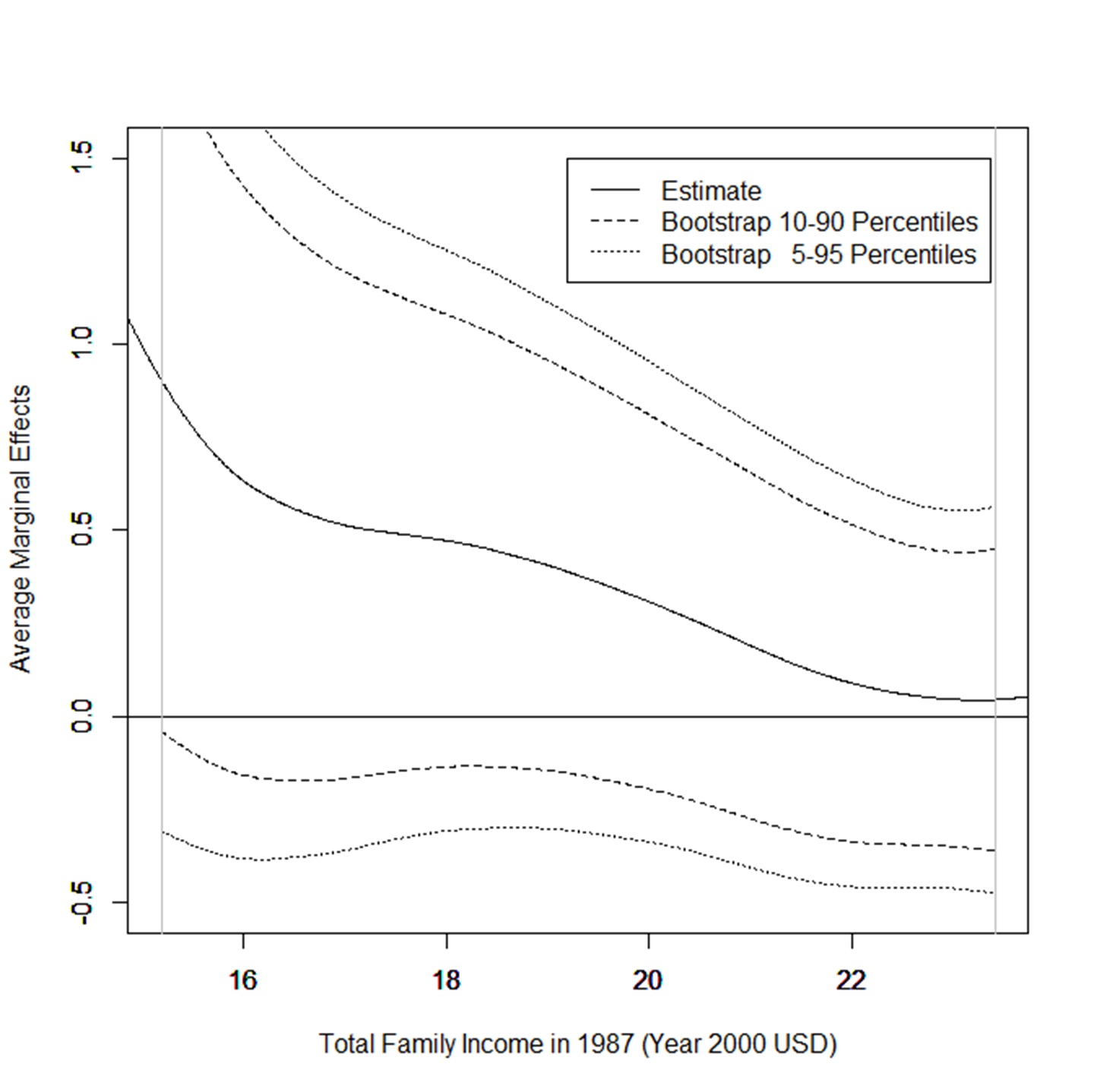}
\begin{minipage}{0.9\textwidth}
{\footnotesize ~\\[2mm]
Notes: CEX data restricted to families with two or more children. Curves are kernel-smoothed.  Income and consumption are in thousands of 2000 USD. The vertical lines indicate the limits of the region with cdf divergence.}
\end{minipage}
\caption{Average marginal effects of total family income on CEX nondurable
consumption}
\label{fig:ame_estimates}
\end{figure}

\medskip Several remarks are in order. The first concerns significance:
While the results for low levels of disposable income are borderline
pointwise significant at the 90\% level, most of the estimated effect is
insignificant (as are results based on 95\% significance). This is in
particular regrettable at income levels around \$ 18K where there is
probably a substantive nonzero effect, but the evidence is slightly too weak
to conclude this with statistical certainty. As already outlined above,
there is significant noise in the data that complicates our analysis and the
instrumental variation used to identify the model is only moderately strong.
Having said that, given the borderline significance at lower income levels,
we are confident that if we were to consider an estimator for the average
marginal effect across the region between \$16K and \$19K we would find a
strongly significant effect, because average derivatives are much more
accurately estimable than pointwise derivatives. Developing such a formal
test is quite involved and thus left for future research. Note that the
monotonically declining shape is very much in line with the literature which
finds the strongest evidence for the failure of intertemporal smoothing at
lower income. While certainly not as precise as we had hoped for, we feel
that our estimates lend support to the recently found evidence of
excessively large effects of an anticipated shock to income.

\medskip The second remark concerns our modeling assumptions. As mentioned
above, the stationarity assumption Assumption \ref{hyp:stationarity}  together with the modeling
assumption limits the degree of unobserved heterogeneity. In particular,
individual households might have heterogeneous preferences both for
consumption and leisure that enter in a complicated fashion resulting in a
multivariate $A_{t}$. While we acknowledge the possibility of these effects
biasing our results, we do not think that they are large in absolute size.
Labor supply of the main breadwinner, especially in families in the 1980s,
has proven to be very inelastic to the degree that wages are frequently used
as an instrument in consumer demand studies, see  \cite{blundell1993we}. This is less true for secondary income (e.g., part time work by the spouse). However, given the relatively small magnitude of the change,
we would be surprised if the effect on labor supply be large (which would be
the main channel for misspecification impacting our estimates). Still, we do
acknowledge that a cautionary remark is in order at this point, also with
respect to our omission of potentially complex dynamics as would arise,
e.g., with habit formation.

\medskip The third remark concerns our omission of observable heterogeneity.While
clearly important, as the paper does not develop the associated theory we
leave this for future research. Having said, note that we work with the
subsample of families with two or more children with at least (and typically
in 1987 also at most) one bread winner of a low income level which is a
fairly homogeneous population. A similar stratification strategy to deal
with observed heterogeneity is very common in the consumer demand literature
\citep[see][for a discussion]{hoderlein2011many}.

\medskip The last remark concerns the magnitude of the effect. Here, it is
instructive to compare the marginal effect with the average expenditure
share of our nondurable consumption measure. This share is roughly equal to
0.5 for the levels of income we consider.\footnote{%
It is also very mildly decreasing with income levels, as one could expect.}
Similarly, our results imply that at a disposable income level of 16.5., the
consumers spend roughly 50 cent out of an additional dollar on nondurable
consumption. This is compatible with a model where low income households,
when receiving an (anticipated) additional dollar of income, consume it
entirely and in roughly equal proportions on our set of nondurable
consumption goods as well as on the remaining (mostly durable) consumption
items. This points clearly to a violation of the hypothesis of rational
consumers. The marginal effect diminishes to near zero for higher income
levels. For incomes lower than 16.5, we find effects that are even larger
than 0.5, meaning that households spend a larger fraction of every
additional dollar on nondurable consumption than its income share. Since
durable consumption is illiquid, we view such an effect as entirely
conceivable, though we want to voice caution given the aforementioned large
level of noise in the data.

\medskip In sum, we interpret our evidence as favoring the recent findings
in the literature that low (disposable) income households spend large parts
of an anticipated and possibly transitory real world shock on consumption.
Conversely, they do not engage in intertemporal smoothing to the degree that
the theory of rational consumer behavior would predict. Again, very much in
parallel to recent findings, we also observe that this effect decreases with
increasing disposable income, meaning that the driver for the higher effects
at low levels is either liquidity constraints or a precautionary savings
motive.

\section{Conclusion}

We consider in this paper an extension of the change-in-change model of \cite%
{Athey06} to continuous treatments. We impose similar restrictions as theirs
on time effect and a crossing condition on the cdfs of the treatment
variable. This crossing condition may be seen as a generalization of the
existence of a control group in both the usual difference-in-difference and
change-in-change settings. Importantly, our framework can allow for
heterogeneous time trends and treatment effects. We show that under these
conditions, some average and quantile treatment effects are point
identified. We propose nonparametric multistep estimators of these treatment
effects and show their asymptotic normality. Finally, we apply our method to
the effect of disposable income on consumption. Our results suggest large
effects for low-income households, in line with recent empirical findings.

\pagebreak
\bibliographystyle{Chicago}
\bibliography{../Revision1/biblio}

\pagebreak \appendix

\begin{center}
{\huge Appendix}
\end{center}

\section{Point Identification of Usual Marginal Effects}

\label{sub:identification_ME}

We have focused in the paper on the effect of changes of the treatment from $%
x$ to $x^{\prime }$. Other popular effects are the following average and
quantile marginal effects:
\begin{align*}
\Delta ^{AME}(x)& \equiv E\left[ \frac{dY_{T}}{dx}(x)|X_{T}=x\right] \qquad%
\text{and} \\
\Delta ^{QME}(p,x)& \equiv \lim_{h\rightarrow 0}\frac{%
F_{Y_{T}(x+h)|X_{T}}^{-1}(p|x)-F_{Y_{T}(x)|X_{T}}^{-1}(p|x)}{h},
\end{align*}%
where we assume that the derivatives exist.

\medskip Intuitively, because the variations induced by time are discrete,
we cannot identify these parameters everywhere unless we impose additional
conditions, as in Section \ref{sub:point_ident} below. On the other hand, if
$x\simeq x^*_t$, $q_t(x)$ is also close to $x$. Then,
\begin{equation*}
\frac{Y_T(q_t(x))-Y_T(x)}{q_t(x)-x}\simeq \frac{\partial Y_T}{\partial x}%
(x_{t}^{\ast }).
\end{equation*}
Moreover, if the conditional distribution of $Y_T(x_t^*)$ is regular,
conditioning on $X_T=x$ becomes the same as conditioning on $X_T=x_{t}^{\ast
}$, so that
\begin{equation*}
\frac{\Delta ^{ATT}(x,q_t(x))}{q_t(x)-x}\simeq \Delta^{AME}(x_{t}^{\ast }).
\end{equation*}
Similarly,
\begin{equation*}
\frac{\Delta^{QTT}(p,x,q_t(x))}{q_t(x)-x}\simeq \Delta^{QME}(p,x_{t}^{\ast
}).
\end{equation*}
Formally, identification of these marginal effects is achieved on the set $%
\mathcal{X}_{0}$ defined by 
\begin{equation*}
\mathcal{X}_{0}=\bigg\{ x\in \mathbb{R}:\exists (t,(x_{n})_{n\in \mathbb{N}%
})\in \{1,...,T-1\}\times \left(\mathbb{R}\right)^{\mathbb{N}%
}:\;q_t(x)=x,\lim_{n\rightarrow \infty }x_{n}=x, \,q_t(x_{n}) \neq x_{n}%
\bigg\}.
\end{equation*}
$\mathcal{X}_{0}$ is the set of points $x$ such that $q_t(x)=x$ for some $%
t=1...T-1$, while $q_t$ is different from the identity function on the
neighborhood of $x$. With $T=2$, $\mathcal{X}_0$ is simply the boundary of
the set of crossing points $\{x: F_{X_1}(x)=F_{X_2}(x)\in(0,1)\}$. We refer
to Figure \ref{fig:Xronde} for an illustration.

\medskip
\begin{figure}[]
\begin{center}
\setlength{\unitlength}{0.5cm}
\begin{picture}(22,10)
  \linethickness{0.1mm}
	\multiput(1,1)(1,0){1}{\vector(0,1){10}}
	\multiput(1,1)(1,0){1}{\vector(1,0){20}}
	\qbezier(1,1.7)(3,2.1)(5,3)
	\qbezier(1,1.5)(3,2.1)(5,3)
	\qbezier(5,3)(7,3.8)(10,6)
	\qbezier(10,6)(14,9.4)(19,9.5)
	\qbezier(10,6)(14,8.9)(19,9.1)
	\qbezier(5,2.4)(5,4)(5,1)
	\qbezier(10,6)(10,6)(10,1)
	\put(19.1,9.5){\footnotesize$F_{X_1}$}
	\put(19.1,8.8){\footnotesize$F_{X_2}$}
	\put(4.8,0.5){\footnotesize$x \in \mathcal{X}_0$}
	\put(9.8,0.5){\footnotesize$x' \in \mathcal{X}_0$}
	\put(7.1,0.5){\footnotesize$x'' \not\in \mathcal{X}_0$}
\end{picture}
\end{center}
\caption{Example of points belonging or not to $\mathcal{X}_{0} =
\{x,x^{\prime }\}$}
\label{fig:Xronde}
\end{figure}

To make the preceding identification argument of marginal effects rigorous,
the following technical conditions are also required.

\begin{assumption}
(Additional regularity conditions) For all $x_{0}\in \mathcal{X}_{0}$, there
exists a neighborhood $\mathcal{V}$ of $x_{0}$ such that: \\[2mm]
(i) The map $x\mapsto U_{T}(x)$ is differentiable on $\mathcal{V}$ (almost
surely), and there exists a random variable $A$ such that $x\mapsto
E[A|X_{T}=x]$ is continuous on $\mathcal{V}$, and for all $(x,x^{\prime
})\in \mathcal{V}^{2}$,
\begin{equation*}
\left\vert \frac{\partial U_{T}}{\partial x}(x^{\prime })-\frac{\partial
U_{T}}{\partial x}(x^{\prime })\right\vert \leq A|x^{\prime }-x|.
\end{equation*}%
(ii) For all $x^{\prime }\in \mathcal{V}$, $x\mapsto E\left[ \partial
U_{T}/\partial x(x^{\prime })|X_{T}=x\right] $ is continuous on $\mathcal{V}$%
. \\[2mm]
(iii) For all $x\in \mathcal{V}$, $x^{\prime }\mapsto F_{U_{T}(x^{\prime
})|X_{T}}^{-1}(p|x)$ is differentiable at $x_{0}$. Moreover,
\begin{equation*}
(x,x^{\prime })\mapsto \lim_{h\rightarrow 0}\frac{F_{U_{T}(x^{\prime
}+h)|X_{T}}^{-1}(p|x)-F_{U_{T}(x^{\prime })|X_{T}}^{-1}(p|x)}{h}
\end{equation*}%
is continuous on $\mathcal{V}^{2}$. \label{hyp:technique}
\end{assumption}

\begin{theorem}
Under Assumptions \ref{hyp:stationarity}- \ref{hyp:technique}, $\Delta
^{AME}(x)$ and $\Delta ^{QME}(p,x)$ are identified, for all $x\in \mathcal{X}%
_{0}$. \label{thm:point_ident_ME}
\end{theorem}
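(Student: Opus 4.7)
Fix $x_{0}\in \mathcal{X}_{0}$. By definition of $\mathcal{X}_{0}$, there exist $t\in\{1,\dots,T-1\}$ and a sequence $(x_{n})_{n\in\mathbb{N}}$ with $x_{n}\to x_{0}$, $q_{t}(x_{0})=x_{0}$, and $q_{t}(x_{n})\neq x_{n}$ for every $n$. By Theorem \ref{thm:point_ident}, $\Delta^{ATT}(x_{n},q_{t}(x_{n}))$ is identified, and the ratio $\Delta^{ATT}(x_{n},q_{t}(x_{n}))/(q_{t}(x_{n})-x_{n})$ is therefore identified for each $n$. The plan is to show that this ratio converges to $\Delta^{AME}(x_{0})$ as $n\to\infty$, thereby identifying $\Delta^{AME}(x_{0})$. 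The same strategy will give $\Delta^{QME}(p,x_{0})$.

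For the AME part, first note that continuity of $F_{X_{t}}$ and $F_{X_{T}}$ (Assumption \ref{hyp:model}) implies continuity of $q_{t}$, so $q_{t}(x_{n})\to q_{t}(x_{0})=x_{0}$. By the normalization $g_{T}(y)=y$ and the mean value theorem applied to $x\mapsto Y_{T}(x)=U_{T}(x)$ (a.s.\ differentiable on a neighborhood $\mathcal{V}$ of $x_{0}$ by Assumption \ref{hyp:technique}(i)), there exists a random $\widetilde{X}_{n}$ lying between $x_{n}$ and $q_{t}(x_{n})$ such that
\begin{equation*}
\frac{Y_{T}(q_{t}(x_{n}))-Y_{T}(x_{n})}{q_{t}(x_{n})-x_{n}}=\frac{\partial U_{T}}{\partial x}(\widetilde{X}_{n}).
\end{equation*}
Taking conditional expectations given $X_{T}=x_{n}$ yields
\begin{equation*}
\frac{\Delta^{ATT}(x_{n},q_{t}(x_{n}))}{q_{t}(x_{n})-x_{n}}=E\!\left[\tfrac{\partial U_{T}}{\partial x}(\widetilde{X}_{n})\,\big|\,X_{T}=x_{n}\right].
\end{equation*}
Now write
\begin{equation*}
E\!\left[\tfrac{\partial U_{T}}{\partial x}(\widetilde{X}_{n})\,\big|\,X_{T}=x_{n}\right]=E\!\left[\tfrac{\partial U_{T}}{\partial x}(x_{0})\,\big|\,X_{T}=x_{n}\right]+E\!\left[\tfrac{\partial U_{T}}{\partial x}(\widetilde{X}_{n})-\tfrac{\partial U_{T}}{\partial x}(x_{0})\,\big|\,X_{T}=x_{n}\right].
\end{equation*}
The first term tends to $E[\partial U_{T}/\partial x(x_{0})|X_{T}=x_{0}]=\Delta^{AME}(x_{0})$ by the continuity in Assumption \ref{hyp:technique}(ii). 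For the second term, use the Lipschitz bound in Assumption \ref{hyp:technique}(i): eventually $\widetilde{X}_{n}\in\mathcal{V}$ (since both $x_{n}$ and $q_{t}(x_{n})$ converge to $x_{0}$), so the remainder is bounded in absolute value by $E[A|X_{T}=x_{n}]\,|\widetilde{X}_{n}-x_{0}|\le E[A|X_{T}=x_{n}]\,(|x_{n}-x_{0}|+|q_{t}(x_{n})-x_{0}|)$, which tends to $0$ by continuity of $x\mapsto E[A|X_{T}=x]$. Combining, $\Delta^{AME}(x_{0})$ is identified as the limit of identified quantities.

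For the QME part, the argument runs in parallel. Theorem \ref{thm:point_ident} (and its proof) actually identifies $F_{U_{T}(q_{t}(x))|X_{T}}^{-1}(p|x)$, so writing
\begin{equation*}
\frac{\Delta^{QTT}(p,x_{n},q_{t}(x_{n}))}{q_{t}(x_{n})-x_{n}}=\frac{F_{U_{T}(q_{t}(x_{n}))|X_{T}}^{-1}(p|x_{n})-F_{U_{T}(x_{n})|X_{T}}^{-1}(p|x_{n})}{q_{t}(x_{n})-x_{n}}
\end{equation*}
and using the mean value theorem in the numerator with respect to the differentiable map $x'\mapsto F_{U_{T}(x')|X_{T}}^{-1}(p|x_{n})$ (Assumption \ref{hyp:technique}(iii)), one obtains an evaluation at some intermediate point $\widetilde{x}_{n}$. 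The joint continuity in Assumption \ref{hyp:technique}(iii) then lets us send $x_{n},\widetilde{x}_{n}\to x_{0}$ to recover $\Delta^{QME}(p,x_{0})$.

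The main obstacle I anticipate is purely technical: carefully justifying that $\widetilde{X}_{n}$ (respectively $\widetilde{x}_{n}$) eventually lives in the neighborhood $\mathcal{V}$ where the regularity conditions apply, and controlling the remainder uniformly so that the interchange of the limit in $n$ with the conditional expectation is legitimate. The Lipschitz condition in Assumption \ref{hyp:technique}(i) is tailor-made for this, reducing the problem to continuity of $x\mapsto E[A|X_{T}=x]$, which is assumed. Beyond that the argument is essentially a continuity-of-derivative passage from secants at nearby off-crossing points to the derivative at the crossing point itself.
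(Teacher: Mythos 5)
Your proposal is correct and follows essentially the same route as the paper's proof: the mean value theorem applied to $x\mapsto U_T(x)$ (resp.\ to $x'\mapsto F_{U_T(x')|X_T}^{-1}(p|x_n)$) along a sequence $x_n\to x_0$ with $q_t(x_n)\neq x_n$, a decomposition of $E\left[\partial U_T/\partial x(\widetilde{X}_n)\,\middle|\,X_T=x_n\right]$ handled by the Lipschitz bound of Assumption \ref{hyp:technique}-(i) together with continuity of $x\mapsto E[A|X_T=x]$, and the continuity conditions (ii) and (iii) to pass to the limit. Your two-term split versus the paper's three-term split is only a bookkeeping difference, and your explicit justification that $\widetilde{X}_n$ eventually lies in $\mathcal{V}$ matches the paper's without-loss-of-generality step.
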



\section{Proofs}

\subsection{Theorem \protect\ref{thm:point_ident_ME}}

Consider a sequence $(x_n)_{n\in\mathbb{N}}$ converging to $x\in\mathcal{X}%
_0 $ and such that $q_{t}(x_{n}) \neq x_{n}$. Let us assume without loss of
generality that $x_n\in \mathcal{V}$ for all $n\in \mathbb{N}$. Given that $%
q_t$ is continuous and $q_t(x)=x$, we can also assume without loss of
generality that $q_t(x_n)\in \mathcal{V}$ for all $n\in \mathbb{N}$.

\medskip Now, by the mean value theorem, there exists a random variable $%
\widetilde{X}_n$ between $x_n$ and $q_{t}(x_{n})$ such that
\begin{equation*}
\frac{U_T(q_t(x_n))-U_T(x_n)}{q_t(x_n)-x_n}=\frac{\partial U_T}{\partial x}(%
\widetilde{X}_n).
\end{equation*}
Hence,
\begin{align}
\frac{\Delta^{ATT}(x_n,q_t(x_n))}{q_{t}(x_n)-x_{n}} = & E\left(\frac{%
\partial U_T}{\partial x}(x)\bigg|X_T=x\right) + \left[E\left(\frac{\partial
U_T}{\partial x}(x)\bigg|X_T=x_n\right) - E\left(\frac{\partial U_T}{%
\partial x}(x)\bigg|X_T=x\right)\right]  \notag \\
& + E\left(\frac{\partial U_T}{\partial x}(\widetilde{X}_n) - \frac{\partial
U_T}{\partial x}(x) \bigg|X_T=x_n\right).  \label{eq:decomp_for_AME}
\end{align}
The term into brackets tends to zero by Assumption \ref{hyp:technique}-(ii).
Moreover, by Assumption \ref{hyp:technique}-(i),
\begin{equation*}
\left|E\left(\frac{\partial U_T}{\partial x}(\widetilde{X}_n) - \frac{%
\partial U_T}{\partial x}(x) \bigg|X_T=x_n\right)\right| \leq
\max\left(|x_n-x|,|q_t(x_n)-x|\right)\sup_{n \in \mathbb{N}} E[A|X_T=x_n].
\end{equation*}
Given that $x \mapsto E[A|X_T=x]$ is continuous, the supremum on the
right-hand side is finite. Therefore, this right-hand side tends to zero.
Hence, in view of \eqref{eq:decomp_for_AME},
\begin{equation*}
\lim_{n\rightarrow\infty} \frac{\Delta^{ATT}(x_n,q_t(x_n))}{q_{t}(x_n)-x_{n}}
= \Delta^{AME}(x),
\end{equation*}
and this latter is identified by Theorem \ref{thm:point_ident}.

\medskip Let us turn to $\Delta^{QME}(p,x)$. By the mean value theorem,
there exists a random variable $\widetilde{X}^{\prime }_n$ between $x_{n}$
and $q_t(x_{n})$ such that
\begin{eqnarray*}
\frac{\Delta^{QTT}(p,x_n,q_t(x_n))}{q_t(x_{n})-x_{n}} &=& \frac{%
F_{U_T(q_t(x_n))|X_T}^{-1}(p|x_n)-F_{U_T(x_n)|X_T}^{-1} (p|x_n)}{%
q_t(x_{n})-x_{n}} \\
& =&\frac{\partial F^{-1}_{U_T(x^{\prime })|X_T}(p|x_n)}{\partial x^{\prime }%
}{}_{|x^{\prime }=\widetilde{X}^{\prime }_n}.
\end{eqnarray*}
By Assumption \ref{hyp:technique}-(iii), the last derivative converges to
\begin{equation*}
\frac{\partial F^{-1}_{U_T(x^{\prime })|X_T}(p|x)}{\partial x^{\prime }}{}_
{|x^{\prime }=x}\; = \Delta^{QME}(p,x).
\end{equation*}
The result follows as above. \qed

\subsection{Theorem \protect\ref{thm:bounds}}

Suppose first that $U_T$ is locally concave on $[\min(x,\underline{x}%
_T(x^{\prime })),\overline{x}_T(x^{\prime })]$. Then, for all $x_1\leq
x^{\prime }\leq x_2$, almost surely,
\begin{equation}
\frac{U_T(x_2) - U_T(x) }{x_2-x} \leq \frac{g(x^{\prime },U_T) - U_T(x)}{%
x^{\prime }-x} \leq \frac{U_T(x_1) - U_T(x) }{x_1-x}.
\label{eq:ineg_concave}
\end{equation}

Taking $x_1 = \underline{x}_T(x^{\prime })$ and $x_2 = \overline{x}%
_T(x^{\prime })$, and integrating conditional on $X_T=x$, we obtain
\begin{equation*}
(x^{\prime }-x)\frac{\Delta^{ATT}(x,\overline{x}_T(x^{\prime }))}{\overline{x%
}_T(x^{\prime }) - x} \leq \Delta^{ATT}(x,x^{\prime }) \leq (x^{\prime }-x)%
\frac{\Delta^{ATT}(x,\underline{x}_T(x^{\prime }))}{\underline{x}%
_T(x^{\prime }) - x}.
\end{equation*}
The inequality is simply reverted if $g$ is locally convex. Hence, in either
case,
\begin{eqnarray*}
& & (x^{\prime }-x)\min\left\{\frac{\Delta^{ATT}(x,\underline{x}_T(x^{\prime
}))}{\underline{x}_T(x^{\prime })-x}, \frac{\Delta^{ATT}(x,\overline{x}%
_T(x^{\prime }))}{\overline{x}_T(x^{\prime })-x}\right\} \leq
\Delta^{ATT}(x,x^{\prime }) \\
& \leq & (x^{\prime }-x)\max\left\{\frac{\Delta^{ATT}(x,\underline{x}%
_T(x^{\prime }))}{\underline{x}_T(x^{\prime })-x}, \frac{\Delta^{ATT}(x,%
\overline{x}_T(x^{\prime }))}{\overline{x}_T(x^{\prime })-x}\right\}.
\end{eqnarray*}

The reasoning is the same for marginal effects using, instead of Equation %
\eqref{eq:ineg_concave},
\begin{equation*}
\frac{U_T(x_2) - U_T(x) }{x_2-x} \leq \frac{\partial U_T}{\partial x}(x)
\leq \frac{U_T(x_1) - U_T(x) }{x_1-x}.
\end{equation*}
\qed

\subsection{Theorem \protect\ref{thm:convergence_rate_att}}

\label{sec:convergence_rate_att}

Before showing the result, we state and prove a series of lemmas.

\begin{lemma}[Consistency of $\widehat{x}^*_1$]
\label{lemma:consistency_xstar} If Assumptions \ref{hyp:stationarity}, \ref%
{a:iid} and \ref{a:asymptotics_xstar}-(i) hold, then $\widehat{x}^*_1 -
x^*_1 = o_p(1)$.
\end{lemma}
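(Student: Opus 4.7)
The plan is to run a standard argmin-consistency argument. Let $\Psi(x)=F_{X_2}(x)-F_{X_1}(x)$ and $M(x)=|\Psi(x)|$, $M_n(x)=|\Psi_n(x)|$. Then $x_1^\ast$ is, by Assumption~\ref{a:asymptotics_xstar}(i), the unique zero of $\Psi$ and hence the unique global minimizer of $M$; $\widehat{x}_1^\ast$ is, by construction, a minimizer of $M_n$ over $[\widehat{F}_{X_1}^{-1}(\underline{p}),\widehat{F}_{X_1}^{-1}(\overline{p})]$. The strategy is the familiar one: uniform convergence plus well-separation of the limit objective delivers consistency.

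First, by Assumption~\ref{a:iid} and Glivenko-Cantelli applied separately to the two samples, $\sup_x|\widehat{F}_{X_t}(x)-F_{X_t}(x)|=o_p(1)$ for $t\in\{1,2\}$, whence $\sup_x|\Psi_n(x)-\Psi(x)|=o_p(1)$ and so $\sup_x|M_n(x)-M(x)|=o_p(1)$. Next, I would deal with the randomness of the search interval. Since $F_{X_1}$ is continuous (Assumption~\ref{hyp:model}) and strictly increasing in a neighborhood of $F_{X_1}^{-1}(\underline{p})$ and $F_{X_1}^{-1}(\overline{p})$ (via Assumption~\ref{a:asymptotics_xstar}(ii)), the empirical quantiles converge: $\widehat{F}_{X_1}^{-1}(\underline{p})\to a:=F_{X_1}^{-1}(\underline{p})$ and $\widehat{F}_{X_1}^{-1}(\overline{p})\to b:=F_{X_1}^{-1}(\overline{p})$ in probability. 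Because $F_{X_1}(x_1^\ast)\in(\underline{p},\overline{p})$ by Assumption~\ref{a:asymptotics_xstar}(i), the point $x_1^\ast$ lies strictly inside $(a,b)$. Consequently, fixing a small $\eta>0$, the event $\mathcal{A}_n:=\{x_1^\ast\in[\widehat{F}_{X_1}^{-1}(\underline{p}),\widehat{F}_{X_1}^{-1}(\overline{p})]\subset[a-\eta,b+\eta]\}$ has probability tending to one.

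To conclude, I would exploit well-separation. Fix $\varepsilon>0$; by continuity of $\Psi$, compactness of $K_\varepsilon:=[a-\eta,b+\eta]\setminus(x_1^\ast-\varepsilon,x_1^\ast+\varepsilon)$ and uniqueness of the zero of $\Psi$, the number $\delta:=\inf_{x\in K_\varepsilon}M(x)$ is strictly positive. On $\mathcal{A}_n$, the minimizing property of $\widehat{x}_1^\ast$ together with uniform convergence yields
\[
M_n(\widehat{x}_1^\ast)\leq M_n(x_1^\ast)\leq M(x_1^\ast)+\sup_x|M_n-M|=o_p(1),
\]
and another application of uniform convergence gives $M(\widehat{x}_1^\ast)\leq M_n(\widehat{x}_1^\ast)+o_p(1)=o_p(1)$. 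On the event $\mathcal{A}_n\cap\{|\widehat{x}_1^\ast-x_1^\ast|\geq\varepsilon\}$, however, $\widehat{x}_1^\ast\in K_\varepsilon$ so $M(\widehat{x}_1^\ast)\geq\delta>0$, and combining gives $P(|\widehat{x}_1^\ast-x_1^\ast|\geq\varepsilon)\to 0$.

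The steps are all routine; the main care required is in handling the random endpoints of the search interval so that $x_1^\ast$ is eventually a feasible competitor to $\widehat{x}_1^\ast$, and in noting that the $\min$ used in the definition of $\widehat{x}_1^\ast$ is immaterial for consistency because any near-minimizer of $M_n$ must collapse to the unique well-separated zero $x_1^\ast$. No appeal to Assumption~\ref{a:asymptotics_xstar}(ii) (the density condition $f_{X_1}(x_1^\ast)\neq f_{X_2}(x_1^\ast)$) is needed at this stage; that condition will enter only for the $\sqrt{n}$-rate result in the next lemma.
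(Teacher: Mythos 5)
Your proof is correct and follows essentially the same route as the paper's: Glivenko--Cantelli uniform convergence of $\Psi_n$ to $\Psi$, a well-separation argument on a compact set excluding a neighborhood of $x_1^\ast$ (the paper phrases it as argmax consistency of $M_n(x)=-|\Psi_n(x)|$, you as argmin of $|\Psi_n|$), plus the same care that the random endpoints $[\widehat{F}_{X_1}^{-1}(\underline{p}),\widehat{F}_{X_1}^{-1}(\overline{p})]$ eventually contain $x_1^\ast$ and stay inside a fixed compact interval. One cosmetic note: you invoke Assumption~\ref{a:asymptotics_xstar}(ii) for the empirical quantile convergence while asserting it is not needed; in fact strict monotonicity of $F_{X_1}$ on its interval support already follows from Assumption~\ref{hyp:model}, so your concluding remark is right and the citation of (ii) is superfluous.
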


\begin{proof}
Let $M_n(x)= - |\Psi_n(x)|$ and
$M(x) = -\left|F_{X_2}(x)-F_{X_1}(x)\right|$ and let $I= [F_{X_2}^{-1}(\underline{p})-\eps,F_{X_2}^{-1}(\overline{p})+\eps]$ for some  $\eps >0$. By Assumption \ref{a:asymptotics_xstar}-(i), $x^*_1$ is the unique maximum of $M$ on $I$. Besides, by Glivenko-Cantelli's theorem,
\begin{align*}
	\norm{ M_n - M }_\infty & \leq \norm{\Psi_n(x) - (F_{X_2}(x)-F_{X_1}(x))}_\infty \\
	& \leq \norm{ \widehat{F}_{X_2} - F_{X_2} }_\infty + \norm{ \widehat{F}_{X_1} - F_{X_1} }_\infty \\
	& \stackrel{p}{\longrightarrow} 0.	
\end{align*}
Fix $\eta >0$ and let $B = \left\{x \in I: |x - x^*_1|\geq  \eta\right\}$. Because $B$ is compact and $M$ is continuous, $\sup_{x\in B} M(x)= \max_{x\in B} M(x) < M(x^*_1)$. We have
\begin{equation}
	\label{eq:conv_sup_Mn}
\sup_{x\in B} M_n(x) \leq \norm{ M_n - M }_\infty + \sup_{x\in B} M(x) \stackrel{p}{\longrightarrow} \sup_{x\in B} M(x) < M(x^*_1).	
\end{equation}
Suppose that $\widehat{x}^*_1 \in B$ and $x^*_1 \in [\widehat{F}_{X_2}^{-1}(\underline{p}),\widehat{F}_{X_2}^{-1}(\overline{p})]$. Then
$\sup_{x\in B} M_n(x) = M_n(\widehat{x}^*_1) \geq M_n(x^*_1)$. Hence,
$$P\left(\widehat{x}^*_1 \in B, x^*_1 \in [\widehat{F}_{X_2}^{-1}(\underline{p}),\widehat{F}_{X_2}^{-1}(\overline{p})]\right) \leq P\left(\sup_{x\in B} M_n(x) - M_n(x^*_1) \geq 0\right),$$
but the latter probability tends to zero in view of \eqref{eq:conv_sup_Mn}. Now, remark that $x^*_1\in (F_{X_2}^{-1}(\underline{p}), F_{X_2}^{-1}(\overline{p}))$, so that with a probability approaching one, $x^*_1 \in [\widehat{F}_{X_2}^{-1}(\underline{p}),\widehat{F}_{X_2}^{-1}(\overline{p})]$. With probability approaching one, we also have $ [\widehat{F}_{X_2}^{-1}(\underline{p}),\widehat{F}_{X_2}^{-1}(\overline{p})] \subset I$, so that $\widehat{x}^*_1\in I$ with probability approaching one. Hence, $P(|\widehat{x}^*_1-x^*_1|<\eta) \stackrel{p}{\longrightarrow} 0$.
\end{proof}

\begin{lemma}[Convergence Rate of $\widehat{x}^*_1$]
\label{lemma:convergence_rate_xstar} If Assumptions \ref{hyp:stationarity}, %
\ref{a:iid} and \ref{a:asymptotics_xstar} hold, then $\sqrt{n}\left(\widehat{%
x}^*_1 - x^*_1\right) = O_p(1)$.
\end{lemma}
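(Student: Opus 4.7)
The plan is to combine two ingredients: a uniform $n^{-1/2}$ control of $\Psi_n$ around the population object $F_{X_2}-F_{X_1}$, and a \emph{one-sided} linear lower bound on $|F_{X_2}-F_{X_1}|$ near $x^*_1$ that is guaranteed by Assumption \ref{a:asymptotics_xstar}-(ii). Specifically, by the Dvoretzky--Kiefer--Wolfowitz (or, equivalently, Donsker) bound applied to both samples,
\begin{equation*}
\sup_{x\in\mathbb{R}}\bigl|\Psi_n(x) - [F_{X_2}(x)-F_{X_1}(x)]\bigr| = O_p\!\left(n^{-1/2}\right).
\end{equation*}
Since $f_{X_2}(x^*_1)\neq f_{X_1}(x^*_1)$ and the densities are continuous, there exist $c>0$ and $\delta>0$ such that
\begin{equation*}
\bigl|F_{X_2}(x)-F_{X_1}(x)\bigr| \geq c\,|x-x^*_1| \qquad \forall x\in[x^*_1-\delta,x^*_1+\delta],
\end{equation*}
by a first-order Taylor expansion around $x^*_1$ (using $F_{X_2}(x^*_1)=F_{X_1}(x^*_1)$).

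The second step is to show that with probability approaching one, $x^*_1$ itself belongs to the admissible minimization interval $[\widehat{F}_{X_1}^{-1}(\underline{p}),\widehat{F}_{X_1}^{-1}(\overline{p})]$ and $\widehat{x}^*_1$ lies within the $\delta$-neighborhood of $x^*_1$. The former follows because $F_{X_1}(x^*_1)\in(\underline{p},\overline{p})$ strictly, so $\widehat{F}_{X_1}^{-1}(\underline{p})<x^*_1<\widehat{F}_{X_1}^{-1}(\overline{p})$ with high probability by Glivenko--Cantelli; the latter is Lemma \ref{lemma:consistency_xstar}. Conditioning on this event, the defining inequality $|\Psi_n(\widehat{x}^*_1)|\leq|\Psi_n(x^*_1)|$ is legitimate.

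Combining these pieces gives the rate. On the good event,
\begin{equation*}
|\Psi_n(x^*_1)| \leq \bigl|[F_{X_2}(x^*_1)-F_{X_1}(x^*_1)]\bigr| + O_p(n^{-1/2}) = O_p(n^{-1/2}),
\end{equation*}
and, simultaneously,
\begin{equation*}
|\Psi_n(\widehat{x}^*_1)| \geq \bigl|F_{X_2}(\widehat{x}^*_1)-F_{X_1}(\widehat{x}^*_1)\bigr| - O_p(n^{-1/2}) \geq c\,|\widehat{x}^*_1 - x^*_1| - O_p(n^{-1/2}).
\end{equation*}
Chaining $c\,|\widehat{x}^*_1-x^*_1| - O_p(n^{-1/2}) \leq |\Psi_n(\widehat{x}^*_1)| \leq |\Psi_n(x^*_1)| = O_p(n^{-1/2})$ yields $|\widehat{x}^*_1-x^*_1|=O_p(n^{-1/2})$.

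The one delicate point, and the place I would spend most of the writing, is the justification that the minimization interval and the $\delta$-neighborhood containment hold with probability tending to one simultaneously, so that both the comparison $|\Psi_n(\widehat{x}^*_1)|\leq|\Psi_n(x^*_1)|$ and the local linear lower bound can be invoked on the same event; this is routine but needs to be handled explicitly to avoid circularity with Lemma \ref{lemma:consistency_xstar}. Everything else is a direct calculation once the uniform Donsker bound and the kink-type lower bound from $f_{X_2}(x^*_1)\neq f_{X_1}(x^*_1)$ are in hand.
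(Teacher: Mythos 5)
Your argument is correct, and it shares the paper's overall skeleton---compare $|\Psi_n(\widehat{x}^*_1)|\leq|\Psi_n(x^*_1)|$ on a high-probability event, transfer this to the population criterion $\Psi(x)=F_{X_2}(x)-F_{X_1}(x)$, and close with the local linear lower bound $|\Psi(x)|\geq c|x-x^*_1|$ that Assumption \ref{a:asymptotics_xstar}-(ii) delivers through $f_{X_1}(x^*_1)\neq f_{X_2}(x^*_1)$---but you handle the stochastic fluctuation differently. The paper works with the class $\{\psi_x:|x-x^*_1|<\delta\}$, where $\psi_x(u,v)=\mathds{1}\{u\leq x\}-\mathds{1}\{v\leq x\}$, verifies it is Donsker with $E[(\psi_x(X_2,X_1)-\psi_{x^*_1}(X_2,X_1))^2]\rightarrow 0$ as $x\rightarrow x^*_1$, and invokes asymptotic equicontinuity (van der Vaart's Lemma 19.24) together with the consistency of $\widehat{x}^*_1$ to obtain the localized increment bound $\sqrt{n}\left[(\Psi_n-\Psi)(\widehat{x}^*_1)-(\Psi_n-\Psi)(x^*_1)\right]=o_p(1)$. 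You instead use the global uniform bound $\sup_x|\Psi_n(x)-\Psi(x)|=O_p(n^{-1/2})$, obtained from Dvoretzky--Kiefer--Wolfowitz applied separately to the two independent samples. Your shortcut is legitimate here precisely because $\Psi_n-\Psi$ is a difference of two centered empirical cdfs, hence uniformly $O_p(n^{-1/2})$ over the whole real line: no localization is needed to reach the $O_p$ rate, so the equicontinuity machinery---which yields the sharper $o_p(1)$ increment and would be essential if the criterion process were not globally $\sqrt{n}$-uniform---buys nothing extra for this lemma, at the cost of being less portable to, say, smoothed or weighted criteria. The event bookkeeping you flag as delicate is indeed routine and noncircular: Lemma \ref{lemma:consistency_xstar} is rate-free, so intersecting the consistency event $\{|\widehat{x}^*_1-x^*_1|<\delta\}$, the event $x^*_1\in[\widehat{F}_{X_1}^{-1}(\underline{p}),\widehat{F}_{X_1}^{-1}(\overline{p})]$ (which holds with probability approaching one since $F_{X_1}(x^*_1)\in(\underline{p},\overline{p})$ strictly), and the uniform-bound event is exactly the structure the paper's own proof follows when it asserts $|\Psi_n(\widehat{x}^*_1)|\leq|\Psi_n(x^*_1)|$ and $|\Psi(\widehat{x}^*_1)-\Psi(x^*_1)|\geq C|\widehat{x}^*_1-x^*_1|$ with probability approaching one.
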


\begin{proof}
	Let $\psi_x(u,v)= \mathds{1}\{u\leq x\} - \mathds{1}\{v\leq x\}$ and $\Psi(x)=E(\psi_x(X_2,X_1))$. Because the set of functions $(\mathds{1}\{. \leq x\})_x$ is Donsker and by the conservation properties of Donsker classes, $\mathcal{F}_\delta= \left\{\psi_x: |x-x^*_1|<\delta\right\}$ is Donsker for any $\delta>0$. Moreover, by independence between $X_1$ and $X_2$,
	\begin{align*}
		E\left(\psi_x(X_2,X_1) - \psi_{x^*_1}(X_2,X_1)\right)^2 & =
			F_{X_2}(x) + F_{X_1}(x) - 2 F_{X_2}(x) F_{X_1}(x) + F_{X_2}(x^*_1) + F_{X_1}(x^*_1) \\
	& - 2 F_{X_2}(x^*_1) F_{X_1}(x^*_1) \\
	& - 2 \left(F_{X_2}(x \wedge x^*_1) + F_{X_1}(x \wedge x^*_1) - 2 F_{X_2}(x \wedge x^*_1) F_{X_1}(x \wedge x^*_1)\right).		
	\end{align*}
Therefore, by continuity of $F_{X_1}$ and $F_{X_2}$,
$$E\left[\left(\psi_x(X_2,X_1) - \psi_{x^*_1}(X_2,X_1)\right)^2\right] \rightarrow 0 \; \text {as } x \rightarrow x^*_1$$
This and Lemma \ref{lemma:consistency_xstar} above imply \citep[see, e.g.,][Lemma 19.24]{vandervaart98} that
\begin{equation}
	\sqrt{n}\left[\left(\Psi_n(\widehat{x}^*_1) - \Psi(\widehat{x}^*_1)\right) - \left(\Psi_n(x^*_1) - \Psi(x^*_1)\right)\right] = o_P(1).
	\label{eq:ep}	
\end{equation}
Besides, $\Psi(x^*_1) = 0$ and by the central limit theorem, $\Psi_n(x^*_1)=O_p(1/\sqrt{n})$.
Moreover, with probability approaching one, $|\Psi_n(\widehat{x}^*_1)|\leq |\Psi_n(x^*_1)|$, implying $\Psi_n(\widehat{x}^*_1)=O_p(1/\sqrt{n})$. Combined with \eqref{eq:ep}, this yields
\begin{align}
\sqrt{n}\left[\Psi(\widehat{x}^*_1) - \Psi(x^*_1)\right] & = -\sqrt{n}\left[\Psi_n(\widehat{x}^*_1)-\Psi_n(x^*_1)\right] + o_p(1) \nonumber \\ 	
& = O_p(1). \label{eq:approx_psi}
\end{align}
By Assumption \ref{a:asymptotics_xstar}-(ii) and because $\widehat{x}^*_1$ is consistent by Lemma \ref{lemma:consistency_xstar}, we have, with probability approaching one, $\left|\Psi(\widehat{x}^*_1) - \Psi(x^*_1)\right| \geq
C^R \left|\widehat{x}^*_1 - x^*_1\right|$. This and \eqref{eq:approx_psi} yields the desired result.
\end{proof}

In the following, we let $\mathcal{D}$ denote the sets of c\`adl\`ag
functions on $\mathcal{Y}$. We also let $\mathcal{C}^1$ denote the subset of
$\mathcal{D}$ of continuously differentiable functions, with positive
derivative.

\begin{lemma}[Hadamard differentiability of two useful maps]
\label{lemma:had_diff} The map $Q: (F_1,F_2) \mapsto F_1^{-1} \circ F_2(x)$
is Hadamard differentiable, tangentially to the set of continuous functions,
at any $(F_{10}, F_{20}) \in \mathcal{D}^2$ such that $F_{10}$ is
differentiable at $F_{10}^{-1}\circ F_{20}(x)$, with positive derivative at
this point. The map $R: (F_1,F_2,F_3) \mapsto F_1 \circ F_2^{-1} \circ F_3$
is also Hadamard differentiable at any $(F_{10},F_{20},F_{30}) \in \mathcal{C%
}^1 \times \mathcal{C}^1\times \mathcal{D}$ continuously differentiable
functions tangentially to the set of continuous functions.
\end{lemma}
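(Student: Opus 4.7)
I would assemble the proof from three standard Hadamard-differentiability facts combined through the chain rule: (i) the evaluation functional $F \mapsto F(x)$ is linear and bounded, hence Hadamard differentiable; (ii) the inverse operator $F \mapsto F^{-1}$ is Hadamard differentiable at any $F_0 \in \mathcal{C}^1$ with strictly positive derivative, tangentially to continuous functions, with derivative $k \mapsto -(k/F_0')\circ F_0^{-1}$ (van der Vaart 1998, Lemma 21.3); (iii) the composition operator $(g,h) \mapsto g \circ h$ is Hadamard differentiable at $(g_0,h_0)$ whenever $g_0$ is continuously differentiable on a neighborhood of the range of $h_0$, with derivative $(a,b)\mapsto a \circ h_0 + (g_0'\circ h_0)\cdot b$, tangentially to continuous directions.

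\textbf{Map $Q$.} Because $x$ is fixed, $Q$ is real valued, so I would do a direct, first-principles computation rather than invoke (iii). Set $\xi_0 = F_{10}^{-1}(F_{20}(x))$ and take $t_n \downarrow 0$, with continuous $h_{jn} \to h_j$ uniformly on compacta for $j=1,2$. Letting $\xi_n = Q(F_{10}+t_n h_{1n},\, F_{20}+t_n h_{2n})$, the defining relation for the generalized inverse gives
\[
F_{10}(\xi_n) - F_{10}(\xi_0) = t_n\bigl(h_{2n}(x) - h_{1n}(\xi_n)\bigr).
\]
Monotonicity and uniform convergence of $F_{10}+t_n h_{1n}\to F_{10}$ force $\xi_n \to \xi_0$; differentiability of $F_{10}$ at $\xi_0$ with $F_{10}'(\xi_0)>0$, together with continuity of $h_1$ at $\xi_0$, then yields $(\xi_n - \xi_0)/t_n \to [h_2(x) - h_1(\xi_0)]/F_{10}'(\xi_0)$, which identifies the Hadamard derivative.

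\textbf{Map $R$.} I would decompose $R$ into two stages of inversion and composition. First, the intermediate map $(F_2,F_3) \mapsto F_2^{-1}\circ F_3$ is obtained by combining (ii) with (iii) applied to the smooth outer function $F_{20}^{-1}$ (which is $C^1$ because $F_{20}' > 0$), giving Hadamard differentiability at $(F_{20},F_{30})$. A second application of (iii), now with smooth outer function $F_{10}\in\mathcal{C}^1$ and cadlag inner function $F_{20}^{-1}\circ F_{30}$, combined with the chain rule, yields the announced Hadamard differentiability of $R$ at $(F_{10},F_{20},F_{30})$ with derivative
\[
(h_1,h_2,h_3) \mapsto h_1\circ F_{20}^{-1}\circ F_{30} - \frac{F_{10}'\circ F_{20}^{-1}}{F_{20}'\circ F_{20}^{-1}}\circ F_{30}\cdot\bigl(h_2\circ F_{20}^{-1}\circ F_{30}\bigr) + \frac{F_{10}'\circ F_{20}^{-1}}{F_{20}'\circ F_{20}^{-1}}\circ F_{30}\cdot h_3.
\]

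\textbf{Main obstacle.} The delicate point is that $F_3 \in \mathcal{D}$ is merely cadlag, so $F_{20}^{-1}\circ F_{30}$ will in general be discontinuous, whereas (iii) is cleanest to quote when the inner function is continuous. The rescuing feature is that both outer functions in the chain ($F_{20}^{-1}$ and $F_{10}$) are continuously differentiable, which permits a pointwise first-order Taylor expansion whose remainder is controlled in sup norm without any modulus-of-continuity assumption on the inner function. The only non-routine technical task is therefore to verify carefully the version of the composition lemma (e.g.\ van der Vaart and Wellner 1996, Lemma 3.9.25 and the surrounding discussion) that accommodates cadlag inner arguments as long as the outer function is $C^1$, and then to assemble the pieces through the chain rule.
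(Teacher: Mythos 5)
Your direct computation for $Q$ is the right idea, but one step is false as written: the identity $F_{10}(\xi_n) - F_{10}(\xi_0) = t_n\bigl(h_{2n}(x) - h_{1n}(\xi_n)\bigr)$ does not follow from the definition of the generalized inverse. For a nondecreasing cadlag $G$, the inverse $\xi_n = G^{-1}(p_n)$ satisfies only the two-sided bounds $G(\xi_n - \eps) < p_n \leq G(\xi_n)$ for all $\eps>0$, with equality $G(\xi_n)=p_n$ only when $G$ is continuous at $\xi_n$. Here $G = F_{10} + t_n h_{1n}$ and $F_{10}\in\mathcal{D}$ is assumed differentiable only at the single point $\xi_0$; since $\xi_n \neq \xi_0$ in general, $F_{10}$ may jump at $\xi_n$ (or be locally flat), and the perturbed function can jump over the level $p_n$, breaking your displayed equation. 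The repair is standard and is exactly the device in the proof of Lemma 21.3 of van der Vaart (1998): replace the equality by the sandwich $(F_{10}+t_n h_{1n})(\xi_n-\eps_n) < p_n \leq (F_{10}+t_n h_{1n})(\xi_n)$ for suitable $\eps_n \downarrow 0$ and pass to the limit using differentiability of $F_{10}$ at $\xi_0$; your conclusion and derivative formula are then correct. A second, harmless slip: tangential Hadamard differentiability does not let you assume the perturbations $h_{jn}$ are continuous — only the limits $h_j$ lie in the tangent set. Your argument survives verbatim, since $|h_{1n}(\xi_n)-h_1(\xi_0)| \leq \norm{h_{1n}-h_1}_\infty + |h_1(\xi_n)-h_1(\xi_0)|$ uses only uniform convergence and continuity of the limit $h_1$.

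On the comparison with the paper: for $Q$ your route is substantively the same. The paper factors $Q = Q_2 \circ Q_1$ with $Q_1(F_1,F_2)=(F_1,F_2(x))$ linear and continuous and $Q_2(F,p)=F^{-1}(p)$, and establishes joint differentiability of $Q_2$ by observing that the proof of van der Vaart's Lemma 21.3 tolerates moving levels $p_u \to p$; your one-shot computation carries out the identical estimate without the factorization, so the two differ only in bookkeeping. For $R$ the situation is reversed: the paper gives no argument at all, citing instead the proof of Lemma S5 in de Chaisemartin and D'Haultf\oe uille (2018), whereas you sketch a self-contained composition proof. Your diagnosis of the obstacle — the inner function $F_{20}^{-1}\circ F_{30}$ is merely cadlag, rescued because the outer maps are $\mathcal{C}^1$ so the Taylor remainder is controlled in sup norm uniformly over the compact range — is exactly the crux, and your derivative formula for $R$ is what the chain rule delivers. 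If you write this out, make one requirement explicit: you need differentiability of $F \mapsto F^{-1}$ as a process, i.e., uniformly over quantile levels in the range of $F_{30}$ (the statement in van der Vaart and Wellner, 1996, Section 3.9, for the inverse map), not merely at a fixed level, since you subsequently compose with the cadlag $F_{30}$; this uniform version holds here because $F_{20}\in\mathcal{C}^1$ has derivative bounded away from zero on the compact $\mathcal{Y}$. With the sandwich repair above and that uniformity made explicit, your proposal is a complete and more self-contained alternative to the paper's treatment.
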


\begin{proof}
Let $Q_1:(F_1,F_2) \mapsto (F_1, F_2(x))$ and $Q_2:(F,p)\mapsto F^{-1}(p)$, so that $Q = Q_2 \circ Q_1$. The map $Q_1$ is linear and continuous, and therefore Hadamard differentiable at any $(F_{10}, F_{20}) \in \mathcal{D}^2$. Let us prove that $Q_2$ is Hadamard differentiable at any $(F_0,p) \in \mathcal{D} \times (0,1)$ such that $F_0$ is differentiable at $F_0^{-1}(p)$, with a corresponding positive derivative. We have to show that for any $h_u$ converging uniformly to $h$ continuous and $p_u \rightarrow p$, $\lim_{u\rightarrow 0} [(F_0+uh_u)^{-1}(p_u) - F_0^{-1}(p)]$ exists. By differentiability of $F_0^{-1}$ at $p$, this is the case if $\lim_{u\rightarrow 0} [(F_0+uh_u)^{-1}(p_u) - F_0^{-1}(p_u)]$ exists. Now, an inspection of the proof of Lemma 21.3 of \cite{vandervaart98} reveals that it still applies if we replace $p$ by $p_u$, with $p_u\rightarrow p$. Hence, $Q_2$ is Hadamard differentiable tangentially to the set of continuous functions at $(F_0, p)$. By applying the chain rule \citep[see][Theorem 20.9]{vandervaart98}, $Q$ is Hadamard differentiable at any $(F_{10}, F_{20}) \in \mathcal{D}^2$ such that $F_{10}$ is differentiable at $F_{10}^{-1}\circ F_{20}(x)$, with positive derivative at this point. The result for $R$ is proved in \citeauthor{Chaisemartin14}  (2018, see the proof of Lemma S5).
\end{proof}

\begin{lemma}[Convergence rate of $\widehat{q}_1(x)$]
\label{lemma:convergence_rate_q_1_x} Suppose that Assumption \ref{a:iid}
holds and $F_{X_1}$ is differentiable at $q_1(x)$ with $F^{\prime
}_{X_1}(q_1(x))>0$. Then, $\widehat{q}_1(x)-q_1(x) = O_P(1/\sqrt{n})$.
\end{lemma}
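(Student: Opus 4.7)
The plan is to deduce the result from the functional delta method, using Hadamard differentiability of the composition-inversion map established in Lemma \ref{lemma:had_diff}. Observe that $\widehat{q}_1(x) = Q(\widehat{F}_{X_1},\widehat{F}_{X_2})$ and $q_1(x) = Q(F_{X_1}, F_{X_2})$, where $Q:(F_1, F_2) \mapsto F_1^{-1} \circ F_2(x)$ is the map already shown to be Hadamard differentiable, tangentially to the set of continuous functions, at $(F_{X_1}, F_{X_2})$ under the hypothesis that $F_{X_1}$ is differentiable at $q_1(x) = F_{X_1}^{-1}\circ F_{X_2}(x)$ with positive derivative there.

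First I would establish joint weak convergence of the empirical cdfs. By Donsker's theorem applied separately to each sample, $\sqrt{n}(\widehat{F}_{X_t} - F_{X_t}) \rightsquigarrow \mathbb{B}_t \circ F_{X_t}$ in $\ell^{\infty}(\mathbb{R})$ for $t \in \{1,2\}$, where $\mathbb{B}_t$ is a standard Brownian bridge. By Assumption \ref{a:iid} the two samples are independent, so the joint convergence
\[
\sqrt{n}\left(\widehat{F}_{X_1} - F_{X_1},\; \widehat{F}_{X_2} - F_{X_2}\right) \rightsquigarrow (G_1, G_2)
\]
holds with $G_1$ and $G_2$ independent. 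The limit pair has sample paths in the space of continuous functions almost surely because each $F_{X_t}$ is continuous (by Assumption \ref{hyp:model}, or directly from Assumption \ref{a:asymptotics_xstar}-(ii) when invoked), so the limit lies in the tangent space to which the Hadamard derivative of $Q$ in Lemma \ref{lemma:had_diff} applies.

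Next I would invoke the functional delta method (see, e.g., \cite{vandervaart98}, Theorem 20.8) to conclude
\[
\sqrt{n}\left(\widehat{q}_1(x) - q_1(x)\right) = \sqrt{n}\left[Q(\widehat{F}_{X_1},\widehat{F}_{X_2}) - Q(F_{X_1}, F_{X_2})\right] \rightsquigarrow dQ_{(F_{X_1},F_{X_2})}(G_1, G_2),
\]
where the right-hand side is a (finite-variance) Gaussian random variable. In particular $\sqrt{n}(\widehat{q}_1(x) - q_1(x)) = O_P(1)$, equivalently $\widehat{q}_1(x) - q_1(x) = O_P(1/\sqrt{n})$, which is the claimed rate.

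There is no real obstacle here: the Hadamard differentiability work has already been done in Lemma \ref{lemma:had_diff}, and the only small technical point is to check that the limit process takes values in the continuity tangent set so the differentiability of $Q$ applies. Continuity of $F_{X_1}$ and $F_{X_2}$ guarantees this through the standard representation of the empirical cdf limit as a bridge composed with the population cdf.
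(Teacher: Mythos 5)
Your proposal is correct and follows essentially the same route as the paper's own proof: both apply Donsker's theorem jointly to the two independent empirical cdfs, then combine the Hadamard differentiability of $Q$ from Lemma \ref{lemma:had_diff} with the functional delta method to get asymptotic normality of $\sqrt{n}\left(\widehat{q}_1(x)-q_1(x)\right)$, hence the $O_P(1/\sqrt{n})$ rate. Your explicit check that continuity of $F_{X_1}$ and $F_{X_2}$ places the limiting process in the set of continuous functions (the tangent set for the tangential differentiability) is a detail the paper leaves implicit, and is a welcome addition.
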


\begin{proof}
We have $q_1(x) = F_{X_1}^{-1} \circ F_{X_2}(x)$ and $\widehat{q}_1 (x)= \widehat{F}_{X_1}^{-1} \circ \widehat{F}_{X_2}(x)$. By the standard Donsker's theorem (see, e.g., \citep[see, e.g.,][Theorem 19.3]{vandervaart98},
$$\sqrt{n}\left( \widehat{F}_{X_1} - F_{X_1}, \widehat{F}_{X_2} - F_{X_2}\right) \convind (G_1 \circ F_{X_1}, G_2 \circ F_{X_2}),$$
where $G_1$ and $G_2$ are two independent standard Brownian bridges. Because $F'_{X_1}(q_1(x))>0$, Lemma \ref{lemma:had_diff} and the functional delta method
\citep[see, e.g.][Lemma 3.9.4]{vandervaart96} ensure that $\sqrt{n}\left(\widehat{q}_1 (x)-q_1(x)\right)$ is asymptotically normal. The result follows.
\end{proof}

In the following, we let $w_t(y,x) =F_{Y_t|X_t}(y|x)f_{X_t}(x)$ for $t \in
\{1,2\}$. Let us also denote by $\widehat{f}_{X_t}$ the kernel density
estimator of $f_{X_t}$ and $\widehat{w}_t(y,x) = \widehat{F}_{Y_t|X_t}(y|x)%
\widehat{f}_{X_t}(x)$.

\begin{lemma}[Behavior of some nonparametric estimators]
\label{lemma:behavior_num} Suppose that Assumptions \ref{a:iid} and \ref{a:Y}%
-\ref{a:kernel_bandwidth} hold. Then, for any closed and bounded interval $%
V\subset \mathcal{X}$ and $t \in \{1,2\}$,
\begin{equation*}
\sqrt{nh_n}\left \|E\left[\widehat{w}_t(.,x)\right]/E\left[\widehat{f}%
_{X_t}(x)\right] - F_{Y_t|X_t}(.,x)\right \|_\infty \longrightarrow 0,
\end{equation*}
\begin{equation*}
\sup_{x \in V} \left \|\partial_x \widehat{w}_t(.,x)\right \|_\infty =
O_P(1).
\end{equation*}
\end{lemma}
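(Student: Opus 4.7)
The plan is to treat the two claims separately, handling the (deterministic) bias first and the stochastic control of the derivative second.

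For the first claim, both $x$ and $t$ are fixed and the supremum is over $y$. I would start from the change of variable $u = x - h_n z$, which gives
\[
E[\widehat{w}_t(y,x)] = \int w_t(y, x - h_n z)\, K(z)\, dz, \qquad E[\widehat{f}_{X_t}(x)] = \int f_{X_t}(x - h_n z)\, K(z)\, dz.
\]
A second-order Taylor expansion in $x$ combined with $\int K = 1$ and $\int z K(z)\, dz = 0$ then shows that both biases are controlled by the remainder term. Under Assumption \ref{a:Y}-(iii), $\partial_{xx} w_t = \partial_{xx} F_{Y_t|X_t} f_{X_t} + 2 \partial_x F_{Y_t|X_t} f'_{X_t} + F_{Y_t|X_t} f''_{X_t}$ is bounded uniformly on $\mathcal{Y} \times \mathcal{X}$, and $f''_{X_t}$ is bounded; since $K$ has compact support, $\int z^2 K(z)\, dz < \infty$. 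Thus $E[\widehat{w}_t(y,x)] - w_t(y,x) = O(h_n^2)$ uniformly in $y$, and $E[\widehat{f}_{X_t}(x)] - f_{X_t}(x) = O(h_n^2)$. Writing
\[
\frac{E[\widehat{w}_t(y,x)]}{E[\widehat{f}_{X_t}(x)]} - F_{Y_t|X_t}(y|x) = \frac{E[\widehat{w}_t(y,x)] - F_{Y_t|X_t}(y|x)\, E[\widehat{f}_{X_t}(x)]}{E[\widehat{f}_{X_t}(x)]},
\]
the identity $w_t = F_{Y_t|X_t} f_{X_t}$ cancels the leading terms in the numerator, leaving a uniform $O(h_n^2)$ remainder; since $E[\widehat{f}_{X_t}(x)] \to f_{X_t}(x) > 0$, multiplying by $\sqrt{nh_n}$ yields $O(\sqrt{nh_n^5}) = o(1)$ by Assumption \ref{a:kernel_bandwidth}-(i).

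For the second claim, I would split $\partial_x \widehat{w}_t = E[\partial_x \widehat{w}_t] + (\partial_x \widehat{w}_t - E[\partial_x \widehat{w}_t])$ and control each piece uniformly in $(y,x) \in \mathcal{Y} \times V$. Change of variable and an integration by parts, legitimate because $K$ has compact support, give
\[
E[\partial_x \widehat{w}_t(y,x)] = \frac{1}{h_n}\int w_t(y, x-h_n z)\, K'(z)\, dz = \int \partial_x w_t(y, x-h_n z)\, K(z)\, dz.
\]
A one-term Taylor expansion with remainder controlled by $\|\partial_{xx} w_t\|_\infty < \infty$ yields $E[\partial_x \widehat{w}_t(y,x)] = \partial_x w_t(y,x) + O(h_n)$ uniformly; since $\partial_x w_t = \partial_x F_{Y_t|X_t} f_{X_t} + F_{Y_t|X_t} f'_{X_t}$ is uniformly bounded on $\mathcal{Y} \times V$, the mean part is $O(1)$ uniformly.

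The stochastic part requires a uniform empirical-process bound. I would apply a VC-type uniform maximal inequality (e.g., Einmahl and Mason, 2000, 2005, or Giné and Guillou, 2002) to the class
\[
\mathcal{F}_n = \Bigl\{(y',x') \mapsto \mathds{1}\{y' \le y\} K'((x-x')/h_n) : y \in \mathcal{Y},\, x \in V\Bigr\},
\]
which is a product of the VC class of indicators and a class of shifted-rescaled copies of $K'$; since $K'$ has bounded variation by Assumption \ref{a:kernel_bandwidth}-(ii), $\mathcal{F}_n$ is VC-type with envelope $\|K'\|_\infty < \infty$ and uniform second moment $O(h_n)$. The standard conclusion is
\[
\sup_{y,x}\bigl|(P_n - P)[\mathds{1}\{Y_t \le y\} K'((x-X_t)/h_n)]\bigr| = O_P\!\left(\sqrt{h_n|\log h_n|/n}\right),
\]
so dividing by $h_n^2$ and using $nh_n^3/|\log h_n| \to \infty$ gives a uniform $o_P(1)$ stochastic deviation. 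Combining the two parts yields $\sup_{x \in V}\|\partial_x \widehat{w}_t(\cdot,x)\|_\infty = O_P(1)$. The main obstacle is this last step: verifying the VC-type entropy for the $h_n$-indexed class $\mathcal{F}_n$ and invoking the correct uniform-in-bandwidth maximal inequality from the empirical-process literature; everything else is a direct Taylor/change-of-variable calculation.
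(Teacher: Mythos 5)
Your proposal is correct and follows essentially the same route as the paper: a change of variables plus second-order Taylor expansion (exploiting $\int K = 1$, $\int zK(z)\,dz = 0$, and the bounded second derivatives from Assumption \ref{a:Y}-(iii)) to get a uniform $O(h_n^2)$ bias killed by $\sqrt{nh_n^5}\to 0$, and for the derivative term a mean/deviation split where the mean is handled by integration by parts and a one-term Taylor bound, and the stochastic deviation by an Einmahl--Mason-type uniform maximal inequality yielding $O_P\bigl(\sqrt{|\log h_n|/(nh_n^3)}\bigr) = o_P(1)$ under Assumption \ref{a:kernel_bandwidth}-(i). The only cosmetic difference is that the paper maps the centered derivative directly onto the process $W(x,f)$ of Einmahl and Mason (2000) and invokes their Proposition 1 (noting their condition (K)-(iii) is not needed), whereas you verify the VC-type entropy for the $(y,x)$-indexed class yourself via the bounded variation of $K'$ --- the same argument in slightly more self-contained form.
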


\begin{proof}
First, because $K(y)\geq 0$, $E\left[\widehat{w}_t(y,x)\right]/E\left[\widehat{f}_{X_t}(x)\right]\leq 1$ for all $y$. Thus,
\begin{align}
& \norm{E\left[\widehat{w}_t(.,x)\right]/E\left[\widehat{f}_{X_t}(x)\right] - F_{Y_t|X_t}(.,x)}_\infty \nonumber \\
\leq & \frac{1}{f_{X_t}(x)} \left[\norm{E\left[\widehat{w}_t(.,x)\right] - w(.,x)}_\infty + \left|E\left[\widehat{f}_{X_t}(x)\right] - f_{X_t}(x)\right|\right].
\label{eq:decomp_bias}
\end{align}
We have
$$E\left[\widehat{f}_{X_t}(x)\right] - f_{X_t}(x) = \int K(u) \left[f_{X_t}(x+h_n u)-f_{X_t}(x)\right]du.$$
Thus, because $|f'_{X_t}|$ is bounded,
$$\sqrt{nh_n}\left|E\left[\widehat{f}_{X_t}(x)\right] - f_{X_t}(x)\right| \leq C \sqrt{nh_n^5}\int |t|K(u)du,$$
for some $C >0$. Hence, the left-hand side tends to zero by Assumption \ref{a:kernel_bandwidth}-(i). Now consider the first term of \eqref{eq:decomp_bias}. A change of variable yields
$$E\left[\widehat{w}_t(y,x)\right] - w(y,x) = \int K(u)\left[w(y,x-h_n u) - w(y,x)\right]du.$$
By a  second-order Taylor expansion, we obtain
$$E\left[\widehat{w}_t(y,x)\right] - w(y,x) = \int K(u)\left[-h_n u \partial_x w(y,x) + \frac{1}{2}(h_n u)^2 \partial_{xx} w(y,\widetilde{x}_1) \right]du,$$
where $\widetilde{x}_1 \in (x, x+h_n u)$.  As a result, by Assumption \ref{a:Y}-(ii) and \ref{a:kernel_bandwidth}-(ii),
$$\norm{E\left[\widehat{w}_t(.,x)\right] - w(.,x)}_\infty \leq C' h_n ^2,$$
for some $C'>0$. By Assumption \ref{a:kernel_bandwidth}-(i) once more, the first term of \eqref{eq:decomp_bias} tends to zero, which yields the first result of the lemma.

\medskip
To obtain the second result, first observe that by the triangular inequality,
\begin{align}
	\sup_{x\in V} \norm{\partial_x \widehat{w}_t(.,x)}_\infty \leq &
	\sup_{x\in V} \norm{\partial_x \widehat{w}_t(.,x) -  E\left[\widehat{w}_t(.,x)\right]}_\infty \nonumber \\
	& + \sup_{x\in V} \norm{E\left[\widehat{w}_t(.,x)\right]- \partial_x w(.,x)}_\infty + \sup_{x\in V} \norm{\partial_x w(.,x)}_\infty.	\label{ineq:wprime}
\end{align}
By Assumption \ref{a:Y}-(iii) $\sup_{x\in V} \norm{\partial_x w(.,x)}_\infty<\infty$. Therefore, to show the result, it suffices to show that the two first terms of the right-hand side of \eqref{ineq:wprime} tend to zero in probability.

\medskip
To analyse the first term, let us remark that
\begin{align*}
& nh^2_n\left(\partial_x \widehat{w}_t(y,x) -E[\partial_x \widehat{w}_t(y,x)]\right) \\
= & \sum_{i=1}^n \mathds{1}\{Y_{it}\leq y\}K'\left(\frac{x-X_{it}}{h_n}\right) - n E\left[
\mathds{1}\{Y_{it}\leq y\}K'\left(\frac{x-X_{it}}{h_n}\right)\right].	
\end{align*}
Thus, the left-hand side corresponds to $W(x,f)$ in \cite{einmahl2000}, with $f(u)=\mathds{1}\{y\leq u\}$ and $K'$ in place of $K$. Moreover, $f_{X_t,Y_t}$ is continuous, $f_{X_t}$ is continuous and $\inf_{x\in V} f_{X_t}(x)>0$ and $K'$ satisfies their (K)-(i) and (K)-(ii). Finally, remark that Proposition 1 of \cite{einmahl2000} does not rely on their condition (K)-(iii). Hence, with probability one,
$$\lim\sup_{n\rightarrow \infty}\sqrt{\frac{nh_n^3}{2|\log(h_n)|}}\sup_{x \in V}\norm{\partial_x \widehat{w}_t(.,x) -E[\partial_x \widehat{w}_t(.,x)]}_\infty <\infty.$$
Because $nh_n^3/|\log(h_n)| \rightarrow \infty$ by Assumption \ref{a:kernel_bandwidth}, $\sup_{x \in V}\norm{\partial_x \widehat{w}_t(.,x) -E[\partial_x \widehat{w}_t(.,x)]}_\infty \rightarrow 0$.

\medskip
Now let us turn to the second term of \eqref{ineq:wprime}. First, remark that
$$E\left[\partial_x \widehat{w}_t(y,x)\right]= \frac{1}{h_n^2} \int w(y,x) K'\left(\frac{x-u}{h_n}\right)du.$$
Integrating by part and using the facts that $f_{X_1}$ is bounded above and $K(u)\rightarrow 0$ as $|u|\rightarrow \infty$, we obtain
$$E\left[\partial_x \widehat{w}_t(y,x)\right]=  \int K(u) \partial_x w(y,x+ h_n u)du.$$
By Assumption \ref{a:Y}-(iii), there exists a constant $C'>0$ such that for all $y$ and $x\in V$, $|\partial_x w(y,x+h_n u) - \partial_x w(y,x+h_n u)| \leq C' h_n |u|$. Hence,
$$\sup_{x\in V} \norm{E\left[\partial_x \widehat{w}_t(.,x)\right] - \partial_x w(.,x)}_\infty \leq  C' h_n\int |u| K(u) du,$$
and the left-hand side tends to zero.\end{proof}

\begin{lemma}[Negligible effect of estimating covariates]
\label{lemma:no_effect_x} Suppose that $x \in \mathcal{X}$ and $\widehat{x}$
satisfies $\widehat{x}-x=O_P(1/\sqrt{n})$. If Assumptions \ref{a:iid} and %
\ref{a:Y}-\ref{a:kernel_bandwidth} hold, then, for $t \in \{1,2\}$,
\begin{equation*}
\sqrt{nh_n} \left \|\widehat{F}_{Y_t|X_t}(.|\widehat{x}) - \widehat{F}%
_{Y_t|X_t}(.|x)\right \|_\infty \overset{P}{\longrightarrow} 0.
\end{equation*}
\end{lemma}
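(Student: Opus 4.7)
The plan is to reduce the supremum difference to $|\widehat{x}-x|$ via a mean value argument on the map $u \mapsto \widehat{F}_{Y_t|X_t}(y|u)$, and then show that the resulting derivative is $O_P(1)$ uniformly in $y$ and in $u$ ranging over a small closed neighborhood $V$ of $x$. Since $\widehat{x}-x = O_P(n^{-1/2})$, with probability tending to one $\widehat{x} \in V$, so the whole difference will be $O_P(n^{-1/2})$, and multiplying by $\sqrt{nh_n}$ gives $O_P(\sqrt{h_n}) = o_P(1)$ by Assumption \ref{a:kernel_bandwidth}.

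Write $\widehat{F}_{Y_t|X_t}(y|u) = \widehat{w}_t(y,u)/\widehat{f}_{X_t}(u)$, which is differentiable in $u$ because $K$ is differentiable. By the mean value theorem, for each $y$ there is $\widetilde{u}_y$ between $x$ and $\widehat{x}$ with
\begin{equation*}
\widehat{F}_{Y_t|X_t}(y|\widehat{x})-\widehat{F}_{Y_t|X_t}(y|x) = \frac{\partial_u \widehat{w}_t(y,\widetilde{u}_y)\widehat{f}_{X_t}(\widetilde{u}_y)-\widehat{w}_t(y,\widetilde{u}_y)\widehat{f}_{X_t}'(\widetilde{u}_y)}{\widehat{f}_{X_t}(\widetilde{u}_y)^2}(\widehat{x}-x).
\end{equation*}
Fix a closed bounded neighborhood $V$ of $x$ contained in the interior of $\mathcal{X}$. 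Since $\widehat{x}-x=O_P(n^{-1/2})$, with probability tending to one $[x\wedge\widehat{x},x\vee\widehat{x}]\subset V$, so it suffices to bound the fraction uniformly in $y$ and $u\in V$.

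For the numerator: Lemma \ref{lemma:behavior_num} gives $\sup_{u\in V}\|\partial_u\widehat{w}_t(\cdot,u)\|_\infty=O_P(1)$, while the bound $0\le \widehat{w}_t(y,u)\le\widehat{f}_{X_t}(u)$ (because $K\ge 0$) reduces the second term to controlling $\sup_{u\in V}|\widehat{f}_{X_t}'(u)|$. The latter is $O_P(1)$ by exactly the same argument as the second part of Lemma \ref{lemma:behavior_num}, applied to $f\equiv 1$ in place of the indicator — the Einmahl–Mason uniform rate and the smoothness condition on $f_{X_t}$ (Assumption \ref{a:Y}-(iii)) give $\sup_{u\in V}|\widehat{f}_{X_t}'(u)-f_{X_t}'(u)|=o_P(1)$ and $|f_{X_t}'|$ is bounded. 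For the denominator: uniform consistency of the kernel density estimator together with $\inf_{u\in V}f_{X_t}(u)>0$ (which holds for $V$ small, since $f_{X_t}$ is continuous and $f_{X_t}(x)>0$) yields $\inf_{u\in V}\widehat{f}_{X_t}(u)\ge c>0$ with probability tending to one.

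Combining these bounds, the fraction above is $O_P(1)$ uniformly in $y\in\mathcal{Y}$ and $u\in V$, so
\begin{equation*}
\bigl\|\widehat{F}_{Y_t|X_t}(\cdot|\widehat{x})-\widehat{F}_{Y_t|X_t}(\cdot|x)\bigr\|_\infty = O_P\bigl(|\widehat{x}-x|\bigr)=O_P\bigl(n^{-1/2}\bigr),
\end{equation*}
and multiplication by $\sqrt{nh_n}$ gives $O_P(\sqrt{h_n})=o_P(1)$ since $h_n\to 0$. The main technical obstacle is the uniform control of $\widehat{f}_{X_t}'$, but this is essentially a repetition of the second half of Lemma \ref{lemma:behavior_num} and causes no new difficulty.
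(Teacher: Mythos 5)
Your proof is correct and takes essentially the same route as the paper's: both reduce the sup-norm difference to a derivative bound, uniform in $y$ and in $u$ over a neighborhood $V$, times $|\widehat{x}-x|$, using Lemma \ref{lemma:behavior_num} for $\sup_{u\in V}\|\partial_u \widehat{w}_t(\cdot,u)\|_\infty$, uniform control of $\widehat{f}_{X_t}$ and $\widehat{f}'_{X_t}$, and a lower bound $\inf_{u\in V}\widehat{f}_{X_t}(u)\geq c>0$ with probability approaching one, before concluding via $\sqrt{nh_n}\,O_P(n^{-1/2})=O_P(\sqrt{h_n})=o_P(1)$. The only cosmetic differences are that you apply the mean value theorem to the ratio directly via the quotient rule, whereas the paper first splits algebraically using $\widehat{F}_{Y_t|X_t}\leq 1$ and applies the mean value theorem to $\widehat{w}_t$ and $\widehat{f}_{X_t}$ separately, and that you control $\sup_{u\in V}|\widehat{f}'_{X_t}(u)|$ by recycling the Einmahl--Mason argument from the second half of Lemma \ref{lemma:behavior_num} (indeed, since $\widehat{w}_t(\overline{y},u)=\widehat{f}_{X_t}(u)$, it is literally the special case $y=\overline{y}$ of that lemma's conclusion), where the paper instead cites uniform consistency results of Silverman (1978).
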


\begin{proof}
Let us denote by $\widehat{f}_{X_t}$ the kernel density estimator of $f_{X_t}$ and $\widehat{w}_t(y|x) = \widehat{F}_{Y_t|X_t}(y|x)\widehat{f}_{X_t}(x)$. With a large probability, $\widehat{x}\in V$. Then, using the fact that $\widehat{F}_{Y_t|X_t}\leq 1$,
\begin{align*}
	& \norm{\widehat{F}_{Y_t|X_t}(.|\widehat{x}) - \widehat{F}_{Y_t|X_t}(.|x)}_\infty \\
	\leq & \frac{1}{\inf_{x'\in V} \widehat{f}_{X_t}(x')}\left[\norm{\widehat{w}_t(.|\widehat{x})-\widehat{w}_t(.|x)}_\infty + \left|\widehat{f}_{X_t}(\widehat{x}) - \widehat{f}_{X_t}(x)\right| \right] \\
	\leq & \frac{1}{\inf_{x'\in V} \widehat{f}_{X_t}(x')}\left[\sup_{x' \in V} \norm{\partial_x \widehat{w}_t(.|x')}_{\infty}| + \sup_{x'\in V} \left|\widehat{f}'_{X_t}(x')\right| \right] \left|\widehat{x}-x\right|.
\end{align*}
Now, $f_{X_t}$ and $f'_{X_t}$ are uniformly continuous on $V$. By Assumption \ref{a:kernel_bandwidth}, $h_n\rightarrow 0$ and
\linebreak $nh_n^3/|\log(h_n)| \rightarrow\infty$. Moreover, $K$ satisfies the conditions of Theorem A and C of \cite{silverman1978}. $K'$ may not satisfy condition (C2) of \cite{silverman1978}, but this condition is not needed for the necessity part of his Theorem 3 that we use here.  Therefore, $\widehat{f}_{X_t}$ and $\widehat{f}'_{X_t}$ are uniformly consistent on $V$. The result follows by $\widehat{x}-x=O_P(1/\sqrt{n})$, $h_n\rightarrow 0$ and Lemma \ref{lemma:behavior_num}.
 \end{proof}

\begin{lemma}[Asymptotic distribution of $\widehat{F}%
_{Y_{2}|X_{2}}(.|x_{1}^{*})$]
\label{lemma:conv_F} If Assumptions \ref{a:iid} and \ref{a:Y}-\ref%
{a:kernel_bandwidth} hold, then, for $t\in \{1,2\}$,
\begin{align*}
\sqrt{nh_{n}}& \left( \widehat{F}_{Y_{2}|X_{2}}(.|x)-F_{Y_{2}|X_{2}}(.|x),%
\widehat{F}_{Y_{1}|X_{1}}(.|q_{1}(x)-F_{Y_{1}|X_{1}}(.|q_{1}(x)),\widehat{F}%
_{Y_{2}|X_{2}}(.|x_{2}^{\ast })-F_{Y_{2}|X_{2}}(.|x_{1}^{\ast }),\right. \\
& \left. \widehat{F}_{Y_{1}|X_{1}}(.|x_{1}^{\ast
})-F_{Y_{1}|X_{1}}(.|x_{1}^{\ast })\right) \overset{d}{\longrightarrow }%
\mathbb{G},
\end{align*}%
where $\mathbb{G}$ is a continuous Gaussian processes.
\end{lemma}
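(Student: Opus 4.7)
The plan is to first establish weak convergence of each component at its true (non-random) evaluation point, then argue that replacing the evaluation points $x_1^\ast$ and $q_1(x)$ by the $\sqrt{n}$-consistent estimators $\widehat{x}_1^\ast$ and $\widehat{q}_1(x)$ does not affect the limit, and finally assemble the four components into a joint limit.

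For a fixed point $\xi$ in the interior of $\mathcal{X}$, standard results on uniform-in-$y$ asymptotic normality of Nadaraya--Watson conditional cdf estimators give
\[
\sqrt{nh_n}\bigl(\widehat{F}_{Y_t|X_t}(\cdot\mid\xi)-F_{Y_t|X_t}(\cdot\mid\xi)\bigr)\overset{d}{\longrightarrow}\mathbb{G}_{t,\xi},
\]
where $\mathbb{G}_{t,\xi}$ is a centered, continuous Gaussian process on $\mathcal{Y}$ with covariance kernel $c_t(y,y'\mid\xi)=f_{X_t}(\xi)^{-1}(\int K^2)\bigl[F_{Y_t|X_t}(y\wedge y'\mid\xi)-F_{Y_t|X_t}(y\mid\xi)F_{Y_t|X_t}(y'\mid\xi)\bigr]$. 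Weak convergence in $\ell^\infty(\mathcal{Y})$ follows because $\{\mathds{1}\{Y_t\leq y\}:y\in\mathcal{Y}\}$ is Donsker and the class of shifted kernels $\{K((\xi-\cdot)/h_n)\}$ is VC-type; the Lindeberg condition and equicontinuity follow from Assumption \ref{a:Y} and the decomposition used in Lemma \ref{lemma:behavior_num}. The bias term is $O(h_n^2)$ by the second-order expansion already performed in the proof of Lemma \ref{lemma:behavior_num}, and $\sqrt{nh_n}\cdot h_n^2=\sqrt{nh_n^5}\to 0$ by Assumption \ref{a:kernel_bandwidth}-(i), so the limit is centered.

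For the joint convergence at the four points $x$, $q_1(x)$, $x_1^\ast$ (against sample 2) and $x_1^\ast$ (against sample 1), I would use the Cramér--Wold device. Within a single sample, the covariance between the local empirical processes at two distinct evaluation points $\xi\neq \xi'$ vanishes: for $h_n$ small, the product $K((\xi-X_{it})/h_n)K((\xi'-X_{it})/h_n)$ has expectation $O(h_n^2)$ rather than $O(h_n)$, so after rescaling by $nh_n$ these cross-covariances vanish. Across the two samples, independence is immediate from Assumption \ref{a:iid}. The four components are therefore asymptotically jointly Gaussian with block-diagonal covariance (two mutually independent pairs, each pair consisting of two asymptotically independent marginals). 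Continuity of the limit follows from continuity of the marginals. This yields the result with $x_1^\ast$ and $q_1(x)$ in place of their estimators.

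The final step is to replace $x_1^\ast$ by $\widehat{x}_1^\ast$ and $q_1(x)$ by $\widehat{q}_1(x)$. By Lemmas \ref{lemma:convergence_rate_xstar} and \ref{lemma:convergence_rate_q_1_x}, both estimators are $\sqrt{n}$-consistent, hence satisfy the hypothesis of Lemma \ref{lemma:no_effect_x}. Applying that lemma three times gives
\[
\sqrt{nh_n}\,\bigl\|\widehat{F}_{Y_2|X_2}(\cdot\mid\widehat{q}_1(x))-\widehat{F}_{Y_2|X_2}(\cdot\mid q_1(x))\bigr\|_\infty=o_P(1),
\]
and similarly for the two substitutions at $\widehat{x}_1^\ast$; by Slutsky's theorem the joint limit is unaffected. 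The main obstacle, in my view, is the asymptotic orthogonality of the local processes at two distinct evaluation points inside the same sample: it is intuitively clear from the shrinking support of the kernel, but making this rigorous at the process level (rather than at finite-dimensional distributions) requires a careful stochastic equicontinuity argument, which the author presumably handles by invoking a uniform-in-bandwidth result of Einmahl--Mason type as already used in Lemma \ref{lemma:behavior_num}.
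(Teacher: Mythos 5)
Your proposal is correct and follows essentially the same route as the paper's proof: bias negligibility via Lemma \ref{lemma:behavior_num}, finite-dimensional convergence by the Cram\'er--Wold device and asymptotic normality of the Nadaraya--Watson estimator, and tightness via an Einmahl--Mason-type result (the paper invokes Theorem 1.1 of Einmahl and Mason, 1997, for the numerator process $\sqrt{nh_n}\left(\widehat{w}_2(.,x)-E[\widehat{w}_2(.,x)]\right)$ and then passes to the ratio $\widehat{F}_{Y_2|X_2}=\widehat{w}_2/\widehat{f}_{X_2}$ using Stute's uniform consistency), exactly as you anticipated. Two small clarifications: the cross-point orthogonality you single out as the main obstacle is in fact only needed at the finite-dimensional level (where the compact support of $K$ makes the cross-terms exactly zero once $h_n$ is small, and joint tightness of the four-component vector follows from marginal tightness of each component, so no process-level orthogonality argument is required), and the lemma is stated at the deterministic points $x$, $q_1(x)$ and $x_1^\ast$, so your final substitution step via Lemmas \ref{lemma:convergence_rate_xstar}, \ref{lemma:convergence_rate_q_1_x} and \ref{lemma:no_effect_x} belongs to the proof of Theorem \ref{thm:convergence_rate_att} rather than to this lemma.
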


\begin{proof}
First, by Lemma \ref{lemma:behavior_num}, we have, for any $x\in \mathcal{X}$, $$\norm{E\left[\widehat{F}_{Y_2|X_2}(.|x)\right] - \widehat{F}_{Y_2|X_2}(.|x)}_\infty \leq C |h_n|,$$
for some $C>0$. Hence, we may focus on the process $\mathbb{G}_n = \sqrt{nh_n} \left(\widehat{F}_{Y_2|X_2}(.|x)-E\left[\widehat{F}_{Y_2|X_2}(.|x)\right]\right)$. The proof readily extends to the multivariate process by the Cram\'er-Wold device. Note that convergence of the process follows if (i) for any $k\in \mathbb{N}$ and $(y_1,..,y_k)\in \mathcal{Y}^k$,  $(\mathbb{G}_n(y_1),...,\mathbb{G}_n(y_k))$ is asymptotically normal and (ii) $\mathbb{G}_n$ is asymptotically tight \citep[see, e.g.,][Theorem 18.14]{vandervaart98}, Theorem 18.14). (i) follows by the Cram\'er-Wold device, asymptotic normality of the Nadaraya-Watson estimator and Assumptions \ref{a:Y}-\ref{a:kernel_bandwidth} \citep[see, e.g.,][]{bierens1987}.

\medskip
Now, let us prove (ii). By Theorem 1.1 of \cite{Einmahl97}, the process \newline
$\sqrt{nh_n}\left(\widehat{w}_2(.,x) - E[\widehat{w}_2(.,x) ]\right)$ is asymptotically tight. Now, remark that
\begin{align*}
\mathbb{G}_n  = \frac{1}{f_{X_2(x)}} \bigg[ &\sqrt{nh_n}\left(\widehat{w}_2(.,x) - w_2(.,x)\right) + F_{Y_2|X_2}(.|x)\sqrt{nh_n}\left(\widehat{f}_{X_2}(x) - f_{X_2}(x)\right) \\
& \left. + \left(\widehat{F}_{Y_2|X_2}(.|x) - F_{Y_2|X_2}(.|x)\right)\sqrt{nh_n}\left(\widehat{f}_{X_2}(x) - f_{X_2}(x)\right)\right].
\end{align*}
By Assumption \ref{a:kernel_bandwidth}, $K$ is defined on a compact set and has bounded variation. Theorem 1 of \citeauthor{stute1986} (1986, see also his remark p.893) then ensures that $\widehat{F}_{Y_2|X_2}(.|x)$ is a uniformly consistent estimator of $F_{Y_2|X_2}(.|x)$. Hence, the supremum norm of the third term in the brackets converges to zero in probability. The second term is asymptotically tight since $\sqrt{nh_n}\left(\widehat{f}_{X_2}(x) - f_{X_2}(x)\right)=O_P(1)$ and $F_{Y_2|X_2}(.|x)$ is uniformly continuous on $\mathcal{Y}$. Hence, $\mathbb{G}_n $ is asymptotically tight, and the result follows.
 \end{proof}

We now prove the theorem. Let $H(y)=F_{Y_{1}|X_{1}}\left(
F_{Y_{2}|X_{2}}^{-1}(F_{Y_{1}|X_{1}}(y|x_{1}^{\ast })|x_{1}^{\ast
})|q_{1}(x)\right) $ and
\begin{equation*}
\widehat{H}(y)=\widehat{F}_{Y_{1}|X_{1}}\left( \widehat{F}%
_{Y_{2}|X_{2}}^{-1}(\widehat{F}_{Y_{1}|X_{1}}(y|\widehat{x}_{1}^{\ast })|%
\widehat{x}_{1}^{\ast })|\widehat{q}_{1}(x)\right).
\end{equation*}
It is easy to see that $H$ is the cumulative distribution function of $%
g_{1}(Y_{1}) $ conditional on $X_{1}=q_{1}(x)$. Lemmas \ref%
{lemma:no_effect_x} and \ref{lemma:conv_F} imply that
\begin{equation*}
\left( \widehat{F}_{Y_{2}|X_{2}}(.|x),\widehat{F}_{Y_{1}|X_{1}}(.|\widehat{q}%
_{1}(x)),\widehat{F}_{Y_{2}|X_{2}}(.|\widehat{x}_{1}^{\ast }),\widehat{F}%
_{Y_{1}|X_{1}}(.|\widehat{x}_{1}^{\ast })\right)
\end{equation*}%
converges to a continuous Gaussian process. By Lemma \ref{lemma:had_diff}
and the functional delta method, $\left( \widehat{F}_{Y_{2}|X_{2}}(.|x),%
\widehat{H}\right) $ also converges to a continuous Gaussian process at the
rate $\sqrt{nh_{n}}$.

\medskip Now, by integration by parts for Lebesgue-Stieljes integrals,
\begin{equation*}
\Delta^{ATT}(x,q_1(x)) = \int_{\underline{y}}^{\overline{y}}
F_{Y_2|X_2}(y|x) - H(y) dy.
\end{equation*}
The map $\varphi: (F_1,F_2) \mapsto \int_{\underline{y}}^{\overline{y}}
[F_1(y) - F_2(y)]dy$, defined on the set of bounded c\`adl\`ag functions, is
linear and also continuous with respect to the supremum norm. It is
therefore Hadamard differentiable. Because $\widehat{\Delta}^{ATT}(x,q_1(x))
= \varphi\left(\widehat{F}_{Y_2|X_2}(.|x), \widehat{H}\right)$, it is
asymptotically normal at the rate $\sqrt{nh_n}$. 

\medskip Finally, we have $\Delta^{QTT}(p, x,q_1(x)) = H^{-1}(p) -
F^{-1}_{Y_2|X_2}(p|x)$ and $\widehat{\Delta}^{QTT}(p, x,q_1(x)) = \widehat{H}%
^{-1}(p) - \widehat{F}^{-1}_{Y_2|X_2}(p|x)$. Because the quantile function
is Hadamard differentiable \citep[see, e.g.,][Lemma 21.3]{vandervaart98},
the map $(F_1, F_2) \mapsto F^{-1}_1(p) - F^{-1}_2(p)$ is Hadamard
differentiable at any $(F_{10},F_{20}) $ such that $F_{10}$ and $F_{20}$ are
differentiable at $F^{-1}_{10}(p)$ and $F^{-1}_{20}(p)$ respectively, with
positive corresponding derivatives. The result follows by applying the
functional delta method once more. \qed

\section{Additional Details on the Application}

\label{sec:details_app}

We first present the piecewise linear estimator of $q_1(x)-x$. Equivalently,
we impose such a parametric restriction on $q_1^{-1}(x)$. In line with the
theoretical design of the policy, we consider the specification
\begin{equation}
q_1^{-1}(x)= x+ \zeta_0 (x-12.0)^+ + \zeta_1 (x-15.2)^+ + \zeta_2 (x-23.4)^+
- (\zeta_0+\zeta_1+\zeta_2)(x-26.8)^+,  \label{eq:param_q1}
\end{equation}
where $x^+=\max(0,x)$. The values 12 and 26.8 correspond to the theoretical
limits outside which we should not observe any difference between the 1987
and 1989 income. The values 15.2 and 23.4 are the theoretical limits inside
which the difference between the two incomes should be maximal -- see the
right panel of Figure \ref{fig:eitc_plus_income}. The last term in %
\eqref{eq:param_q1} ensures that $q_1^{-1}(x)=q_1(x)=x$ when $x\geq 26.8$.
We estimate $(\zeta_0,\zeta_1,\zeta_2)$ by minimizing $\int_{12.0}^{26.8} (
q_1^{-1}(x) - \hat q_1^{-1}(x))^2 dx$, where $\hat q_1^{-1} = \hat
F_{X_2}^{-1} \circ \hat F_{X_1}$.

\medskip The estimate of $q_1$ appears in Figure \ref{fig:piecewise_q1}. The
estimator is close to the nonparametric estimator, but also to the
theoretical function implied by the policy design, displayed in the left
panel of Figure \ref{fig:eitc_plus_income}.

\begin{figure}[]
\centering
\begin{minipage}{0.48\textwidth}
	\includegraphics[scale=0.22]{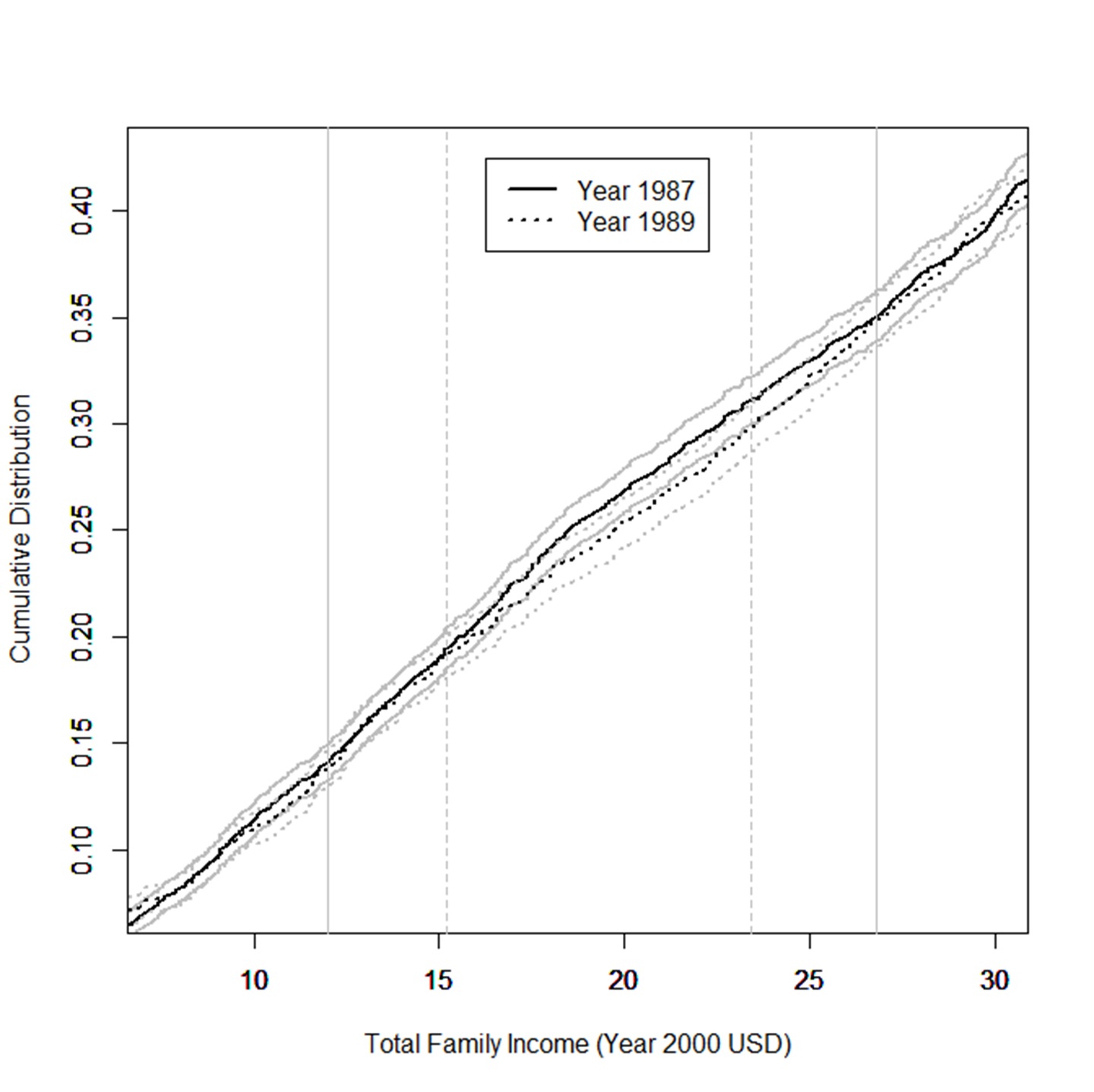}
\end{minipage}
\begin{minipage}{0.48\textwidth}
	\includegraphics[scale=0.22]{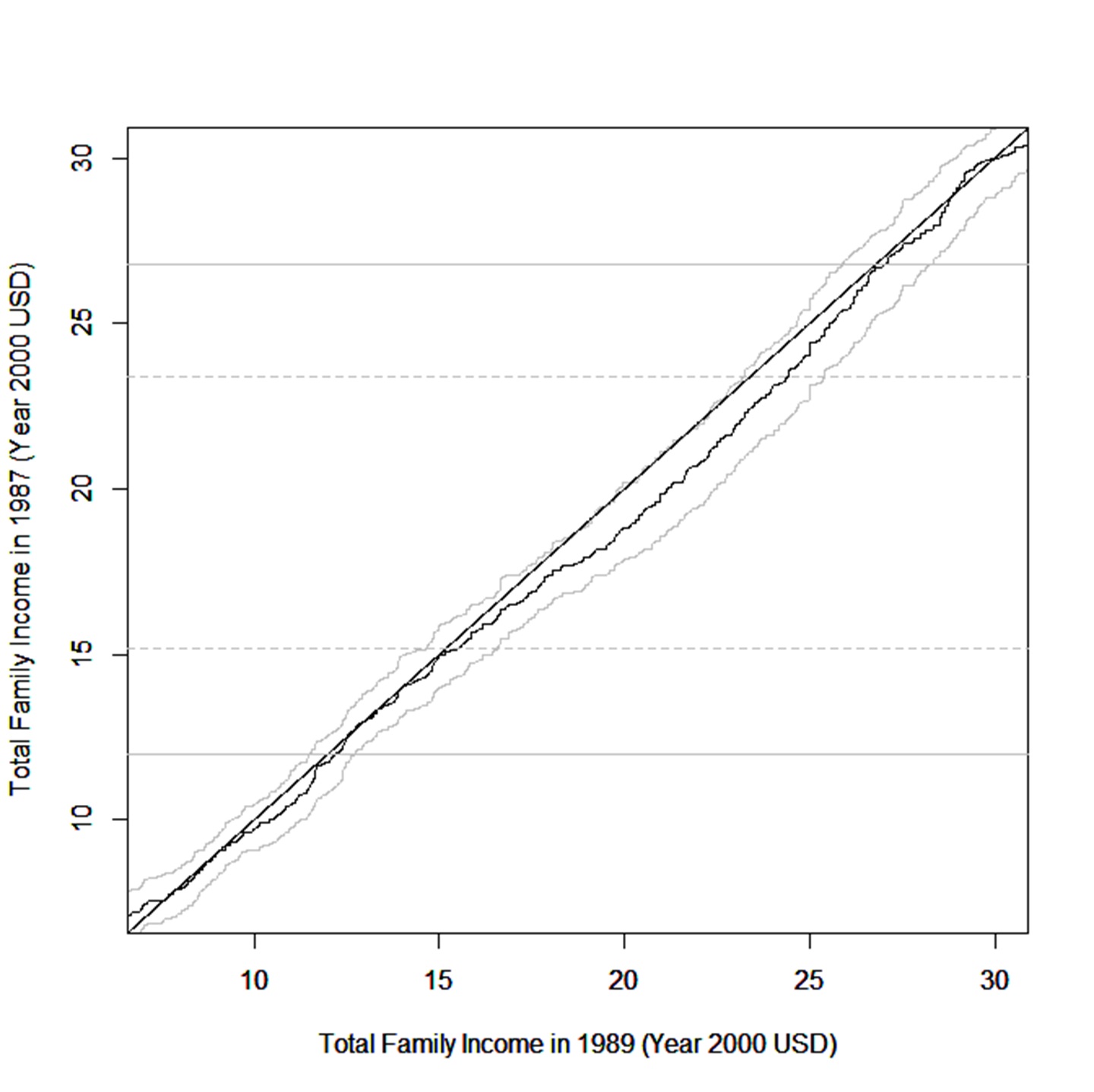} 
\end{minipage}	
\begin{minipage}{0.96\textwidth}
{\footnotesize ~\\[2mm]
Notes: CEX data restricted to families with two or more children. Family income is given in thousands of 2000 US dollars. In the left panel, the black (resp. grey) curve corresponds to family income in 1987 (resp. 1989). In the right panel, we display the Q-Q plot, i.e. $q_1$ against the identity function. The solid lines indicate the theoretical limits inside which we should observe a divergence of the cdfs, given the policy design. The dotted lines are the limits of the interval on which the efffect of the policy is supposed to be maximal (see the right panel of Figure \ref{fig:eitc_plus_income}).}
\end{minipage}
\caption{Cdf's and Q-Q plot of total family income in 1987 and 1989 with
bootstrap 5--95 percentiles.}
\label{fig:empirical_cdfs_with_ci}
\end{figure}

\begin{figure}[]
\begin{center}
\includegraphics[scale=0.22]{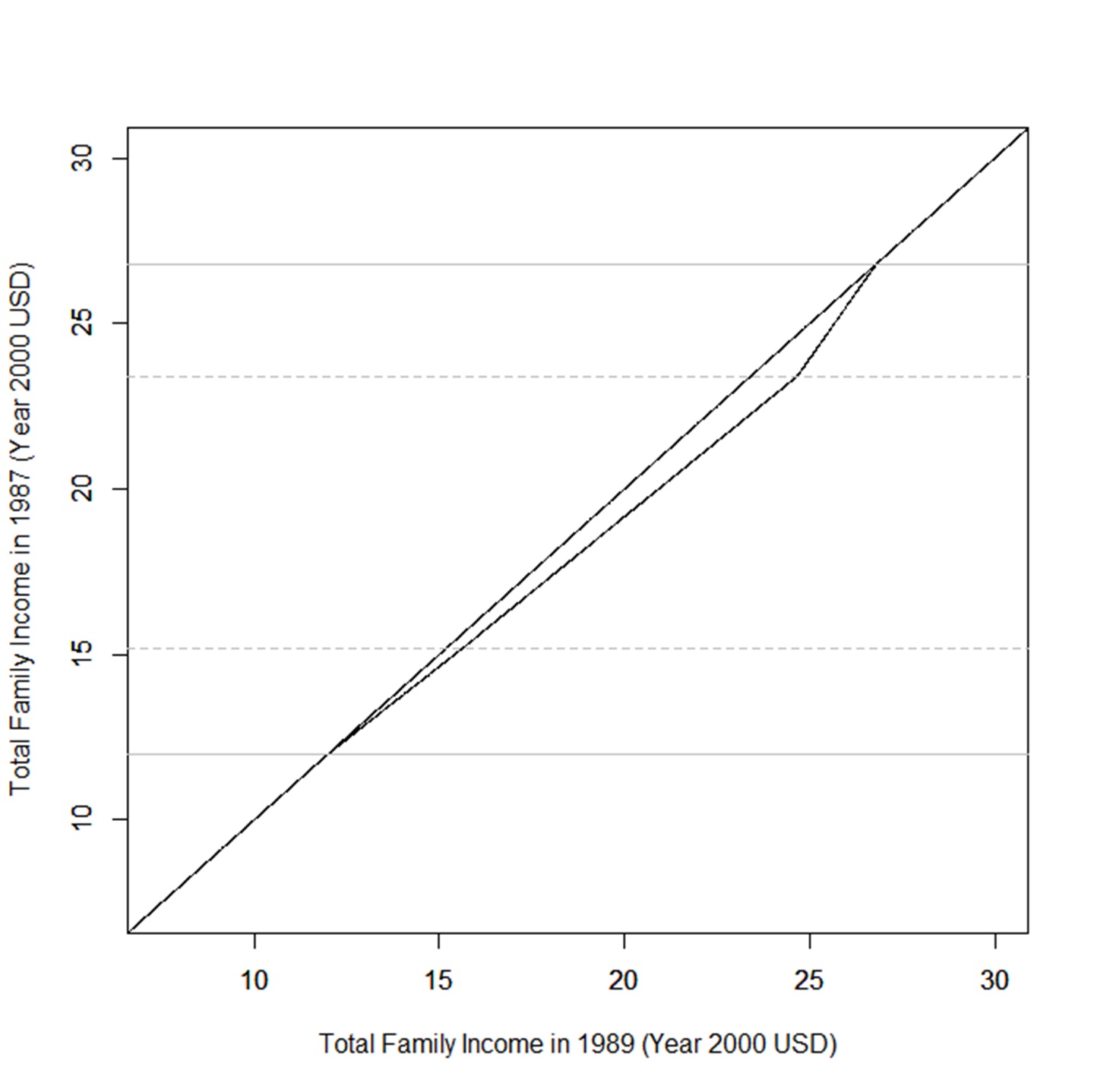}
\includegraphics[scale=0.22]{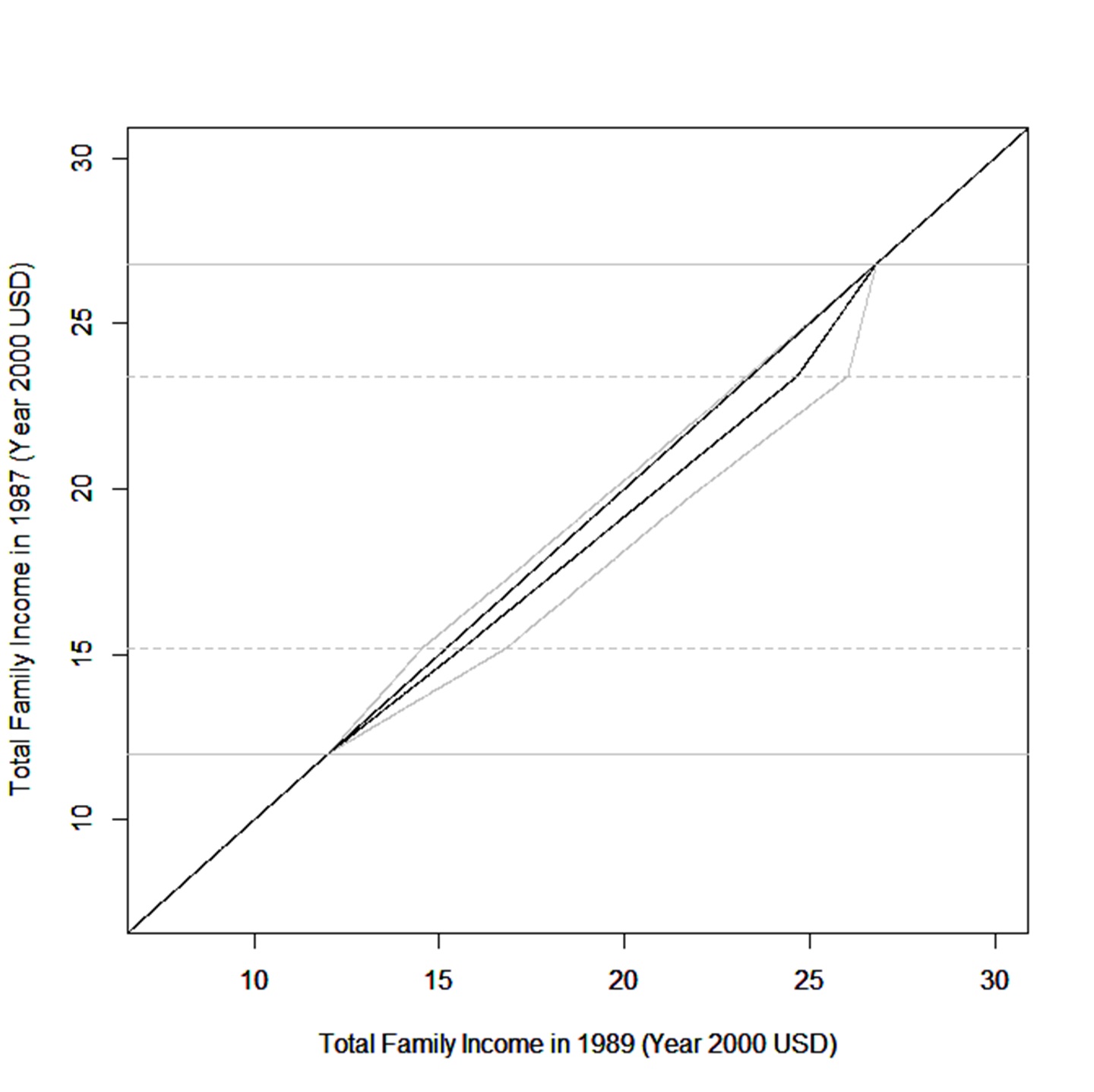} ~\vspace{-1cm}
\end{center}
\caption{Piecewise linear estimator of $q_1$. The right panel shows
bootstrap 5--95 percentiles.}
\label{fig:piecewise_q1}
\end{figure}

\begin{figure}[]
\begin{center}
\includegraphics[scale=0.22]{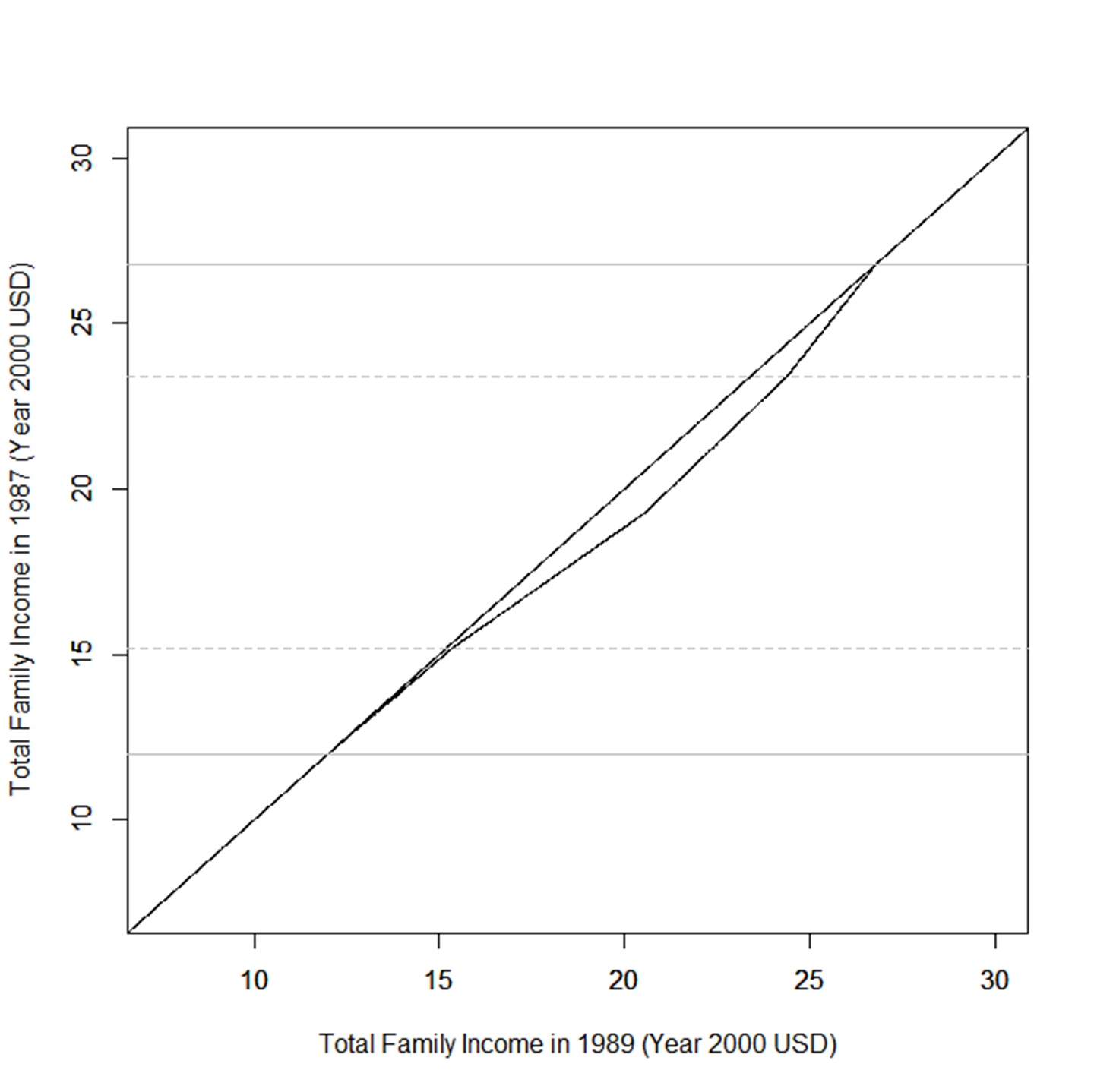} %
\includegraphics[scale=0.22]{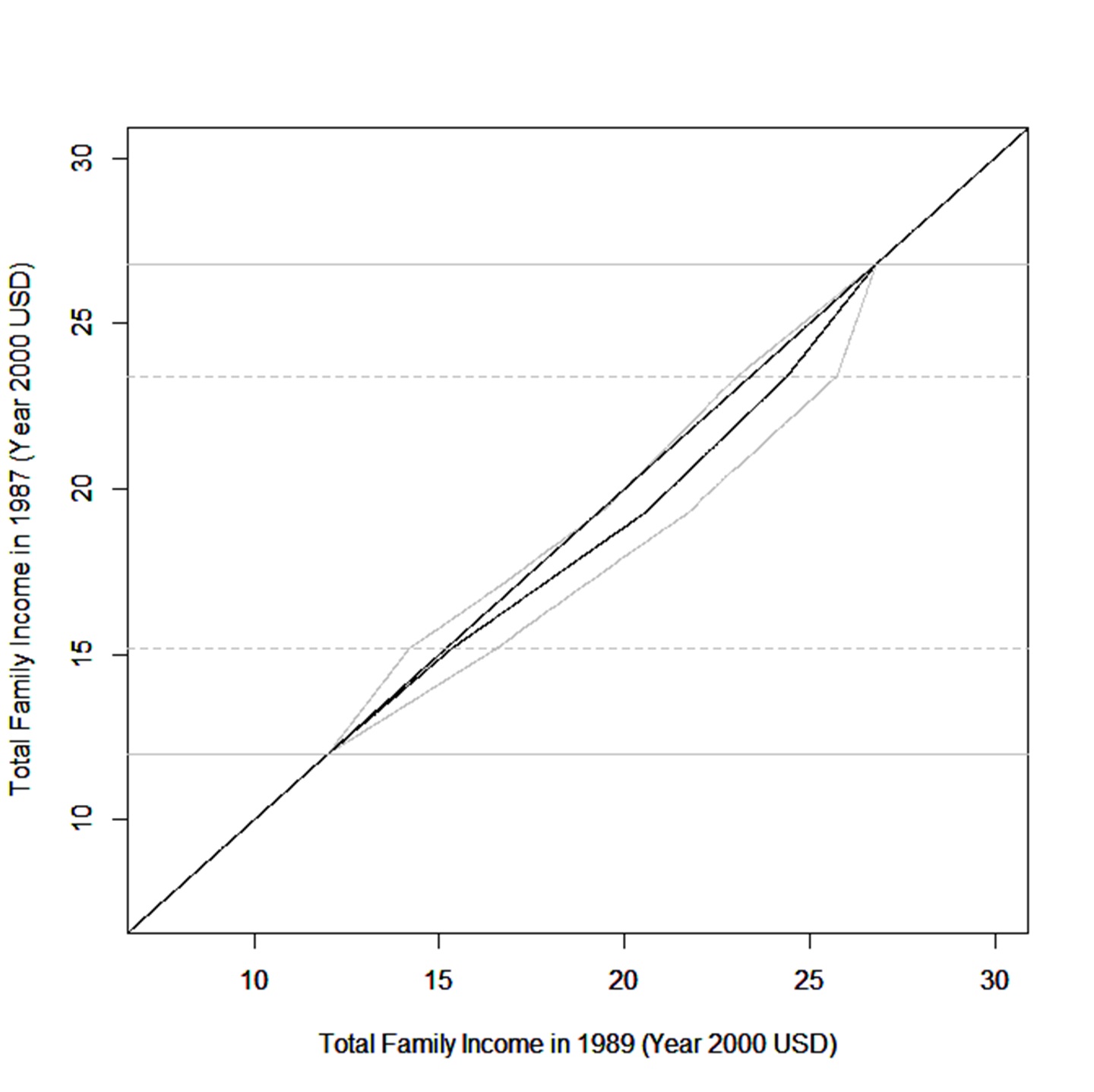} ~\vspace{-1cm}
\end{center}
\caption{Piecewise linear estimator of $q_1$ with an additional knot. The
right panel shows bootstrap 5--95 percentiles.}
\label{fig:piecewise_q1_additional_knot}
\end{figure}

\end{document}